\newcommand{\grad}{\mathrm{d}}
\newcommand{\CC}{\mathbb{C}}
\newcommand{\RR}{\mathbb{R}}
\newcommand{\ZZ}{\mathbb{Z}}
\newcommand{\SO}{\mathrm{SO}}
\newcommand{\so}{\mathfrak{so}}
\newcommand{\SU}{\mathrm{SU}}
\newcommand{\Spin}{\mathrm{Spin}}
\newcommand{\spin}{\mathfrak{spin}}
\newcommand{\dvol}{\mathrm{dvol}} 
\newcommand{\conjugate}[1]{ \overset{#1}{\sim}}
\newcommand{\Pz}{P_z}
\newcommand{\Pbz}{P_{\bar{z}}}
\newcommand{\vspan}[1]{\left\langle #1 \right\rangle}
\renewcommand{\Re}{\operatorname{Re}}
\renewcommand{\Im}{\operatorname{Im}}
\newcommand{\flatc}[1]{\underline{#1}}
\newcommand{\ef}{e^{(1)}}
\newcommand{\es}{e^{(2)}}
\newcommand{\qf}{\hat{e}^{(1)}}
\newcommand{\qs}{\hat{e}^{(2)}}
\newcommand{\cb}[2]{#1_{2 #2 -1}+i\, #1_{2 #2}}
\newcommand{\cbm}[2]{#1_{2 #2 -1}-i\, #1_{2 #2}}
\newcommand{\uw}[2]{ #1_{2 #2-1} \wedge #1_{2 #2}}
\newcommand{\efa}{e^{(1a)}}
\newcommand{\efb}{e^{(1b)}}
\newcommand{\qfa}{\hat{e}^{(1a)}}
\newcommand{\qfb}{\hat{e}^{(1b)}}
\newcommand{\er}{{e}^{(r)}}
\newcommand{\ec}{{e}^{(c)}}
\newcommand{\qr}{\hat{e}^{(r)}}
\newcommand{\qc}{\hat{e}^{(c)}}
\newtheorem*{result*}{Main Result}
\newtheorem{theorem}{Theorem}
\newtheorem{example}{Example}
\newtheorem{lemma}{Lemma}
\title{All timelike supersymmetric solutions of three-dimensional half-maximal supergravity} 
\author{Nihat Sadik Deger\footnote{sadik.deger@boun.edu.tr}}
\author{George Moutsopoulos\footnote{gmoutso@gmail.com}}
\affil{Department of Mathematics, Bogazici University, Bebek, 34342, Istanbul, Turkey}
\author{Henning Samtleben\footnote{henning.samtleben@ens-lyon.fr}}
\affil{Universit\'e de Lyon, Laboratoire de Physique, UMR 5672, CNRS et ENS de Lyon,
46 all\'ee d'Italie, F-69364 Lyon CEDEX 07, France} 
\author{\"{O}zg\"{u}r Sar{\i}o\u{g}lu\footnote{sarioglu@metu.edu.tr}}
\affil{Department of Physics,
Middle East Technical University, 06800, Ankara, Turkey} 
\date{\today}
\begin{document}
\maketitle
\begin{abstract}
We first classify all supersymmetric solutions of the 3-dimensional half-maximal 
ungauged supergravity that possess a timelike Killing vector by considering their 
identification  
under the complexification of the local symmetry of the theory. It is found that only solutions
that preserve $16/2^n, 1 \leq n \leq 3$ real supersymmetries are allowed.
We then classify supersymmetric solutions under 
the real local symmetry of the theory and we are able to solve 
the equations of motion for all of them. It is shown that all such solutions can be expressed as a direct 
sum of solutions of the integrable Liouville and $\SU(3)$ Toda systems. This completes the construction of all supersymmetric solutions of the model since the null case has already been solved.
\end{abstract}
\tableofcontents
\section{Introduction}
Supersymmetric solutions are pivotal in the study of supergravity theories since 
they possess stability properties that survive quantum deformations. 
Assuming supersymmetry renders the solution space more tractable too. 
This is because studying the first-order Killing spinor equations is easier than the second-order equations of motion.

There are various related methods of attacking the problem of finding supersymmetric solutions. In the approach that is based on spinorial geometry
one considers the reduction of the local symmetry of the theory, including the spacetime spin 
group, to the stability subgroup of Killing spinors. This method has been widely 
successful, especially so for maximally supersymmetric theories where the reduction of the spin bundle
 is straightforward 
(see for instance \cite{gillard_spinorial_2005}), but also because the method can be applied to the reduction of the generalized (hidden) 
structure group of the theory (see for instance \cite{grana_generalized_2005}). An equivalent approach is to study the various tensors formed by the Killing spinor 
bilinears as initiated by Tod in~\cite{tod_more_1995,tod_all_1983}, a method successful in various dimensions and theories (see for instance \cite{gauntlett_all_2003}). 

The latter approach was applied to study supersymmetric solutions of three-dimensional 
half-maximal supergravity in 
\cite{deger_supersymmetric_2010}. It follows from the algebra of supersymmetry variations 
that in any supergravity theory the vector formed by squaring a Killing spinor is at least Killing
which is either null or timelike. For this model the null case has been completely solved in \cite{deger_supersymmetric_2010}
and the most general solution is found to be a pp-wave. However, for the timelike case only few explicit solutions were obtained 
in \cite{deger_supersymmetric_2010}. In this paper our aim is to classify and solve for 
all supersymmetric timelike solutions of this model for which the metric is 
\begin{equation}
\grad s^2 =  \grad t^2 - e^{2\rho(x,y)}\left( \grad x^2 + \grad y^2 \right)~.
\nonumber
\end{equation}
The scalar content of the theory parametrizes the coset 
\begin{equation}
\mathcal{V} \in {G}/{K}
~,
\nonumber
\end{equation}
where we define the Lie group $G$ 
\begin{align}
\nonumber
G &= \mathrm{SO}(8,n)~,\\\intertext{its maximally compact subgroup}
K &= 
 \SO(8)\times\SO(n)\times\ZZ_2
\nonumber
 \end{align}
and their Lie algebras as $\mathfrak{g}=\so(8,n)$ and  $\mathfrak{k}=\so(8)\oplus\so(n)$, respectively. 
The coset representative is time independent, so the pull-back of the Maurer-Cartan form 
\begin{equation}
P+Q=\mathcal{V}^{-1}\grad \mathcal{V}
\nonumber
\end{equation}
only depends on the adapted coordinates $x$ and $y$. Here $P$ is the scalar current and $Q$ is the $\SO(8)\times\SO(n)$ connection.

Recently a novel classification of supersymmetric backgrounds of the three-dimensional,
maximally supersymmetric, ungauged supergravity was given in \cite{de_boer_classifying_2014}. 
The motivation there was primarily the construction of interesting supersymmetric solutions with what is termed {non-geometric monodromy}. 
Rather than fixing a Killing spinor and thus reducing the symmetry of the theory, the authors instead fixed the element $P$ under the action of some group. 
In a sense, the problem is turned on its head by asking which elements $P$ admit one Killing spinor, two Killing spinors, etc. 
The general problem of fixing $P$ this way is feasible. Moreover, the assumption of at least one supersymmetry implies that $P$ has to 
be nilpotent in some Lie algebra. More precisely, by using the Zariski topology argument, the element
\begin{equation}
\Pz = \frac12 \left( P_x - i\, P_y \right) \, \in \left( \mathfrak{g} \middle/ \mathfrak{k} \right)^{\CC}~,
\nonumber
\end{equation}
which transforms under the local group of the theory $K$, is shown to be necessarily nilpotent as an element in the 
complexified version $\mathfrak{g}^{\CC}$ of $\mathfrak{g}$, where $\mathfrak{g}$ is the Lie algebra of the global symmetry $G$. 
Note that the complexified version $\mathfrak{k}^{\CC}$ of the local algebra $\mathfrak{k}$ acts on $\Pz$ and preserves nilpotency in $\mathfrak{g}^{\CC}$. 
What is then left is to classify nilpotent orbits of $\left(\mathfrak{g}/\mathfrak{k}\right)^{\CC}$ under $K^{\CC}$, to which $\Pz$ should belong. This is particularly 
attractive as nilpotent orbits are finite and can be classified for all classical groups. For the classification one then uses the Kostant-Segikuchi 
correspondence that asserts a one-to-one correspondence of nilpotent orbits in $\left(\mathfrak{g}/\mathfrak{k}\right)^{\CC}$ under $K^{\CC}$ to 
nilpotent orbits in $\mathfrak{g}$ under $G$ \cite{de_boer_classifying_2014}. Although the method in \cite{de_boer_classifying_2014} is applied to 
maximally supersymmetric ungauged supergravity in three dimensions, where the global symmetry $G$ is $E_8$ and the local symmetry $K$ is the maximally compact 
subgroup $\SO(16)$, their topology argument applies identically to the half-maximal ungauged supergravity as well.

Note that an element $\Pz$ of a background that admits timelike supersymmetry is necessarily nilpotent in $\mathfrak{g}^{\CC}$ but 
the converse is not true. Therefore, after we obtain the nilpotent orbits in  $\left(\mathfrak{g}/\mathfrak{k}\right)^{\CC}$ under $K^{\CC}$  we need to 
check for supersymmetry. This can be done by testing the element $\Pz$ on the algebraic dilatino variation. We will show that this is sufficient as the integrability of the gravitino variation is indeed satisfied on-shell. 
The classification of nilpotent orbits under $K^{\CC}$ that admits supersymmetry is pretty concise to summarise. The supersymmetric 
orbits under $K^{\CC}$ to which such a $\Pz$ belongs correspond to the partitioning of $(8,n)$ into sums of $(2,2)$, $(2,1)$, $(1,0)$ and $(0,1)$. This decomposition 
can be thought of\footnote{For a concrete comparison, recall that a two-form in $\mathfrak{so}(n)$ under conjugation decomposes into two-forms in $\RR^2$ and $\RR$ subspaces.} as the decomposition of $\RR^{8,n}$ into orthogonal subspaces $\RR^{2,2}$, $\RR^{2,1}$, $\RR^{1,0}$ 
and $\RR^{0,1}$. The multiplicity $\mu$ of $(2,2)$ and multiplicity $\nu$ of $(2,1)$, and only these, determine the supersymmetry by the simple rule that each of them halve 
supersymmetry by a projection equation. Each class of elements, up to the action of $K^{\CC}$, corresponds to a unique partition. We call the class $N(\mu,\nu)$. That is, 
the classes are defined by
\begin{equation}
N(\mu,\nu) = \{ \Pz' : \Pz' \conjugate{K^{\CC}} \Pz \}~.
\nonumber
\end{equation}
A representative element for the class $N(\mu,\nu)$ is called a normal form. 
They are useful as 
they allow us to work with a concrete element and are pretty easy to write down. 
However, note that the group used to identify the elements $\Pz$ is the complexification $K^{\CC}$ of the symmetry of 
the theory $K$. Therefore, the orbits under $K^{\CC}$ may contain more than one, or even no solutions. For instance, a normal form 
under $K^{\CC}$ may not satisfy the equations of motion but some other representative that is $K^{\CC}$-conjugate to it might do. That is, 
it does not make sense to use the normal form in order to start solving the equations of motion because the equations of motion are not covariant under $K^{\CC}$. 
Therefore, we have to move on to classify the elements $\Pz$ under the real local symmetry of the theory $K$ in order to obtain exact solutions.
This means that for each class $N(\mu,\nu)$ and each element $\Pz'\in N(\mu,\nu)$, we need to find all the elements $\Pz$ that are distinct to $\Pz'$ under 
the action of $K$ but are identical to $\Pz'$ under $K^{\CC}$. We may call this space $N(\mu,\nu)/K$. 
The most general element $\Pz \in N(\mu,\nu)/K$ is still easy to write and are given in \eqref{eq:PzABMN}. The equations of motion and in 
particular the integrability equations 
for $P+Q=\mathcal{V}^{-1}\grad \mathcal{V}$ severely restrict the coefficients in $\Pz$. Consequently, the 
classification of the on-shell nilpotent elements that are in $N(\mu,\nu)$ should be refined into 
spaces $N(\mu,\nu_r, \nu_{c})$, where $\nu=\nu_r+\nu_{c}$. If $\Pz\in N(\mu,\nu)$ and is indeed part of a solution, then $\Pz \in N(\mu,\nu_r,\nu_{c})~.$
After this classification we analyze the field equations and integrability conditions and arrive at the following result: 
\begin{result*}
The timelike supersymmetric backgrounds of the three-di\-men\-sional, half-maximal, ungauged supergravity are locally parametrized by $\mu+\nu_r+2\nu_{c}$ meromorphic functions which are
solutions to $\mu+\nu_r$ copies of Liouville's equation and $\nu_{c}$ copies of an $\SU(3)$ Toda system.  
The $\mu$ and $\nu_r$ copies of Liouville's equation are distinguished by their contribution to the coset space connection $P+Q$ and to the spacetime curvature. 
Each $\nu$, $\nu_r$ and $\nu_{c}$ copy is responsible for halving supersymmetry once.
\end{result*}
We begin in section \ref{sec:theory} with an introduction to the theory and set up our conventions for the timelike backgrounds. In 
section \ref{sec:complex} we present the nilpotency classification. In section \ref{sec:real}, we do not yet use the equations of motion but we present the elements $\Pz$ in the classes up to 
the real symmetry. The restriction of $\Pz$ due to the equations of motion and the solutions themselves are in section \ref{sec:solutions}. We 
conclude in section 6 with some brief remarks. Most of the technical material is to be found in the appendices. In appendix \ref{app:spin} we review our 
spinorial conventions. We also give in appendix \ref{app:spin} some useful formulae for comparison with other methods in the literature.  In appendix 
\ref{app:glnc} we comment on a more direct matrix factorization of $\Pz$. Supersymmetry 
closure in the Zariski topology and construction of normal forms are explained in detail 
in appendices \ref{app:zariski} and \ref{app:constnormalforms}, respectively.

\section{Set up}\label{sec:theory}
\subsection{Theory}
Half-maximal ungauged supergravity in three dimensions is described in
the bosonic sector by a metric $g$ on a three-dimensional spin
manifold $M$ and the coset map
\begin{equation}
  \mathcal{V}:M\longrightarrow G/K ~,
\end{equation}
where the groups $G$ and $K$ are
\begin{align}
G &= \mathrm{SO}(8,n) \, , \\
K &= \mathrm{S}\left( \mathrm{O}(8)\times \mathrm{O}(n) \right)=\SO(8)\times \SO(n)\times \ZZ_2 \, ,
\end{align}
and their Lie algebras are $\mathfrak{g}$ and $\mathfrak{k}=\so(8)\oplus\so(n)$. We pull-back and split the Maurer-Cartan form on the symmetric decomposition
$\mathfrak{g}=\mathfrak{k}\oplus\mathfrak{p}$,
\begin{equation}
\mathcal{V}^{-1} \grad \mathcal{V} = Q + P \in \left( \mathfrak{k}\otimes T^{*}M \right) \oplus \left(
\mathfrak{p}\otimes T^{*}M\right)~,
\end{equation}
where $\mathfrak{p}=\mathfrak{g}/\mathfrak{k} = \RR^8\otimes \RR^n$. 
The action of the model is
\begin{equation}
S =\int \dvol_g\left( - R +  g^{\mu\nu} P_{\mu}^{Ir}P_{\nu}^{Ir}\right)~,
\end{equation}
where $\mu,\nu=0,1,2$ are spacetime indices, $I,A,\dot{A}=1,2,\ldots,
8$ are respectively the vector, chiral and anti-chiral indices for $\Spin(8)$, and $r, s=1, \ldots, n$ are $\SO(n)$ vector indices. 
Note that we use a mostly minus signature. The full theory was constructed already in \cite{marcus_three-dimensional_1983}. The gaugings of the theory 
were classified in \cite{nicolai_n8_2001}. For other gauged three-dimensional supergravities with various amounts of supersymmetry 
see~\cite{de_wit_gauged_2003,de_wit_locally_1993}.

From the action we derive the equations of motion
\begin{align}
 R_{\mu\nu} &=  P_{\mu}^{Ir}P_{\nu}^{Ir} \, ,\\
\mathcal{D}_{\mu}P^{\mu Ir} &\equiv \nabla_{\mu}P^{\mu I r} + Q_{\mu}{}^{IJ}P^{\mu J r}+ Q_{\mu}{}^{rs}P^{\mu I s} = 0~. \label{eq:divPmu}
\end{align}
The integrability of $P+Q=\mathcal{V}^{-1} \grad \mathcal{V}$ is $\grad P+\grad Q +(P+Q)\wedge (P+Q)=0$, or explicitly
\begin{align}
&\, \grad P^{Ir} + Q^{IJ} \wedge P^{Jr} + Q^{rs}\wedge P^{Is} =0 \label{eq:gradPmu} \, ,\\
R^{(Q)}{}^{IJ} \equiv & \,\grad Q^{IJ} + Q^{IK} \wedge Q^{KJ} + Q^{JK}\wedge Q^{IK} = - P^{Ir}\wedge P^{Jr} \, ,\\
R^{(Q)}{}^{rs} \equiv & \,\grad Q^{rs} + Q^{rt} \wedge Q^{ts} + Q^{st}\wedge Q^{rt} = - P^{Ir} \wedge P^{Is}~.
\end{align}

The full theory has $16$ real supersymmetries, which are locally given by $\epsilon^A_{\alpha}$ but we usually suppress the spacetime spinor index $\alpha=1,2$. With the 
gravitino $\psi_{\mu}$ and dilatino $\chi$ put to zero, a Killing spinor should satisfy
\begin{align}
 \delta\psi_{\mu} &= \mathcal{D}_{\mu}\epsilon^A = \nabla_{\mu}\epsilon^A -\frac14 Q_{\mu}^{IJ}\Gamma^{IJ}_{AB}\epsilon^B = 0 \label{eq:diffsusy}~,\\
 \delta\chi & = \gamma^{\mu}P_{\mu}^{Ir}\Gamma^I_{A\dot{A}}\epsilon^{A}
  =0~.\label{eq:dilorig}
\end{align}
We will use $\{\gamma^{a},\gamma^{b}\}=-2\eta^{ab}$, where $\eta^{ab}$ has mostly minus signature, and $\{\Gamma^{I},\Gamma^{J}\}=-2\delta^{IJ}$, so that all 
representations are real. We refer to appendix \ref{app:spin} for more details on our spinorial conventions. 

\subsection{Timelike backgrounds}
Let us define the vector 
\begin{equation}
V^{\mu}=\bar{\epsilon}^A\gamma^{\mu} \epsilon^A \, .
\end{equation}
Since the derivative $\mathcal{D}$ in the gravitino variation \eqref{eq:diffsusy} is in $\spin(1,2)\oplus \spin(8)$, 
the vector $V^{\mu}$ is easily shown to be parallel, i.e. $\nabla_{\mu} V_{\nu} = 0$. We may define the Killing spinor bilinear
\begin{equation}
F^{AB} = \bar{\epsilon}^A \epsilon^B = -F^{BA}~, \label{eq:FAB}
\end{equation}
in order to derive via the Fierz identity 
\begin{equation}
\epsilon^A \bar{\epsilon}^B = -\frac12 \bar{\epsilon}^B \gamma^{\mu} \epsilon^A \gamma_{\mu} +\frac12  \bar{\epsilon}^B \epsilon^A \, , 
\end{equation}
which shows that $V$ is either null or timelike:
\begin{equation}
V^{\mu}V_{\mu} = F^{AB}F^{AB} \geq 0~.
\end{equation}
The null case was completely solved and few explicit solutions for the timelike case were obtained in \cite{deger_supersymmetric_2010}.
In this paper we only consider the timelike case and so $V^{\mu}$ is a timelike covariantly constant vector. It follows that 
we can find adapted coordinates $(t,x,y)$ so that $V=\partial_t$ and the metric is 
\begin{equation}
\grad s^2 =  \grad t^2 - e^{2\rho(x,y)}\left( \grad x^2 + \grad y^2 \right)~.
\label{eq:ultrastatic}
\end{equation}
It is shown in \cite{deger_supersymmetric_2010} that $\partial_t$ also leaves the coset representative invariant up to a local $K$ transformation, so 
in particular we may choose a gauge where $Q_t=P_t = 0$. 

The Einstein equations of motion for the metric \eqref{eq:ultrastatic} are only non-trivial in the $(x,y)$ components,
\begin{equation}
 g_{ij} \, e^{-2\rho}\partial_k \partial_k \rho = P^{Ir}_i P^{Ir}_j~,\quad i,j=1,2.
\end{equation}
It thus follows that
\begin{align}
P_x^{Ir} P_y^{Ir} &=0 \, , \\
P_x^{Ir} P_x^{Ir} &=  P_y^{Ir} P_y^{Ir} = - \partial_i \partial_i \rho~.
\end{align}
If we then define $z=x+iy$ and  
\begin{equation}
\Pz^{Ir} \equiv \frac12 \left( P_x^{Ir} - i P_y^{Ir} \right)~,
\label{eq:defPz}
\end{equation}
the non-trivial components of the Einstein's equation are:
\begin{align}
\Pz^{Ir} \Pz^{Ir} &= 0 \label{eq:EinND} \, ,\\
\Pbz^{Ir}\Pz^{Ir} &= -2 \partial_z\partial_{\bar{z}}\rho~. \label{eq:PPz}
\end{align}
Equation \eqref{eq:PPz} is the only equation involving the conformal factor in our formalism.

We now turn to the equation of motion and integrability equation for $\Pz^{Ir}$, \eqref{eq:divPmu} and \eqref{eq:gradPmu}. They respectively become
\begin{align}
\Re\left( \partial_z \Pbz^{Ir} + Q_z^{IJ} \Pbz^{Jr} + Q_z^{rs} \Pbz^{Is} \right) & = 0 \label{eq:divP} \, , \\
\Im\left( \partial_z \Pbz^{Ir} + Q_z^{IJ} \Pbz^{Jr} + Q_z^{rs} \Pbz^{Is} \right) & = 0 \label{eq:dP}
~.\end{align}
Combining them, the equation of motion for $\Pz^{Ir}$ is
\begin{equation}
\mathcal{D}_{\bar{z}}\Pz^{Ir} \equiv \partial_{\bar{z}} P_z^{Ir} +  Q_{\bar{z}}^{IJ} \Pz^{Jr} + Q_{\bar{z}}^{rs} \Pz^{Is} = 0~. \label{eq:DPz}
\end{equation}
Finally, the two integrability equations for $Q^{IJ}$ and $Q^{rs}$ are written as
\begin{align}
\Im \left( \mathcal{D}_{\bar{z}} Q_z^{IJ} \right) 
&= 
\Im \left( \partial_{\bar{z}} Q_z^{IJ} + Q_{\bar{z}}^{IK} Q_z^{KJ} + Q_{\bar{z}}^{JK} Q_z^{IK} \right)= - \Im\left( \Pbz^{Ir} \Pz^{Jr} \right) ~, \label{eq:DQIz}\\
\Im \left( \mathcal{D}_{\bar{z}} Q_z^{rs} \right) 
&= 
\Im \left( \partial_{\bar{z}} Q_z^{rs} + Q_{\bar{z}}^{rt} Q_z^{ts} + Q_{\bar{z}}^{st} Q_z^{rt} \right)= - \Im\left( \Pbz^{Ir} \Pz^{Is} \right) ~. \label{eq:DQrz}
\end{align}
The full set of equations of motion, including the coset integrability equations, are \eqref{eq:EinND}, \eqref{eq:PPz}, \eqref{eq:DPz}, 
\eqref{eq:DQIz} and \eqref{eq:DQrz}. Only \eqref{eq:PPz} involves the conformal factor $e^{2\rho}$ and we can solve the latter three independent of the first two.
Now we will analyze them assuming that the solution preserves some supersymmetry.

\subsection{Timelike Killing spinors}
Let us define the complex $\Spin(8)$ spinor
\begin{align}
\epsilon_z^A \equiv \epsilon_1^A + i \epsilon_2^A~,
\end{align}
which under a rotation in the $(x,y)$ plane has weight $-1/2$, see also 
appendix \ref{app:spin}. The dilatino Killing spinor equation \eqref{eq:dilorig} becomes
\begin{equation}
  P^{Ir}_z\Gamma_{A\dot{A}}^I \epsilon^A_{\bar{z}}=0~, \label{eq:algsusy}
\end{equation}
where $P^{Ir}_z$ was defined in \eqref{eq:defPz}. We will first show that the gravitino Killing spinor equation \eqref{eq:diffsusy} is integrable
provided that the equations of motion  and the dilatino variation \eqref{eq:algsusy} hold. Note that the $t$-component of the 
equation \eqref{eq:diffsusy} is simply $\partial_t\epsilon^A_z=0$, whence Killing spinors are time-independent. 
The curvature of the supersymmetric connection \eqref{eq:diffsusy} should stabilize a Killing spinor,
\begin{equation}
\left( -\frac14 R_{\mu\nu ab} \gamma^{ab} \delta_{AB} - \frac14 R^{(Q)}_{\mu\nu}{}^{IJ}\Gamma^{IJ}_{AB} \right) \epsilon^B = 0~, \label{eq:integKS}
\end{equation}
a condition with non-vanishing components only for $\mu,\nu=i,j$. In particular, the only non-trivial 
Riemann curvature tensor component is $R_{{1}{2}{1}{2}}= e^{2\rho}\partial_i\partial_i\rho$. The integrability equation for Killing spinors \eqref{eq:integKS} is directly 
equivalent to
\begin{equation}
- 2 i \partial_z \partial_{\bar{z}} \rho \,\epsilon^A_z + \Im \left( \mathcal{D}_{\bar{z}} Q_{{z}}^{IJ} \right)\Gamma^{IJ}_{AB} \epsilon^B_z = 0~.
\end{equation}
However, combining the Einstein equation \eqref{eq:PPz} and the coset integrability equation \eqref{eq:DQIz}, we may show that the curvature of the supersymmetry 
connection (the operator acting on $\epsilon^A_z$ in \eqref{eq:integKS}) is identically zero:
\begin{equation}
\Gamma^I_{A\dot{A}}\Gamma^J_{B\dot{A}}\left( -2 i\,  \partial_z \partial_{\bar{z}} \rho \,\delta^{IJ} + \Im \left( \mathcal{D}_{\bar{z}} Q_{{z}}^{IJ} \right) \right) = 0~.
\end{equation}
The algebraic equation \eqref{eq:algsusy} is therefore a necessary and sufficient condition for the existence of Killing spinors.

One may also show that the $zz$-component of the Einstein equation \eqref{eq:PPz}
is redundant. Indeed, multiplying \eqref{eq:algsusy} with  $
\Pz^{Js}\Gamma^J_{B\dot{A}}=0$ and symmetrizing over $(r,s)$ one arrives at
\begin{equation}
\Pz^{Ir}\Pz^{Ir} \epsilon_z^A=0~,
\end{equation} 
which for a non-zero spinor gives precisely  $\Pz^{Ir}\Pz^{Ir}=0$. Timelike supersymmetric solutions are thus entirely described by the coset equations 
\eqref{eq:DQIz}, \eqref{eq:DQrz} 
and \eqref{eq:DPz} that determine $P$ and $Q$, the Einstein equation \eqref{eq:PPz} 
that determines $\rho$, and finally the condition that $\Pz$ admits Killing spinors via the algebraic equation \eqref{eq:algsusy}.
Therefore, when the equations of motion are satisfied, Killing spinors are characterized only by \eqref{eq:algsusy}. Note that if $\epsilon_z^A$ is a Killing spinor, 
then so is $i\,\epsilon_z^A$. We may thus assert the following:
\begin{theorem}
\label{thm:evensusy}
Supersymmetric solutions with a timelike Killing vector admit an even amount of real supersymmetry and form a complex vector space.
\end{theorem}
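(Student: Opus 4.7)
The plan is to exploit the on-shell characterization of Killing spinors established just above the theorem: once the bosonic field equations hold, the full gravitino equation \eqref{eq:diffsusy} is integrable, its $t$-component merely imposes time-independence $\partial_t \epsilon^A_z = 0$, and the only remaining substantive constraint on $\epsilon^A_z$ (or equivalently $\epsilon^A_{\bar{z}}$) is the algebraic dilatino equation \eqref{eq:algsusy}. Both of these surviving conditions are manifestly $\mathbb{C}$-linear in the complex $\Spin(8)$ spinor $\epsilon^A_{\bar{z}}$. Hence the space $V$ of Killing spinors, regarded as solutions for $\epsilon^A_z$, is a $\mathbb{C}$-linear subspace of the relevant complexified spinor space. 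In particular, if $\epsilon^A_z$ is a Killing spinor then so is $i\,\epsilon^A_z$, which already gives the second assertion of the theorem.

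For the parity statement I transport this complex structure back to the underlying real data. Writing $\epsilon^A_z = \epsilon_1^A + i\,\epsilon_2^A$, multiplication by $i$ in the complex picture corresponds on the real pair to the endomorphism
\begin{equation*}
J\colon (\epsilon_1^A, \epsilon_2^A) \longmapsto (-\epsilon_2^A,\, \epsilon_1^A)
\end{equation*}
of the real vector space $V_{\mathbb{R}}$ of Killing spinor pairs. A one-line computation gives $J^2 = -\mathrm{id}$, so $V_{\mathbb{R}}$ carries a genuine complex structure and is therefore a complex vector space whose real dimension is necessarily even; this is exactly the claim on the parity of the number of preserved real supersymmetries.

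The only point that needs care is the verification that no condition beyond \eqref{eq:algsusy} and $\partial_t \epsilon^A_z = 0$ can obstruct the $\mathbb{C}$-linearity, but this is precisely what the preceding integrability computation for \eqref{eq:integKS} establishes: on-shell the curvature of the supersymmetry connection vanishes identically, so \eqref{eq:algsusy} is both necessary and sufficient. Granting that, Theorem~\ref{thm:evensusy} reduces to the purely linear-algebraic observation above, with no further calculation required.
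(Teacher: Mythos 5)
Your proposal is correct and follows essentially the same route as the paper: on-shell the gravitino equation is integrable, so Killing spinors are characterized by the $\CC$-linear algebraic condition \eqref{eq:algsusy} (plus time-independence), whence $i\,\epsilon^A_z$ is a Killing spinor whenever $\epsilon^A_z$ is, making the solution space complex and its real dimension even. Your explicit transport of the complex structure $J$ to the real spinor pairs just spells out what the paper leaves as the remark that multiplication by $i$ preserves the Killing spinor space.
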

We will see in Theorem \ref{thm:BPS}
that not only is the amount of supersymmetry even, but it comes in powers of two: $16, 8, 4, 2$.

\section{Nilpotency}\label{sec:complex}
Our strategy in this section is to set aside the equations of motion for $P$, $Q$ and $\rho$, and classify instead all elements $\Pz$ that admit 
supersymmetry via equation \eqref{eq:algsusy}. The classification is with respect to $K^{\CC}$, the complexification of the local symmetry of the theory. 
That is, we identify all admissible $\Pz$ up to the action of $K^{\CC}$. The classes are parametrized by integers $\mu$ and $\nu$ and we call each class $N(\mu,\nu)$. 

\subsection{Proof of nilpotency}
We note that the symmetry of the dilatino supersymmetry equation \eqref{eq:algsusy} is $\mathrm{SO}(8)^{\CC}\times \mathrm{GL}(n,\CC)$. Indeed, $\mathrm{SO}(8)^{\CC}$ is 
the group that preserves the gamma
matrices of the 8-dimensional Clifford algebra. For instance, take $m\in
\mathfrak{so}(8)$ and note that since\footnote{$m_{AB}= -\frac14
  m_{IJ}\Gamma^{IJ}_{AB}$ and similarly for $m_{\dot{A}\dot{B}}$.}
\begin{equation}
  m^I{}_J \Gamma^{J}_{A\dot{A}} = \Gamma^I_{B\dot{A}} m^{B}{}_A +
  \Gamma^I_{A\dot{B}} m^{\dot{B}}{}_{\dot{A}}~,
\end{equation}
and all representations are real, we can complexify the Lie algebra element $m$. On the other hand, the index $r$  in \eqref{eq:algsusy} is a free index, 
whence the symmetry $\mathrm{GL}(n,\CC)$. 

Classifying $\Pz$ up to the action of $\mathrm{SO}(8)^{\CC}\times \mathrm{GL}(n,\CC)$ turns out to be too strong. However, it does prove that the algebraic supersymmetry 
equation \eqref{eq:algsusy} is a set of projection equations that halve the real supersymmetries according to $16, 8, 4, 2$, we give the proof in appendix 
\ref{app:glnc}. Instead, we classify the elements $\Pz$ up to the action of
\begin{equation}
  K^{\CC} = \mathrm{SO}(8)^{\CC}\times \mathrm{SO}(n)^{\CC} \times \ZZ_2~.
\end{equation}
Since it is a symmetry of the algebraic supersymmetry equation, we may consider the orbit space of the $(\mathfrak{g}/\mathfrak{k})^{\CC}$ where 
$\Pz$ belongs to, up to the action of $K^{\CC}:(\mathfrak{g}/\mathfrak{k})^{\CC} \rightarrow (\mathfrak{g}/\mathfrak{k})^{\CC}$. That is, since any 
other element in the same orbit admits the same amount of supersymmetry we may consider the orbit as a whole. The group $K^{\CC}$ is not a symmetry of the theory, in 
contrast to the group $K$, but one may hope to move from this classification to orbits under $K$ once the first are obtained, which we do in section \ref{sec:real}. Note
also that relative 
to $\mathrm{SO}(8)^{\CC}\times \mathrm{GL}(n,\CC)$, the orbit space of the action $K^{\CC}$ is more 
fine grained and thus perhaps more useful. In fact, it turns out that the orbit space under $\mathrm{SO}(8)^{\CC}\times \mathrm{GL}(n,\CC)$ is 
labeled by the amount of supersymmetry. Similar to the case of maximal supergravity \cite{de_boer_classifying_2014}, we will now show that 
the element $\Pz \in  \mathfrak{p}^{\CC}=(\mathfrak{g}/\mathfrak{k})^{\CC}$ is nilpotent in the adjoint representation of $\mathfrak{g}^{\CC}$. That is, with 
the symmetric decomposition
\begin{equation}
\mathfrak{g}^{\CC} = \so(8,n)^{\CC} = \mathfrak{k}^{\CC} \oplus  \mathfrak{p}^{\CC}~,
\end{equation}
we will now show that $\left(\textrm{ad}_{\Pz}\right)^{p+1}=0$ for some positive integer $p$. 

Our proof closely follows \cite{de_boer_classifying_2014}. Consider an element $\Pz \in \mathfrak{p}^{\CC} $. The
Jordan-Chevalley decomposition tells us that it can be written as a sum of a
semi-simple element and a nilpotent element
\begin{equation}
\label{eq:Jordan1}
  \Pz = P_S + P_N~,
\end{equation}
with $P_S,P_N \in \mathfrak{p}^{\CC} \subset \so(8,n)^{\CC}$ and
$[P_S,P_N]=0$, see proposition 3 in \cite{kostant_orbits_1971}. Assume that \eqref{eq:Jordan1} admits $\tilde{n}>0$ algebraic Killing supersymmetries according to 
\eqref{eq:algsusy}. Consider
then the orbit $O$ of $\Pz$ under $K^{\CC}$ and assume $P_S \neq 0$. 
The algebraic supersymmetry equation \eqref{eq:algsusy} implies that elements in the closure $\bar{O}$ of the orbit $O$ in the Zariski topology preserve at least $\tilde{n}$
supersymmetries, a result of \cite{de_boer_classifying_2014} that we review\footnote{The reader may consult the definition of the Zariski topology also in the appendix \ref{app:zariski}.} in appendix \ref{app:zariski}. At the same time, it can be shown that in the 
Jordan-Chevalley decomposition, the semi-simple element $P_{S}$ is in the closure of the orbit, $P_S \in \bar{O}$,  see lemma 11 in \cite{kostant_orbits_1971}. 
Furthermore, any semi-simple element $P_S$ in $\mathfrak{p}^{\CC}$ is $K^{\CC}$-conjugate to an element in the Cartan subalgebra 
in $\mathfrak{p}^{\CC}$, by virtue of its semi-simplicity alone. In summary, if $\Pz$ preserves $\tilde{n}$ supersymmetries and $P_S \neq0$, then there is an element in 
the Cartan subalgebra in $\mathfrak{p}^{\CC}$ that preserves at least $\tilde{n}$ supersymmetries. Yet, it is easy to show that an element in the Cartan subalgebra in
$\mathfrak{p}^{\CC}$ does not preserve any supersymmetry and hence $P_S$ has to vanish. In order to show this, assume first an orthonormal basis $e_I$ of $\RR^8$ and an 
orthonormal 
basis $\hat{e}_r$ of $\RR^n$. Then an element in the Cartan subalgebra in $\mathfrak{p}^{\CC}$ has to be diagonal and is expanded in this basis as
\begin{equation}
  P_S^{Ir} e_I \otimes \hat{e}_r = P_S^{11} e_1 \otimes \hat{e}_1 + P_S^{22} e_2 \otimes  \hat{e}_2+ \cdots ~.
\end{equation}
The algebraic supersymmetry equation \eqref{eq:algsusy} for $r=1$ becomes (if the component $P_S^{11}$ is zero, take instead the first non-zero element)
\begin{equation}
  \Gamma^1_{A\dot{A}}\epsilon^A_z = 0~.
\end{equation}
Since the gamma matrix $\Gamma^1$ squares to $-1$, this equation cannot admit a non-zero solution for $\epsilon^A_z$. Hence, if $\Pz$ admits
some supersymmetry then $\Pz=P_N$ and the orbit $O$ of $\Pz$ under $K^{\CC}$ is
nilpotent.

Our task is then to classify the nilpotent orbits in
$\mathfrak{p}^{\CC}$ under $K^{\CC}$, a space we may write as
\begin{equation}
\left.  \text{Nil}[\mathfrak{p}^{\CC}] \middle/ K^{\CC} \right. ~.
\end{equation}
To this aid, we use the Kostant-Segikuchi correspondence, which is a correspondence between nilpotent
elements in $\mathfrak{g}$ up to the action of $G$ and nilpotent elements in $\mathfrak{p}^{\CC}$ up to the action of $K^{\CC}$:
\begin{equation}
 \text{Nil}[ \mathfrak{g} ] / G  = \text{Nil}[ \mathfrak{p}^{\CC}]/ K^{\CC}~.
\end{equation}
For more details, see appendix (\ref{sec:KSC}).

\subsection{Indecomposable types and their normal forms}
\label{sec:indecomposable}

Our goal now is to classify nilpotent elements of $\mathfrak{so}(m,n)$ up to conjugacy by $O(m,n)$. In particular we will construct normal forms, which are representatives in each class. We begin by developing the notions of decomposable and indecomposable types of elements in the Lie algebra. Note that this will not be the same as the notion of a module's decomposition into indecomposable submodules, one should rather think here of a block diagonal form of a matrix. Consider for example a two-form in $\mathfrak{so}(n)$ up to the action of $SO(n)$. We know that one can decompose it in some orthonormal basis into a block diagonal form of antisymmetric $2\times 2$ matrices, each proportional to the same antisymmetric real Pauli matrix, and trailing zeros. In this case, the $2\times 2$ antisymmetric matrices are indecomposable, that is to say they cannot be decomposed into smaller block diagonal forms. We wish to do the equivalent for the nilpotent elements in $\mathfrak{so}(8,n)$. Normal forms for elements in the classical linear groups have been described but not explicitly written in   
\cite{burgoyne_conjugacy_1977} (see also \cite{djokovic_normal_1983}).  

Consider the Lie algebra $L(V,\tau,\sigma)$ of a linear group that acts on a complex vector space $V$, preserves the bilinear $\tau$ and is compatible with the real or 
pseudoreal structure $\sigma$, where the latter is compatible with $\tau$\footnote{For our problem the group is $\mathrm{O}(m,n)$, $V=\mathbb{C}^{m+n}$, $\tau$ is 
symmetric and 
$\sigma$ is a real structure.}. Let $A\in L(V,\tau,\sigma)$ and $A'\in L(V',\tau',\sigma')$. We 
take $(A,V)$ and $(A', V')$ as equivalent if 
there is an isomorphism $\phi$ such that:
\begin{subequations}
\begin{align}
 \phi: V & \rightarrow V' \, ,\\
 \phi A & = A' \phi \, ,\\
 \phi \sigma & = \sigma' \phi \, ,\\
 \tau(\phi (\cdot) , \phi (\cdot) ) &= \tau' (\cdot,\cdot)~.
\end{align}
\end{subequations}
The equivalence class defines a so-called type $\Delta$, that is $(A,V)\in \Delta$. 

If $(A,V)\in \Delta$ and $A$ is reducible on the direct sum of $\tau$-orthogonal, $\sigma$-invariant subspaces $V=V_1 \oplus V_2$, that is $A V_1 \subset V_1$ and $A V_2 \subset V_2$, then note that 
$L(V_i,\tau|,\sigma|)$ is well-defined and we can write
$A \in L(V_i,\tau,\sigma)$ and $(A,V_i)\in \Delta_i$ for a type in the restricted linear algebra. In this case, we 
define the decomposition of types
\begin{equation}
\label{eq:TypeDecompDef}
\Delta = \Delta_1  \oplus \Delta_2~.
\end{equation}
Note that we also have
\begin{equation}
\label{eq:dimaddsup}
\dim \Delta = \dim\Delta_1+\dim\Delta_2 \, ,
\end{equation}
for the dimensions of the corresponding vector space decomposition. For the case of symmetric $\tau$, the signature of the two types $\Delta_1$ and $\Delta_2$ 
should also add up to that of $\Delta$, a property that we will use in our classification. The notion of decomposition of types in \eqref{eq:TypeDecompDef} lends 
to the definition of an \emph{indecomposable type}. That is, an indecomposable type $\Delta$ is such that it cannot be decomposed as in \eqref{eq:TypeDecompDef}. Finally, the decomposition of the type $\Delta$ into indecomposable types $\Delta_i$,
\begin{equation}
\Delta = \oplus_i \Delta_i \, ,
\label{eq:TypeDecompIndec}
\end{equation}
can be shown to be essentially unique.

\begin{table}
\begin{center}
\begin{tabular}{>{$}l<{$}>{$}l<{$}>{$}l<{$}>{$}l<{$}}
  \text{type}&
  \text{condition}&
  \text{condition}&
  \text{signature} \\
\toprule
  \Delta_p(\zeta,-\zeta,\bar{\zeta},-\bar{\zeta})& 
  \zeta\neq \pm \bar{\zeta}&
  p\in\mathbb{N}&
  (2(1+p),2(1+p))\\
  \Delta_p(\zeta,-\zeta)&
  \zeta\in\mathbb{R}^{*}&
  p\in\mathbb{N}&
  (1+p,1+p)
  \\
  \Delta_p^\pm(\zeta,-\zeta) &
  \zeta\in i\mathbb{R}^* &
  p\in 2\mathbb{N}&
  \pm(-1)^{p/2}(p+2,p)\\
  \Delta_p^\pm(\zeta,-\zeta) &
  \zeta\in i\mathbb{R}^* &
  p\in 2\mathbb{N}+1&
  (p+1,p+1)\\
  \Delta^\pm_p(0)&
  -&
  p\in 2\mathbb{N}&
  \pm(-1)^{p/2}(\frac{p}{2} + 1,\frac{p}{2})\\
  \Delta_p(0,0)&
  -&
  p\in 2\mathbb{N}+1&
  ({p} + 1,{p}+1)
\end{tabular}
\caption{Indecomposable types of $\mathrm{O}(m,n)$, where the negative sign in the signature means: $-(s_1,s_2) \equiv (s_2,s_1)$.}
\label{tab:IndecOmn}
\end{center}
\end{table}

We give the
indecomposable types $\Delta$ of $\mathfrak{so}(m,n)$ in table \ref{tab:IndecOmn}. The types in
table \ref{tab:IndecOmn} are denoted by $\Delta_p(\zeta,\cdots)$,
where $p$ is the order of its nilpotent part $N$ in the fundamental and in parentheses the $(\zeta,\cdots)$ are the eigenvalues
of its semisimple part $S$. We also list the dimension and signature that any given type belongs to. Under 
a decomposition into indecomposables, see \eqref{eq:TypeDecompIndec}, the signatures add up as in \eqref{eq:dimaddsup}. An 
algorithm to find the types of elements in $\mathfrak{so}(m,n)$ is to partition the signature $(m,n)$ into numbers $(m_i,n_i)$ that correspond 
to the indecomposable types in table \ref{tab:IndecOmn}.

\begin{example}
\label{example1}
A nilpotent element in $\mathfrak{so}(2,2)$ can be decomposed into
indecomposables of signature $(2,2)$, $(1,0)$, $(0,1)$, $(2,1)$ and
$(1,2)$. These correspond, respectively, to the indecomposable types $\Delta_1(0,0)$,
$\Delta_0^+(0)$, $\Delta_0^-(0)$, $\Delta_2^{-}(0)$ and $\Delta_2^+(0)$. The possible partitions are found by matching up the signature. We thus get the 
following types of nilpotent elements in $\mathfrak{so}(2,2)$
\[\begin{aligned}
\Delta_1(0,0), \quad \Delta_2^-(0)+\Delta_0^-(0), &\\
\Delta_2^+(0)+\Delta_0^+(0), \quad 2\Delta_0^+(0)+ 2\Delta_0^-(0) \, .&
\end{aligned}\]
Each nilpotent element of $\mathfrak{so}(2,2)$ is $\mathrm{O}(2,2)$-conjugate to exactly one of these four types.
\end{example}

From the table we see that if the indecomposable type is nilpotent, then there are only two possibilities: type $\Delta_p^{\pm}(0)$ and type $\Delta_p(0,0)$. We 
construct normal forms 
for these types in appendix \ref{app:normalforms}, and in \ref{app:KST} we give  their corresponding Kostant-Segikuchi triples in $\so(m,n)$. Via the 
Kostant-Segikuchi correspondence, we thus arrive at the normal forms for the indecomposable nilpotent elements in $\mathfrak{p}^{\CC}$ up to the action of 
$K^{\CC}$ which we give in appendix \ref{app:norminpc}.



\subsection{Supersymmetric nilpotency}
\label{sec:susynil}
In the previous subsection, we classified the complex nilpotent elements that $\Pz$ necessarily belongs to. However, not all of them admit supersymmetry.
We need to select those that admit a non-zero amount of supersymmetry according 
to the algebraic supersymmetry equation which leads us to:
\begin{theorem}\label{thm:susynil}
Assume that $\Pz$ admits some supersymmetry. If we decompose
the element $\Pz$ into nilpotent indecomposable types of
$\mathrm{SO}(8,n)$, then the following hold
\begin{enumerate}[a)]
\item Type $\Delta_p(0,0)$ for $p\geq 3$ does not appear in the decomposition,
\item The multiplicity $\mu$ of $\Delta_1(0,0)$ is responsible for
  projecting supersymmetry to a fraction $(1/2)^{\mu}$,
\item Type $\Delta_p(0)$ for $p\geq 4$ does not appear in the
  decomposition,
\item The multiplicity of $\Delta_0(0)$ in the decomposition does not
  affect supersymmetry,
\item Type $\Delta_2^+(0)$ does not appear in the decomposition. The multiplicity $\nu$  of $\Delta_2^{-}(0)$  is responsible for
  projecting supersymmetry to a fraction $(1/2)^{\nu}$.
\end{enumerate}
\end{theorem}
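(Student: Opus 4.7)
My plan is to reduce the claim to a case-by-case analysis of the indecomposable nilpotent types using the explicit normal forms constructed in appendix \ref{app:norminpc}. Because the decomposition in \eqref{eq:TypeDecompIndec} corresponds to an orthogonal direct sum of invariant subspaces of $\RR^{8,n}$, in adapted bases the element $\Pz$ is block-diagonal and the algebraic dilatino equation \eqref{eq:algsusy} splits into one independent equation per block, involving disjoint ranges of the indices $I$ and $r$. Consequently $\epsilon^A_{\bar z}$ is a Killing spinor for $\Pz$ if and only if it simultaneously solves the equation for every block; moreover, the projectors coming from distinct blocks act through disjoint sets of $\Gamma^I$, so they commute on $\Spin(8)$-spinors and their ranks multiply. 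This block-diagonal observation is what eventually yields the multiplicative counting.

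For the admissible low-order types I would simply plug the appropriate normal form into \eqref{eq:algsusy} and read off the projection on $\epsilon_{\bar z}^A$. For $\Delta_0^{\pm}(0)$ the nilpotent order is zero, so the block contributes nothing to $\Pz$ and \eqref{eq:algsusy} is trivially satisfied; this proves (d). For $\Delta_1(0,0)$ of signature $(2,2)$, the normal form is an $(I,r)$-rank-one block built from an isotropic complex direction, so \eqref{eq:algsusy} reduces to a single equation of the schematic form $(\Gamma^{I_1}+i\,\Gamma^{I_2})\epsilon_{\bar z}=0$. The operator squares to zero, has rank equal to half the real $\Spin(8)$ spinor space, and hence halves supersymmetry; with $\mu$ independent $(2,2)$-blocks using disjoint pairs of $\Gamma^I$, the projectors commute and each cuts supersymmetry independently, giving (b). An analogous computation for $\Delta_2^{-}(0)$ of signature $(2,1)$ produces a second commuting family of rank-one-half projectors, yielding the $\nu$-part of (e).

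For the forbidden types I would show instead that the corresponding normal forms admit \emph{no} non-zero $\epsilon_{\bar z}^A$. For $\Delta_2^+(0)$ of signature $(1,2)$ the normal form uses only one $\Gamma$-direction but two $r$-columns, and the two resulting equations are incompatible: any common solution must be annihilated by $(\Gamma^I)^2$, but $(\Gamma^I)^2=-1$, so $\epsilon_{\bar z}=0$. For $\Delta_p(0,0)$ with $p\ge 3$ and $\Delta_p^{\pm}(0)$ with $p\ge 4$ the normal forms contain several linearly independent columns, and combining the associated equations forces $\epsilon_{\bar z}$ to lie in the kernel of a product of $\Gamma$-matrices that squares to a nonzero multiple of the identity, again killing the spinor. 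This establishes (a), (c) and the remaining half of (e).

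The main obstacle is precisely the negative statements (a), (c) and the $\Delta_2^+(0)$ half of (e): one has to check that the dilatino system genuinely has no non-trivial spinor solution rather than just a smaller one. This amounts to writing the normal forms out explicitly (the purpose of appendix \ref{app:norminpc}) and doing a short but careful gamma-matrix computation; the subtlety is tracking which $(I,r)$-components are turned on, since the signature of each indecomposable block tightly constrains the support of $P^{Ir}_z$. Once each block is handled, the orthogonality of supports reduces the global analysis to a straightforward tensor product of block calculations, which reproduces the $(1/2)^{\mu+\nu}$ counting of surviving supersymmetries.
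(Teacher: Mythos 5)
Your positive cases (b), (d) and the $\Delta_2^{-}(0)$ half of (e) are fine and follow the paper's route: the block-diagonal splitting of \eqref{eq:algsusy} and the commuting half-rank projectors $(\Gamma^{I_1}+i\,\Gamma^{I_2})$ give exactly the $(1/2)^{\mu+\nu}$ counting. The $\Delta_2^{+}(0)$ exclusion is also essentially the paper's argument (a column of the normal form yields $\Gamma^{I}\epsilon_{\bar z}=0$ with $\Gamma^{I}$ invertible).

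However, there is a genuine gap in your treatment of (a) and (c). Your claimed mechanism --- that combining several columns forces $\epsilon_{\bar z}$ into the kernel of a product of $\Gamma$-matrices squaring to a nonzero multiple of the identity --- is not what actually excludes these types, and as stated it cannot distinguish the forbidden types from the allowed ones. For $\Delta_p(0,0)$ with $p\geq 3$ (and likewise $\Delta_p^{\pm}(0)$ with $p\geq 4$) a \emph{single} column of the normal form \eqref{eq:einpctypef}, the one attached to $\hat{e}^{(1)}_0$, already gives an equation of the form
\begin{equation}
\left(-a_0\,\Gamma^{I_1} + i\,p\,\Gamma^{I_2}\right)\epsilon_{\bar z}=0 \, ,
\end{equation}
which is \emph{structurally identical} to the BPS projection arising from $\Delta_1(0,0)$ or $\Delta_2^{-}(0)$; whether its kernel is zero or half-dimensional depends entirely on the relative size of the two coefficients. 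Multiplying by $\Gamma^{I_2}$ one finds that a nonzero solution requires the eigenvalue condition $a_0^2=p^2$, since $\Gamma^{I_1}\Gamma^{I_2}$ squares to $-1$. The crux of the paper's proof is that the Kostant--Sekiguchi ($\mathfrak{sl}_2$) normalization of the normal form fixes $a_0^2=p$, so $a_0^2=p\neq p^2$ for $p\geq 2$ and the spinor must vanish, whereas for the low-order types the two coefficients have equal magnitude and one gets a genuine projector. Your proposal never invokes these specific coefficient values, so the ``short but careful gamma-matrix computation'' you defer is precisely the missing step: without tracking the normalization $a_i^2=(p-i)(i+1)$ (and $b^2$, $c^2$) coming from the triple construction, the negative statements (a) and (c) do not follow.
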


We give the proof of theorem \ref{thm:susynil} in appendix \ref{app:susynil}. 
Types $\Delta_2^{-}(0)$ and $\Delta_1(0,0)$ are the only ones 
that determine supersymmetry, because type $\Delta_0^{\pm}(0)$ is represented by 
$\Pz=0$.  Assume a basis $e_I$ of $\RR^8$ and $\hat{e}_r$ of $\RR^n$ related to 
$\SO(8,n)$ ungauged supergravity. Normal forms corresponding to each 
indecomposable type  
can be written in tensor product form (see \eqref{eq:eissusy1} and  
\eqref{eq:eisdelta1} with some relabeling):
\begin{align}
\label{eq:epure1}
\left( \pm e_1 +i\, e_2 \right)\otimes \hat{e}_1 \,  & \in 
\Delta_2^{-}(0) \, ,\\
\label{eq:epure2}
\left(\pm e_1 + i \,e_2 \right) \otimes 
\left( \hat{e}_1 \pm i\,  \hat{e}_2 \right) \, & \in \Delta_1(0,0)~.
\end{align}
If $\Pz$ in its 
decomposition into  $\Delta_2^{-}(0)$ and $\Delta_1(0,0)$ does not span the 
whole space $\mathfrak{p}^{\CC}$, 
then one can use a parity transformation in the perpendicular directions and absorb the signs that appear in \eqref{eq:epure1} and  \eqref{eq:epure2}. 
If on the other hand the element $\Pz$ spans the whole space, then all signs are again absorbed because the sign of the last type that appears in the decomposition 
is fixed to be one because of the chirality of Killing spinors. Indeed, the algebraic supersymmetry equation for each type in \eqref{eq:epure1} 
and  \eqref{eq:epure2} is manifestly that of a BPS projection equation
\begin{equation}
\left(\Gamma^{1}_{A\dot{A}} +i\ \Gamma^{2}_{A\dot{A}}\right) \epsilon^A_z = 
0~,
\end{equation}
where $\Gamma^1$ corresponds to ${e}_1$ in \eqref{eq:epure1} or 
\eqref{eq:epure2} and $\Gamma^{2}$ corresponds to 
$e_2$ in  \eqref{eq:epure1} or \eqref{eq:epure2}. Note that the $\epsilon^A_z$ 
appearing in this equation is 
only $K^{\CC}$-conjugate to the 
actual supergravity Killing spinor.

At this point we introduce the following notation: A supersymmetric element 
$\Pz$ is said to belong to type $N(\mu,\nu)$ if it decomposes into types as
\begin{equation}
\Pz \in \left( \underbrace{\Delta_1(0,0) \oplus \cdots \oplus \Delta_1(0,0)}_{\mu\text{ times }}\right) \oplus
 \left( \underbrace{\Delta_2^{-}(0) \oplus \cdots \oplus 
\Delta_2^-(0)}_{\nu\text{ times }} \right)~.
\end{equation}
By using \eqref{eq:epure1} 
and \eqref{eq:epure2}, a supersymmetric element 
$\Pz \in N(\mu,\nu)$ is $K^{\CC}$-conjugate to
\begin{multline} \label{eq:Pzatomic}
 \Pz \conjugate{K^{\CC}} \underbrace{\left( e_1 + i \, e_2 \right) \otimes 
\left( \hat{e}_1 + i\, \hat{e}_2 \right) + \cdots  + \left( e_{2\mu-1} + i \, 
e_{2\mu}\right) \otimes \left( \hat{e}_{2\mu-1} + i\, \hat{e}_{2\mu} 
\right)}_{\mu\text{ terms}} \\ +
 \underbrace{\left( e_{2\mu+1} + i \, e_{2\mu+2} \right) \otimes  
\hat{e}_{2\mu+1} + \cdots + \left( e_{2\mu+2\nu-1} + i \, e_{2\mu+2\nu}\right) 
\otimes  
 \hat{e}_{2\mu+\nu}}_{\nu\text{ terms}} \,.
\end{multline}
It follows from the signatures of the types in table \ref{tab:IndecOmn} that 
each class $N(\mu,\nu)$ corresponds to the partition of
$(8,n)$ into the sums of $(1,0)$, $(0,1)$, $(2,1)$ and $(2,2)$, by using the 
convention
\begin{equation}
(a_1,b_1)+(a_2,b_2)=(a_1+a_2,b_1+b_2)~,
\end{equation}
with multiplicity $\mu$ of $(2,2)$ and multiplicity $\nu$ of
$(2,1)$.

\begin{example}
 A supersymmetric 
element $\Pz$ of type $N(2,1)$ is $K^{\CC}$-conjugate to
\begin{equation}
\Pz \conjugate{K^{\CC}} \underbrace{\left( e_1 + i\, e_2 \right) \otimes \left( \hat{e}_1 + i \, \hat{e}_2 \right) +  \left( e_3 + i\, e_4 \right) \otimes \left( \hat{e}_3 + i \, \hat{e}_4 \right)}_{\mu=2\text{ terms}} +  \underbrace{\left( e_5 + i\, e_6 \right) \otimes  \hat{e}_5 }_{\nu=1\text{ terms}}~.
\end{equation}
With $\Pz=\Pz^{Ir} e_I \otimes \hat{e}_r$ and taking the components $r=1,3,5$, the algebraic supersymmetry equation (\ref{eq:algsusy})
is $K^{\CC}$-invariant and becomes
\begin{align}
\left(\Gamma^{1}_{A\dot{A}} + i \Gamma^2_{A\dot{A}} \right)\epsilon^{A}_{z} &= 0  \, ,\\
\left(\Gamma^{3}_{A\dot{A}} + i \Gamma^4_{A\dot{A}} \right)\epsilon^{A}_{z} &= 0 \, ,\\
\left(\Gamma^{5}_{A\dot{A}} + i \Gamma^6_{A\dot{A}} \right)\epsilon^{A}_{z} &= 0 ~.
\end{align}
The matrices $i\Gamma^{12}$, $i\Gamma^{34}$ and $i\Gamma^{56}$ are 
compatible projection operators such that $\Gamma^{I_1 
\ldots I_{2k}}_{AA}=0$ for $k\neq 0,4$. They therefore halve real supersymmetry 
down to $16/2^3=2$. 
\end{example}

By generalizing the above example, we reach 
\begin{theorem}\label{thm:BPS}
The real supersymmetries of a timelike supersymmetric background in ungauged half-maximal supergravity comes 
in powers of $2$, that is $16,8,4,2$. In particular, class $N(\mu,\nu)$ 
has $16/2^{\mu+\nu}$ real supersymmetries for $\mu+\nu<4$ and $2$ real 
supersymmetries for $\mu+\nu=4$. 
\end{theorem}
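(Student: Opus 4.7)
The plan is to combine theorem \ref{thm:susynil} with a direct analysis of the algebraic supersymmetry equation \eqref{eq:algsusy} on the $K^{\CC}$-conjugate normal form \eqref{eq:Pzatomic}. Since \eqref{eq:algsusy} is $K^{\CC}$-invariant, counting its solutions on the normal form yields the correct count of Killing spinors in the class $N(\mu,\nu)$. The bound $\mu+\nu\leq 4$ is immediate: from table \ref{tab:IndecOmn}, both $\Delta_1(0,0)$ of signature $(2,2)$ and $\Delta_2^-(0)$ of signature $(2,1)$ contribute $2$ to the first entry of the signature, so additivity \eqref{eq:dimaddsup} applied to $\so(8,n)$ forces $2(\mu+\nu)\leq 8$.

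Next I would reduce \eqref{eq:algsusy} on the normal form \eqref{eq:Pzatomic} to BPS-type projections. The tensor-product structure together with the orthogonality of the $\hat{e}_r$ yields, upon selecting the non-vanishing $r$-components,
\begin{equation*}
(\Gamma^{2k-1}_{A\dot{A}}+i\,\Gamma^{2k}_{A\dot{A}})\,\epsilon^A_z=0,\qquad k=1,\dots,\mu+\nu.
\end{equation*}
Multiplying by $\Gamma^{2k-1}$ and using the Clifford relations $\{\Gamma^I,\Gamma^J\}=-2\delta^{IJ}$ rewrites each equation as $i\Gamma^{2k-1,2k}\epsilon_z=\epsilon_z$. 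The operators $\{i\Gamma^{2k-1,2k}\}_{k}$ involve pairwise disjoint gammas, so they commute, each squares to $+\mathbf{1}$, and each has vanishing trace on the chiral representation. Hence for $\mu+\nu\leq 3$ the projectors are independent, each halves the $16$-dimensional real space of Killing spinors, and one obtains $16/2^{\mu+\nu}$ real supersymmetries.

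The main obstacle is the saturation case $\mu+\nu=4$. Here the ordered product of the four projector generators is
\begin{equation*}
\prod_{k=1}^{4}(i\Gamma^{2k-1,2k})=i^{4}\,\Gamma^{12345678}=\Gamma^{12345678},
\end{equation*}
which is the $\Spin(8)$ chirality operator, acting as $\pm\mathbf{1}$ on the chiral spinor $\epsilon^A$. Consequently the fourth projection is algebraically dependent on the first three. At $\mu+\nu=4$, all eight internal directions are exhausted by \eqref{eq:Pzatomic}, so there is no perpendicular direction in $\RR^8$ in which to absorb an unwanted sign; instead, the sign-absorption argument of section \ref{sec:susynil} following \eqref{eq:epure1} and \eqref{eq:epure2} shows that the sign of the final indecomposable summand is fixed by the chirality of the Killing spinor, so that the fourth projector is automatic rather than annihilating. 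Hence only three independent halvings act and one obtains $16/2^{3}=2$ real supersymmetries, completing the proof.
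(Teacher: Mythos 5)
Your argument is correct and is essentially the paper's own proof: the paper likewise reduces \eqref{eq:algsusy} on the normal form \eqref{eq:Pzatomic} to the compatible commuting projections $i\,\Gamma^{2k-1,2k}\epsilon_z=\epsilon_z$, uses the trace identity $\Gamma^{I_1\cdots I_{2k}}_{AA}=0$ for $k\neq 0,4$ to halve the $16$ real supersymmetries once per projector, and observes that at $\mu+\nu=4$ the chirality operator $\Gamma^{12345678}$ makes the fourth projection dependent on the other three, leaving $2$. The only cosmetic difference is that you justify independence by the tracelessness of each individual $i\,\Gamma^{2k-1,2k}$, whereas the dimension count really also needs the vanishing traces of products of two or three distinct such two-forms, which is exactly the stronger identity the paper quotes and which your argument implicitly uses.
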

Note that having only one real supersymmetry is 
excluded because according to theorem \ref{thm:evensusy} the vector space of 
Killing spinors 
is complex. In the case of $\mu+\nu=4$ there are only three independent 
BPS projections due to chirality. Theorem \ref{thm:BPS} can also be shown in a 
more direct approach, which we do in appendix \ref{app:glnc}. 

\section{Identification under $K$}\label{sec:real}
The classification under $K^{\CC}$ is genuine and powerful. However, it is of 
little use if we cannot access the solutions. If $P_1\in N(\mu,\nu)$ is a normal 
form in the class but is not part of a solution, it does not follow that a 
conjugate element 
$P_2 \conjugate{K^{\CC}} P_1$ is also not a solution. If $P_1$ is indeed a solution, by using only $P_1$ we miss all other potential solutions that are 
related to $P_1$ by $K^{\CC}$ but not related to it by $K$, where the latter is the actual symmetry of the theory.
Therefore, we should move from normal forms of a class $N(\mu,\nu)$, that is under the identification of
\begin{equation}
K^{\CC} = \left( \mathrm{SO}(8)\times \mathrm{SO}(n) \right)^{\CC} \times \ZZ_2~,
\end{equation}
to all elements that are identified under
\begin{equation}
K =  \mathrm{SO}(8)\times \mathrm{SO}(n) \times \ZZ_2~.
\end{equation}
We may think of starting with a specific normal form $P_1\in N(\mu,\nu)$ and act on it with all possible $K^{\CC}$ rotations, modulo 
its stabilizer that leaves the normal form invariant anyway, thus obtaining all elements in $N(\mu,\nu)$. Subsequently, we should identify under $K$ and obtain 
the space that we call $N(\mu,\nu)/K$. We are thus interested in the double quotient on the right-hand side of
\begin{equation} \label{eq:doublecoset}
N(\mu,\nu)/K = \left( \mathrm{SO}(8)\times \mathrm{SO}(n)\right)  \backslash  \left( \mathrm{SO}(8)\times \mathrm{SO}(n)\right)^{\CC} / \text{Stab}(N(\mu,\nu))~.
\end{equation}
Note that the normal forms in \eqref{eq:Pzatomic} do not contain any 
coefficients so the spacetime variance of $\Pz$ comes from the double coset 
alone.

We will not parametrize the double quotient \eqref{eq:doublecoset} directly. 
Instead, we will use the action  of a real orthogonal group on the 
complexification of its associated vector space, which we describe in the next 
subsection. Then, we will be able to write the most general form of a 
$\Pz\in N(\mu,\nu)$ after identifying the elements up to the real local symmetry 
of the theory. 

\subsection{Complex vectors}
\label{sec:complexvectors}
We begin with the action of $\mathrm{O}(m)$ on complex vectors in $\CC^m$ with inner product defined as $A \cdot B=\sum_I A_IB_I$. We will later specialize 
for $m=8$ and $m=n$. This subsection will eventually serve our goal to fix $\Pz \in \CC^8 \otimes \CC^m$ under the action of $\SO(8)\times \SO(m)\times \ZZ_2$.

Let us first consider complex null vectors, for instance a vector $v\in \CC^m$ such that $v \cdot v = 0$. Let us use an orthonormal basis $\{e_I\}$ of $\CC^m$. It 
is clear that one may $\mathrm{O}(m)$-rotate the real part of $v$ to only have a component in $e_1$ and then rotate its imaginary part, by using the stabilizer 
$\mathrm{O}(m-1)$, to have components in $e_1$ and $e_2$. The condition $v \cdot v=0$ though implies that its expansion in components is
\begin{equation} \label{eq:firstnull}
v = v^1( e_1 + i\,  e_2) \, ,
\end{equation}
in terms of some real $v^1$ that can be chosen positive.
If we wish to fix $v$ under the action of $\mathrm{O}(m)^{\CC}$ instead, there is a hyperbolic element in $\SO(2)^{\CC}\subset \mathrm{SO}(m)^{\CC}$ that scales $v$ and 
so $v^1$ can be set to one. Assume now an ordered  set of $\mu$ complex null vectors $\{v_{(i)}\}_{i=1}^{\mu}$ that are linearly independent and orthogonal to each other. 
We may fix the first vector $v_{(1)}$ as in \eqref{eq:firstnull}, fix the second vector $v_{(2)}$ to only have components in $\vspan{e_1,e_2,e_3,e_4}$, etc., a modification 
of the QR decomposition. Since the vectors are orthogonal to each other, only half of their coefficients are independent,
\begin{align}
v_{(1)} &= v_{(1)}^1 \left(  e_1 + i\,  e_2\right)~,\\
v_{(2)} &= v_{(2)}^1 \left(  e_1 + i\,  e_2\right) +  v_{(2)}^2 \left(  e_3 + i\,  e_4\right)~,\\
&~~\vdots\notag
\end{align}
and the diagonal coefficients are positive by linear independence. If we use $\mathrm{O}(m)^{\CC}$ instead, the diagonal entries can be scaled to one. If we are not 
interested in fixing the vectors completely, we may expand 
\begin{equation}\label{eq:complexnulls}
v_{(i)}=v_{(i)}^j\left(e_{2j-1} + i\, e_{2j}\right)
\end{equation}
with the Einstein summation over $j=1,\ldots, \mu$ and use a non-degenerate $\mu\times \mu$ matrix $v_{(i)}^j$. 

There is a manifest $\mathrm{U}(1)^{\mu} \SO(\mu)\subset \SO(m)$ symmetry 
acting on the expansion in terms of $v_{(i)}^j$ in \eqref{eq:complexnulls}. The $\mathrm{U}(1)$ factors are 
complex phase rotations
\begin{equation}
e_{2i-1} + i\, e_{2i} \mapsto e^{i\phi} \left( e_{2i-1} + i\, e_{2i}\right)~,
\end{equation}
and the $\SO(\mu)$ rotates the $e_{2i-1}+i e_{2i}$ in the fundamental representation. The group product $\mathrm{U}(1)^\mu \mathrm{SO}(\mu)$ is not a direct product, 
it is the group generated by the groups $\mathrm{U}(1)^{\mu}$ and $\SO(\mu)$ as subgroups of $\SO(m)$: the set of all possible multiplications between the group elements of the 
subgroups.
As these two subgroups do not commute the multiplication generates $\mathrm{U}(\mu)$, see lemma \ref{lem:uso} in appendix \ref{app:glnc}. 

Similarly, one may fix under $\mathrm{O}(m)^{\CC}$ and the matrix $v_{(i)}^j$ can be made equal to the identity matrix, see appendix \ref{app:glnc}. Let us now turn to 
an ordered set of $\nu \leq 4$ linearly independent complex vectors $\{r_{(i)}\}_{i=1}^{\nu}$ that are mutually orthogonal among themselves and with the previous ordered 
set $\{v_{(i)}\}_{i=1}^{\mu}$ of complex null vectors, but such that the norm of each $r_{(i)}$ is equal to one. Since they are orthogonal to 
the  $\{v_{(i)}\}_{i=1}^{\mu}$, by using $\mathrm{O}(m)$ and the expansion in~\eqref{eq:complexnulls}, we may expand the $r_{(i)}$ as
\begin{equation}\label{eq:rwithnull}
r_{(i)} =\sum_{j=1}^{\mu} B_{(i)}^j \left( e_{2j-1} + i\, e_{2j} \right) + R_{(i)}~,
\end{equation}
where the $R_{(i)}$ do not contain components in the complex span of $\vspan{e_1,\ldots,e_{2\mu}}$. We may use the remaining symmetry $\mathrm{O}(m-2\mu)$ to fix 
the $R_{(i)}$. 

The first $R_{(1)}$ may be brought to the form
\begin{equation}
R_{(1)} = \cosh \zeta_1 \, e_{2\mu+1} + i\, \sinh \zeta_1 \, e_{2\mu+2}~,
\end{equation}
and 
we may choose $\zeta_1$ to be real.
Continuing this way, in a QR decomposition, we may partially fix the $R_{(i)}$ to be expanded in a basis
\begin{equation}
R_{(i)} = \Sigma_{(i)}{}^j e_{2\mu+j}~,
\end{equation}
with an Einstein summation over $j$ and where the matrix $\Sigma_{(i)}{}^j$ is given by the upper-left $\nu \times 2\nu$ submatrix of 
the $4 \times 8$ matrix ($\nu \leq 4$)
\begin{equation}\label{eq:sigmasum}
\begin{aligned}\Sigma_{\text{sup}}  \equiv & 
\left(
\begin{matrix}
\cosh \zeta_1 & i \sinh \zeta_1 & 0 & 0 \\
\sinh \eta_1 \sinh \zeta_1& i \sinh \eta_1 \cosh \zeta_1&
\cosh \eta_1 \cosh \zeta_2 & i \cosh\eta_1 \sinh \zeta_2 \\
0 & 0 & \sinh \eta_2 \sinh \zeta_2& i \sinh \eta_2 \cosh \zeta_2 \\
0&0&0&0 
\end{matrix}\right| \\
& \left| 
\begin{matrix}
 0 & 0 & 0 & 0 \\
 0 & 0 & 0 & 0 \\
\cosh \eta_2 \cosh \zeta_3 & i \cosh\eta_2 \sinh \zeta_3 & 0 & 0 \\
\sinh \eta_3 \sinh \zeta_3& i \sinh \eta_3 \cosh \zeta_3&
\cosh \eta_3 \cosh \zeta_4 & i \cosh\eta_3 \sinh \zeta_4 
\end{matrix} \right)~.
\end{aligned}
\end{equation}
The $\eta_i$ might be fixed to be real or imaginary\footnote{We may choose all coefficients to be real, but not whether $\cosh^2\eta_i$ is larger, equal, or smaller than 
unity.} and the $\zeta_i$ are all real. It might seem that $\Sigma$ is completely fixed and there is no remaining symmetry, but this is not true if $\Sigma$ is 
degenerate. This happens when some of the parameters in $\Sigma$ are zero. The matrix $\Sigma$ has the orthonormal property $\Sigma\Sigma^T=I_{\nu\times\nu}$.

We have now described in general how to fix two ordered sets of vectors 
$\{v_{(i)}\}_{i=1}^{\mu}$ and $\{r_{(i)}\}_{i=1}^{\nu}$ that are orthogonal 
among themselves and 
each other, where the first are null and the latter unit norm, under the 
action of $\mathrm{O}(m)$. Under $\SO(m)$ there might be a sign ambiguity in one 
of 
the components when $2\mu+2\nu=m$. Indeed, for $2\mu+2\nu<m$ one may use a $\SO(m)$ rotation that contains a parity transformation  
perpendicular to the basis, so the sign in the basis is restored. If $2\mu+2\nu=m$ and $\nu\neq0$, we may allow $\eta_i$ to be negative in \eqref{eq:sigmasum}. If 
$\nu=0$ and $2\mu=m$ then we may need to replace $e_{2i-1}+i\, e_{2i}$ with $e_{2i-1}-i\, e_{2i}$ for some $i$ in \eqref{eq:complexnulls}. This sign ambiguity will not 
be present in what follows due to the chirality of Killing spinors.

\subsection{Elements in $N(\mu,\nu)$}
We recall \eqref{eq:Pzatomic} that an element $\Pz \in N(\mu,\nu)$ is $K^{\CC}$-conjugate to
\begin{multline*} 
 \Pz \conjugate{K^{\CC}} \underbrace{\left( e_1 + i \, e_2 \right) \otimes \left( \hat{e}_1 + i\, \hat{e}_2 \right) + \cdots  + \left( e_{2\mu-1} + i \, e_{2\mu}\right) \otimes \left( \hat{e}_{2\mu-1} + i\, \hat{e}_{2\mu} \right)}_{\mu\text{ terms}} \\ +
 \underbrace{\left( e_{2\mu+1} + i \, e_{2\mu+2} \right) \otimes  \hat{e}_{2\mu+1} + \cdots + \left( e_{2\mu+2\nu-1} + i \, e_{2\mu+2\nu}\right) \otimes  
 \hat{e}_{2\mu+\nu}}_{\nu\text{ terms}} ~.
\end{multline*}
The most general $K^{\CC}$ transformation is such that $\Pz$ should be expanded in terms of independent orthogonal complex null 
vectors $\{u_{(i)}\}_{i=1}^{\mu}$ and $\{v_{(i)}\}_{i=1}^{\nu}$ of $\CC^8$ and independent complex null vectors $\{w_{(i)}\}_{i=1}^{\mu}$ and independent 
complex unit-norm vectors $\{r_{(i)}\}_{i=1}^{\nu}$ of $\CC^n$, where the  $w_{(i)}$ and  $r_{(i)}$ are also mutually orthogonal together:
\begin{equation}
\Pz = \sum_{i=1}^{\mu} u_{(i)} \otimes w_{(i)} + \sum_{i=1}^{\nu} v_{(i)} \otimes r_{(i)}~. \label{eq:PIsu}
\end{equation}
This follows by the form given in \eqref{eq:Pzatomic}. Indeed, the action of 
$\SO(8)^{\CC}\times \SO(n)^{\CC}$ preserves the inner product among the vectors 
appearing in \eqref{eq:Pzatomic} or the corresponding ones appearing in 
\eqref{eq:PIsu}. That is, in \eqref{eq:PIsu} we necessarily have
\begin{align}
&u_{(i)}\cdot u_{(j)} = v_{(i)}\cdot v_{(j)} = u_{(i)}\cdot v_{(j)} = 0~, \\
&w_{(i)} \cdot w_{(j)} = w_{(i)}\cdot r_{(j)} = 0~,\\
& r_{(i)}\cdot r_{(j)}=\delta_{ij}~.
\end{align}
Finally, the vectors in \eqref{eq:PIsu} should be linearly independent. 

We define an orthonormal basis 
\begin{equation}
\{\ef_i\}_{i=1}^{2\mu} \oplus \{\es_{i'}\}_{i'=1}^{2\nu} \, ,
\end{equation}
of an orthogonal subspace $\RR^{2\mu}\oplus \RR^{2\nu} \subseteq \RR^8$ and an orthonormal basis 
\begin{equation}
\{\qf_i\}_{i=1}^{2\mu}\oplus\{\qs_{r'}\}_{r'=1}^{2\nu} \, ,
\end{equation} 
of an orthogonal subspace $\RR^{2\mu}\oplus \RR^{2\nu}\subseteq \RR^n$. We will use a basis of null vectors in $\CC^{2\mu+2\nu}\subseteq \CC^8$
\begin{align}
&\left\{ \cb{\ef}{i} \right\}_{i=1}^{\mu}
\oplus 
\left\{\cb{\es}{i'} \right\}_{i'=1}^{\nu} \, ,
 \label{eq:be}
\\\intertext{and a basis of null and orthonormal vectors in $\CC^{2\mu+2\nu}\subseteq \CC^n$}
& \left\{ \cb{\qf}{j} \right\}_{j=1}^{\mu}
\oplus
\left\{  \qs_{r'}  \right\}_{r'=1}^{\nu}
\label{eq:bq}~.
\end{align}
According to the discussion in subsection \ref{sec:complexvectors}, the vectors appearing in the element in \eqref{eq:PIsu} can be fixed 
under $\mathrm{O}(8)\times\mathrm{O}(n)$ (for $m=8$ and $m=n$ in subsection \ref{sec:complexvectors}) so that they are expanded in this basis. 
That is, under $\mathrm{O}(8)\times\mathrm{O}(n)$ the element $\Pz$ can be expanded into
\begin{equation}\label{eq:PzABMN}
\begin{aligned}
P^{Ir} e_I \otimes \hat{e}_r &=
N^{ij} \left( \cb{\ef}{i} \right) \otimes \left( \cb{\qf}{j} \right)
\\ &
  + M^{i'r'} \left( \cb{\es}{i'} \right) \otimes \qs_{r'} 
\\ &
+
A^{ir'} \left( \cb{\ef}{i} \right) \otimes \qs_{r'} 
\\ &
+ 
B^{i'j} \left( \cb{\es}{i'} \right) \otimes \left( \cb{\qf}{j} \right)~.
\end{aligned}
\end{equation}
There are two invariants of the element as written in  \eqref{eq:PzABMN} that identify it as belonging to $N(\mu,\nu)$:
\begin{itemize}
\item The rank $\mu+\nu$ of $\Pz^{Ir}e_I\otimes \hat{e}_r$, and
\item The rank $\nu$ of $\Pz^{Ir}\Pz^{Jr}e_I \otimes e_J$.
\end{itemize}
Note in particular that $\Pz^{Ir}\Pz^{Jr}$ has the same rank as the 
square of the right-hand side of \eqref{eq:Pzatomic}. 

The form of $\Pz$ in \eqref{eq:PzABMN} is the most general element in $N(\mu,\nu)$ up to partial fixing under $K=\SO(8)\times\SO(n)\times\ZZ_2$ for the following reason:
Recall that most of the discussion in subsection \ref{sec:complexvectors} was by using $\mathrm{O}(m)$, here we have so far used $\mathrm{O}(8)\times\mathrm{O}(n)$. If we 
were to use $K$ it might seem that \eqref{eq:PzABMN} still holds up to sign ambiguities in the bases. The mixed parity rotation in $\ZZ_2$ makes this relevant only for the 
null basis \eqref{eq:be} in $\CC^8$. That is, we might need to replace $\cb\ef{i}$ or $\cb\es{i}$ with its conjugate for at most one $i$. If $\mu+\nu<4$ then this is not 
necessary, as one may find an even parity transformation, with one inversion in some complement to the basis \eqref{eq:be} we use, which renders the basis \eqref{eq:be} 
still valid for expanding $\Pz$. Finally, if $\mu+\nu=4$ then the chirality of spinors $\Gamma^{12345678}_{AB}\epsilon_z^{B}=\epsilon^A_z$ guarantees that supersymmetric 
elements in this class are also necessarily of the form \eqref{eq:PzABMN}.

However, we still have a lot of freedom in fixing the element under $K$. We are allowed to use $\mathrm{U}(\mu+\nu)\subset \SO(8)$ on the basis \eqref{eq:be}, and 
$\mathrm{U}(\mu) \times \mathrm{SO}(2\nu) \subset \SO(n)$ on \eqref{eq:bq}. These groups act on the form of $\Pz$ in \eqref{eq:PzABMN} mixing the various coefficients 
but not changing the basis. We will now proceed to fix $\Pz$ in the basis of \eqref{eq:be} and \eqref{eq:bq} by using these groups.

\subsection{Matrix factorizations}

We will use both Takagi's factorization and a singular value decomposition on certain 
coefficients of $\Pz$. Takagi's factorization allows the 
diagonalization  of a symmetric matrix $M M^T$ into a diagonal matrix $D$ via the action of a unitary matrix $S$ by using $D=S M M^T S^T$~\cite{matrix_analysis}. Note that the 
transpose of $S$ is taken instead of the Hermitian transpose. The diagonalization is thus different 
than the spectral decomposition or 
diagonalization by a unitary matrix of a diagonalizable matrix. Takagi's factorization is always possible for symmetric matrices. Furthermore, the diagonal 
elements of $D$ are real, non-negative. On the other hand, the singular value decomposition is the diagonalization of a not necessarily square matrix $N$ under the action of two unitary 
matrices $S_1$ and $S_2$ by using $N \mapsto S_1 N S_2^{\dagger}$, and it is always possible. The diagonal elements are again real and non-negative. 

Consider the square of $\Pz$ as a symmetric complex $(\mu+\nu)\times (\mu+\nu)$ matrix in the basis of  $\{\cb\ef{i}\}_{i=1}^{\mu}$ and $\{\cb\es{i'}\}_{i'=1}^{\nu}$ 
\begin{equation}
\begin{aligned}
\Pz^{Ir} \Pz^{Jr} e_I \otimes e_J &=
\left( M M^T \right)^{i'j'} \left( \cb{\es}{i'} \right) \otimes \left( \cb{\es}{j'} \right) \\
&+ \left( AA^T \right)^{ij}  \left( \cb{\ef}{i} \right) \otimes \left( \cb{\ef}{j} \right) \\
&+ \left( AM^T \right)^{ij'} \left( \cb{\ef}{i} \right) \otimes \left( \cb{\es}{j'} \right) \\ &
+ \left( MA^T \right)^{i'j} \left( \cb{\es}{i'} \right) \otimes \left( \cb{\ef}{j} \right) ~.
\end{aligned}
\end{equation}
We use Takagi's decomposition by using the action of $\mathrm{SU}(\mu+\nu)$ so that
\begin{align}
MM^T &= D~, \text{ (diagonal, real and positive) } \label{eq:takagione}\\
AA^T &= 0~, \label{eq:takagithree}\\
AM^T &= 0\label{eq:takagitwo}~.
\end{align}
We may assert that $D$ does not have zero components because the rank of $\Pz^{Ir}\Pz^{Jr}$ should be preserved under $K^{\CC}$-conjugation\footnote{More precisely, 
Takagi's factorization determines here the split of the basis into $\{\cb\ef{i}\}_{i=1}^{\mu}$ and $\{\cb\es{i'}\}_{i'=1}^{\nu}$, but we have already assumed that 
the split is full rank on the first set.} and is equal to the invariant $\nu$. After this arrangement, the diagonal form of $\Pz^{Ir}\Pz^{Jr}$ is preserved by at least 
$\mathrm{U}(\mu)_{L}\subset \mathrm{U}(\mu+\nu)\subset \SO(8)$ that acts on the $\cb\ef{i}$.  The group that preserves $\Pz^{Ir}\Pz^{Jr}$ might in fact contain an extra 
unitary group if the diagonal elements in $D$ are not all different, but it is not necessary to take this into consideration. After performing Takagi's factorization, the 
full remaining symmetry is at least 
\begin{equation}
\mathrm{U}(\mu)_L \times \mathrm{U}(\mu)_R \times \SO(2\nu) \subset \SO(8)\times \SO(n)~.
\end{equation}
We have labeled the unitary subgroups with L (left) and R (right) to distinguish how they act on $\Pz$, whereas $\SO(2\nu)\subset \SO(n)$ has not been adorned.

The condition $MM^T = D$ can be solved by partially fixing $\mathrm{SO}(2\nu)$. We write
\begin{equation}
M = \sqrt{D} \Sigma~,
\end{equation}
where $\Sigma$ is a $\nu \times 2 \nu$ matrix which satisfies
\begin{equation}
\Sigma \Sigma^T = I_{\nu \times \nu}~,
\end{equation} and on which $\mathrm{U}(\nu)_{L}$ acts on the left in the dual representation and $\mathrm{SO}(2\nu)$ acts on the right. However, we need to 
mod out by the action of the symmetry of the theory, which is precisely the orthogonal group $\SO(2\nu)$ acting on the right of $\Sigma$. By 
using $\mathrm{SO}(2\nu)$ and a 
Gram-Schmidt orthogonalization we can fix $\Sigma$ so that it is the upper-left block of the $4\times 8$ matrix
\begin{equation}
\begin{aligned}\Sigma_{\text{sup}}  = & 
\left(
\begin{matrix}
\cosh \zeta_1 & i \sinh \zeta_1 & 0 & 0 \\
\sinh \eta_1 \sinh \zeta_1& i \sinh \eta_1 \cosh \zeta_1&
\cosh \eta_1 \cosh \zeta_2 & i \cosh\eta_1 \sinh \zeta_2 \\
0 & 0 & \sinh \eta_2 \sinh \zeta_2& i \sinh \eta_2 \cosh \zeta_2 \\
0&0&0&0 
\end{matrix}\right| \\
& \left| 
\begin{matrix}
 0 & 0 & 0 & 0 \\
 0 & 0 & 0 & 0 \\
\cosh \eta_2 \cosh \zeta_3 & i \cosh\eta_2 \sinh \zeta_3 & 0 & 0 \\
\sinh \eta_3 \sinh \zeta_3& i \sinh \eta_3 \cosh \zeta_3&
\cosh \eta_3 \cosh \zeta_4 & i \cosh\eta_3 \sinh \zeta_4 
\end{matrix} \right)~.
\end{aligned}\label{eq:sigmasup}
\end{equation}
This is the same decomposition we described in subsection \ref{sec:complexvectors}. If $\Sigma$ is degenerate, for 
instance if some of the parameters are zero, there is remaining freedom in $\SO(2\mu)$ to further fix its form. This will turn out to be the case 
when we consider in section \ref{sec:solutions} the scalar coset integrability relation. We will then be able to fix $\Sigma$ completely.

We still have a $\mathrm{U}(\mu)_L$ freedom acting on the basis $\cb{\ef}{i}$ and a $\mathrm{U}(\mu)_R$ acting on the basis $\cb\qf{j}$. Their 
action does not spoil the form of $M=\sqrt{D}\Sigma$ with $\Sigma$ described by \eqref{eq:sigmasup}, since we may always use a complementary $\SO(2\nu)$ transformation. 
We use the singular value decomposition on $N$, $N\mapsto S_1 N S_2^{\dagger}$ with $(S_1,S_2)\in \mathrm{U}(\mu)_L\times \mathrm{U}(\mu)_R$, in order to make $N$ diagonal, 
real, non-negative.
We split the basis
\begin{align}
\{ \cb\ef{i} \}_{i=1}^{\mu} & \longrightarrow \{ \cb\efa{i} \}_{i=1}^{\mu_{a}} \oplus \{ \cb\efb{i} \}_{i=1}^{\mu_b} \, ,\\
\{ \cb\qf{i} \}_{i=1}^{\mu} & \longrightarrow \{ \cb\qfa{i} \}_{i=1}^{\mu_{a}} \oplus \{ \cb\qfb{i} \}_{i=1}^{\mu_b} \, ,
\end{align}
so that $N$ is non-zero on the first $\mu_a$ components and zero on the rest of the $\mu_b$ components. There is some remaining symmetry in $K$, an 
anti-diagonal $\mathrm{U}(1)^{\mu_a}$ generated by
\begin{equation}
\left(\cb{\efa}{i}\right) -\left( \cb\qfa{i} \right) ~, i=1,\ldots, \mu_a~
\end{equation}
and a $\mathrm{U}(\mu_b)$, both of which act on the matrices $A$ and $B$. We will not fix $A$ and $B$ though, because the equations of motion will eventually 
force $A=B=0$ and $\mu_b=0$.

We have (partially) fixed the most general element $\Pz\in N(\mu,\nu)$ under the 
action of $K$, which can be summarized as follows: The 
class $N(\mu,\nu)$ of an element $\Pz$ is 
characterised by the rank $\mu+\nu$ of $\Pz$ 
and the rank $\nu$ of $\Pz^{Ir}\Pz^{Jr}$, in which case the element is expanded as in \eqref{eq:PzABMN} in an adapted basis. The 
coefficients $M$ and $A$ in its expansion should satisfy the Takagi relations \eqref{eq:takagione}-\eqref{eq:takagitwo} and $N$ should be diagonal, real, non-negative. 
At this point, we cannot prove that $N$ is strictly positive, as it will turn 
out to be. There is some remaining symmetry acting on $A$ and $B$ and 
possibly on $M$ 
from the right that we do not take advantage of. The basis we are using is
\begin{align}
&\left\{ \cb{\efa}{i} \right\}_{i=1}^{\mu_a}
\oplus \left\{ \cb{\efb}{i} \right\}_{i=1}^{\mu_b}
\oplus 
\left\{\cb{\es}{i'} \right\}_{i'=1}^{\nu}
\\\intertext{in $\CC^8$ and}
&  \left\{ \cb{\qfa}{j} \right\}_{j=1}^{\mu_a}
\oplus \left\{ \cb{\qfb}{j} \right\}_{j=1}^{\mu_b}
\oplus
\left\{  \qs_{r'}  \right\}_{r'=1}^{\nu}
\end{align}
in $\CC^n$, but we will eventually show that $\mu_b=0$ (so $\mu=\mu_a$) and drop the label $a$ on which $N$ is diagonal, real and strictly positive.

\section{Solutions}
\label{sec:solutions}
In this section we impose the equations of motion on the scalar current $\Pz$ whose form is now fixed 
in \eqref{eq:PzABMN}. We first show that the scalar connection 
$Q_z$ is also restricted in form because it has to act on $\Pz$ and preserve the basis that we use for the latter. We may then 
turn to the coset integrability equations in order to show that the form of $\Pz$ is further restricted, for instance it turns out that the matrices $A$ and $B$ 
must be zero. The equations of motion for $\Pz$ and $Q_z$ reduce more and we 
finally arrive at our main result: All solutions decompose into solutions of 
Liouville and $\SU(3)$ Toda systems.

\subsection{Restricting the connection}
\label{sec:restrictingQ}
In order to restrict the possible values of $Q_z$, we make a general analysis of
the equation of motion of $\Pz$ \eqref{eq:DPz}, which we rewrite using the 
notation '$\circ$':
\begin{equation}\label{eq:onlycent}
\partial_{\bar{z}}\Pz + Q_{\bar{z}}\circ \Pz = 0~.
\end{equation} 
We will 
assume that the stabilizer of $\Pz$,
\begin{equation}
\text{stab}(\Pz) = \{ X\in K : X \circ \Pz =0 \}~,
\end{equation}
is trivial. Hence we focus on those elements that act effectively 
on $\Pz$ and enter \eqref{eq:onlycent}.

From \eqref{eq:onlycent} we calculate the equation of motion for $D$
\begin{equation}\label{eq:dD}
\partial_{\bar{z}}\left( \Pz^{Ir} \Pz^{Jr} \right) + \left(  Q_{\bar{z}}^{IK}\delta^{JL} +  Q_{\bar{z}}^{JK}\delta^{IL}\right) \Pz^{Kr}\Pz^{Lr} = 0~.
\end{equation}
Due to the Takagi decomposition, \eqref{eq:dD} involves only the diagonal, positive, real $D$ and we may restrict
\begin{equation}
 \left. Q_{\bar{z}} \right|_{\mathrm{SO}(8)}  \in \mathfrak{u}(1)^{\nu} \oplus \mathfrak{u}(\mu)~. \label{eq:resSO}
\end{equation}
The $\mathfrak{u}(1)^{\nu}$ act on the
\begin{equation}
 \cb{\es}{i'} \mapsto i\left(  \cb{\es}{i'} \right)~
\end{equation}
and enter \eqref{eq:dD} in the form $\partial_{\bar{z}} D + \left. Q_{\bar{z}} \right|_{\mathfrak{u}(1)^{\nu}}\circ D = 0$, while the $\mathfrak{u}(\mu)$ acts on 
the $\cb\ef{i}$ and do not enter \eqref{eq:dD}.

We turn to \eqref{eq:onlycent} again because, since we have restricted $\left. Q_{\bar{z}} \right|_{\mathrm{SO}(8)}$ to a unitary group as in \eqref{eq:resSO}, we may 
assert that
\begin{align}
 \left. Q_{\bar{z}} \right|_{\mathrm{SO}(8)} & \in \mathfrak{u}(1)^{\nu} \oplus \mathrm{u}(1)^{\mu_a} \oplus \mathfrak{u}(\mu_b)~, \label{eq:QIJRestricted}\\
 \left.  Q_{\bar{z}} \right|_{\mathrm{SO}(n)} & \in \mathfrak{u}(1)^{\mu_a} \oplus \mathfrak{u}(\mu_b) \oplus \so(2\nu)~.\label{eq:QrsRestricted}
\end{align}
The two factors of $\mathfrak{u}(1)^{\mu_a}$ act on the positive components of the diagonal, real $N$
\begin{align}
\cb{\efa}{i} &\mapsto i \left( \cb{\efa}{i} \right) \, ,\\
\cb{\qfa}{i} &\mapsto i \left( \cb{\qfa}{i} \right) \, ,
\end{align}
and the remaining $\mathfrak{u}(\mu_b)$ preserves the diagonal form of $N$ in $\Pz$ but acts on the $A$ and $B$. Finally, there is a $\so(2\nu)$ that acts on the 
$\qs_{r'}$ and  thus on $M$ and $A$ from the right. These are the most general subgroups that acting on $\Pz$ should preserve the form of $\partial_{\bar{z}}\Pz$ and 
should thus enter \eqref{eq:onlycent}.

In summary, we have restricted the connection $Q_z$ to take values in
\begin{equation}
Q_z \in  \mathfrak{u}(1)^{\nu}\oplus \mathfrak{u}(1)^{\mu_a}
 \oplus \mathfrak{u}(\mu_b)_L 
 \oplus \mathfrak{u}(\mu_b)_R
 \oplus \so(2\nu)~.
\end{equation}
Explicitly, we have the following expansion
\begin{equation}
\begin{aligned}
Q_z &= q^{(2)}_z  \uw{\es}{i'} + q^{(1a)i}_z \frac12 \left(\uw{\efa}{i} + 
\uw{\qfa}{i}\right) \\
& + q^{(1b)ij}_z \left( \cb\efb{i} \right)\otimes\left( \cbm\efb{j} \right)  \\
&
+ \hat{q}^{(1b)ij}_z \left( \cb\qfb{i} \right)\otimes\left( \cbm\qfb{j} \right) 
\\&+ \Lambda^{rs}_z \qs_{r} \wedge \qs_{s}~.
\end{aligned}
\end{equation}
This may look intimidating at first, but we will soon show that $\mu_b = 0$ and the middle two lines are absent. The components in $\so(2\nu)$ will also be restricted.


\subsection{Integrability of the connection}
Now we are ready to analyze the integrability equations for $Q$ which will restrict $P_z$ even further.
Calculating the right-hand side of the 
integrability equation \eqref{eq:DQIz} for $Q^{IJ}$ as (recall that $N$ is diagonal, real and non-negative)
\begin{equation}
\begin{aligned}
-\Im\left( \Pbz^{Ir} \Pz^{Jr} \right) e_I \otimes e_J&=
-
\Im\big[ 
\left( A^{*} A^T \right)^{ij} \left( \cbm\ef{i} \right) \otimes \left( \cb\ef{j} \right) \\
&+ \left( A^{*} M^T \right)^{ij'} \left( \cbm\ef{i} \right) \otimes \left( \cb\es{j'} \right) \\
&+ \left( A^{*} M^T \right)^{ij'}  \left( \cbm\es{j'} \right)  \otimes \left( \cb\ef{i} \right) \\
&+ \left( M^{*} M^T \right)^{i'j'}  \left( \cbm\es{i'} \right)  \otimes \left(\cb\es{j'} \right) \\
&+ 2 \left( NN^T \right)^{i}  \left( \cbm\ef{i} \right) \otimes \left( \cb\ef{i} \right) \\
&+ 2 \left( N B^T \right)^{ij'}  \left( \cbm\ef{i} \right) \otimes  \left( \cb\es{j'} \right) \\
&+ 2 \left( N B^T \right)^{ij'}   \left( \cbm\es{j'} \right) \otimes   \left( \cb\ef{i} \right) \\
&+ 2 \left( B^{*} B^T \right)^{i'j'}  \left( \cbm\es{i'} \right) \otimes  \left( \cb\es{j'} \right) \big]~. 
\end{aligned}
\label{eq:rhsint1}
\end{equation}
From the form of $\left. Q \right|_{\SO(8)}$ in \eqref{eq:QIJRestricted} we deduce that
\begin{align}
M^{*}M^T + 2 B^{*} B^T &\in \mathfrak{u}(1)^{\nu}~, \label{eq:res2}\\
A^{*}M^T + 2 N B^T &= 0~.\label{eq:res3}
\end{align}
Similarly, we calculate the right-hand side  of the integrability equation \eqref{eq:DQrz} for $Q^{rs}$ as
\begin{equation}
\begin{aligned}
-\Im \left( \Pbz^{Ir} \Pz^{Is} \right) \hat{e}_r \otimes \hat{e}_s &=
-
2\Im\big[ 
 \left( A^\dagger A \right)^{rs} \qs_{r} \otimes \qs_{s} \\
& +  \left( A^{\dagger} N \right)^{rj} \qs_{r} \otimes \left(
\cb\qf{j} \right) \\
&+  \left( N^{T} A \right)^{jr} \left( \cb\qfb{j} \right) \otimes \qs_{r} \\
&+  i\, \left(N^2 \right)^{ii}  \uw\qf{i}\\
&+  \left( M^{\dagger} M \right)^{rs} \qs_{r} \otimes \qs_{s}\\
&+  \left( M^{\dagger} B \right)^{rj} \qs_{r} \otimes \left( 
\cb\qf{j} \right)\\
&+  \left( B^{\dagger} M \right)^{jr} \left( \cbm\qf{j} \right) \otimes \left( \cb\qf{r}  \right)  \\
&+  \left( B^{\dagger} B \right)^{ij} \left( \cbm\qf{i} \right) \otimes \left( \cb\qf{j} \right) \big] \, .
\end{aligned}\label{eq:rhsint2}
\end{equation}
From the form of $\left. Q \right|_{\SO(n)}$ in \eqref{eq:QrsRestricted} we deduce that
\begin{align}
A^{\dagger} N + M^{\dagger} B &= 0~. \label{eq:AdN}
\end{align}

We first show that $B=0$ and that $A$ is further restricted. Multiplying \eqref{eq:AdN} with $M^{*}$ from the left gives
\begin{equation}
\left( M A^T \right)^{*} N + \left( M M^T \right)^{*} B =0~.
\end{equation}
However, the Takagi relations (see \eqref{eq:takagione} and 
\eqref{eq:takagitwo}) are $M A^T = 0$ and that $MM^T = D$ is invertible. 
Hence $B=0$. We also have that $N$ is invertible only in the first $\mu_a$ 
diagonal components. With 
$B=0$, \eqref{eq:AdN} becomes $A^{\dagger}N=0$, hence $A^{ir'} = 0$ for $i= 1,\ldots , \mu_a$. 
Using $B=0$ in \eqref{eq:res3}, one finds that
\begin{equation}
A M^{\dagger} =0 \label{eq:AMd}
\end{equation}
and recall the Takagi condition \eqref{eq:takagitwo} on $A$:
\begin{equation}
A M^T =0 \label{eq:AMt}~.
\end{equation}
We now turn to solving $M$, which will later lead us to $A=0$.

By using $B=0$, \eqref{eq:res2} states that $M^{*} M^T$ is diagonal, which after the Takagi relation $M=\sqrt{D}\Sigma$ becomes
\begin{equation}
\Sigma^{*} \Sigma^T = \text{diagonal}~. \label{eq:sigmastar}
\end{equation}
Recall that we have partly fixed $\Sigma$ in \eqref{eq:sigmasup} by (partly) using $\mathrm{SO}(2\mu)$. The condition \eqref{eq:sigmastar} is satisfied provided that 
the parameters in $\Sigma_{\text{sup}}$ \eqref{eq:sigmasup} satisfy
\begin{align}
 \sinh \eta_1 \sinh\zeta_1 &=0~,\\
 \sinh \eta_2 \sinh\zeta_2 &=0~,\\
 \sinh \eta_3 \sinh\zeta_3 &=0~.
\end{align}
When these hold, $\Sigma$ becomes degenerate and can be reduced to a non-degenerate block form by use of $\SO(2\mu)$. In particular, we can reduce $\Sigma$  to 
be of the form of a $(\nu_r + \nu_c)\times \left( \nu_r + 2 \nu_c \right)$ matrix with values
\begin{equation}
\Sigma = \begin{pmatrix} I_{\nu_r \times \nu_r} & 0 &  0 &0  & 0 & \cdots \\
 0 & \cosh \zeta_1 & i \sinh \zeta_1 & 0 & 0  & \cdots \\
 0 & 0 & 0 & \cosh \zeta_2 & i \sinh \zeta_2 & \cdots \\
 \vdots &&&&&\ddots&\end{pmatrix}~.
\end{equation}
We may also write for the matrix $D$
\begin{equation}
D = \text{diag}\left( D_r^1, \ldots , D_r^{\nu_r}, D_c^1, \ldots, D_c^{\nu_c} \right)~,\label{eq:Drc}
\end{equation}
in which case $M$ is now given by
\begin{equation}
M = \begin{pmatrix} \sqrt{D_r} & 0 &  0 &0  & 0 & \cdots \\
 0 & \sqrt{D_c^1} \cosh \zeta_1 & i\, \sqrt{D_c^1} \sinh \zeta_1 & 0 & 0  & \cdots \\
 0 & 0 & 0 & \sqrt{D_c^2}\cosh \zeta_2 & i\, \sqrt{D_c^2} \sinh \zeta_2 & \cdots \\
 \vdots &&&&&\ddots&\end{pmatrix}~.\label{eq:Msol}
\end{equation}
We may now return to imposing both \eqref{eq:AMd} and \eqref{eq:AMt} with this particular $M$ and we arrive at $A=0$.

Let us summarise what we found: By using the general element $\Pz\in N(\mu,\nu)$ given as \eqref{eq:PzABMN}, the form of $Q_z$ in \eqref{eq:QIJRestricted} and \eqref{eq:QrsRestricted}, 
and the integrability equations for the connection $Q$, the right-hand side of which are in \eqref{eq:rhsint1} and \eqref{eq:rhsint2}, the most general element $\Pz$ up 
to the action of $K$ is shown to be equal to
\begin{equation}\label{eq:PzMN}
P^{Ir} e_I \otimes \hat{e}_r =
N^{i} \left( \cb{\ef}{i} \right) \otimes \left( \cb{\qf}{i} \right)
  + M^{i'r'} \left( \cb{\es}{i'} \right) \otimes \qs_{r'} ~,
\end{equation}
where all the $\mu$ components $N^i$ are positive real and $M^{i'r'}$ is as in \eqref{eq:Msol}. Here, we have set $\mu_b=0$ and dropped any label $a$ from the 
basis $\cb{\efa}{i}$ and $\cb\qfa{i}$. Indeed, with $A=B=0$ the diagonal non-negative $N$ should be strictly positive in order for $\Pz$ to have rank $\mu+\nu$. The 
$(\nu_r+\nu_c)\times(\nu_r+2\nu_c)$ matrix $M$ has a special decomposed block diagonal form according to \eqref{eq:Msol}. We thus say that the element $\Pz$ belongs 
to the refined class $N(\mu,\nu_r,\nu_c)$,
\begin{equation}
\Pz \in  N(\mu,\nu_r,\nu_c)~,
\end{equation}
a filtering of the elements we were thus far considering in $N(\mu,\nu)$.

\subsection{Field equations and Toda blocks}
The block form of $M$ in \eqref{eq:Msol} suggests that we refine the basis we are using. We split the basis as
\begin{align}
\left\{ \cb\es{i'} \right\}_{i'=1}^{\nu} &\longrightarrow \left\{ \cb{\er}{i'} \right\}_{i'=1}^{\nu_r} \oplus  \left\{ \cb{\ec}{i'} \right\}_{i'=1}^{\nu_c} \\
\left\{ \qs_{r'} \right\}_{r'=1}^{2\nu} &\longrightarrow \left\{ \qr_{r'} \right\}_{r'=1}^{\nu_r} \oplus  \left\{ \qc_{r'} \right\}_{r'=1}^{2\nu_c} \oplus \left\{ \text{rest} \right\}~, 
\end{align}
where by ``rest'' we mean those orthonormal basis vectors in $\RR^n$ that do not appear in $\Pz$. That is, we are now using the basis vectors
\begin{align}
& \left\{ \cb\ef{i} \right\}_{i=1}^{\mu} \oplus  \left\{ \cb\er{i} \right\}_{i=1}^{\nu_r} \oplus  \left\{ \cb\ec{i} \right\}_{i=1}^{\nu_c} \\
\intertext{in $\CC^8$ and}
& \left\{ \cb\qf{i} \right\}_{i=1}^{\mu} \oplus  \left\{ \qr_{r'} \right\}_{r'=1}^{\nu_r} \oplus  \left\{ \qc_{r'} \right\}_{r'=1}^{2 \nu_c} 
\end{align}
in $\CC^{n}$. The expansion of an element $\Pz\in N(\mu,\nu_r,\nu_c)$ in this basis is
\begin{equation}
\begin{aligned}
\Pz &= 
\sum_{i=1}^{\mu} N^{i} \left( \cb{\ef}{i} \right) \otimes \left( \cb{\qf}{i} \right)
 \\& + \sum_{i=1}^{\nu_r} \sqrt{D_r^i} \left( \cb{\er}{i} \right) \otimes \qr_{i} \\&
  +\sum_{i=1}^{\nu_c} \left( \cb\ec{i}\right)\otimes \left( \sqrt{D_c^i} \cosh \zeta_i \, \qc_{2i-1} + i\, \sqrt{D_c^{i}} \sinh\zeta_i \, \qc_{2i} \right)~.
\end{aligned}
\end{equation}
As a matrix in the $\{e_I \otimes \hat{e}_r\}$ basis of $\CC^8\otimes\CC^n$, the element $\Pz^{Ir} e_I \otimes e_r$ is block diagonal with $\mu$, $\nu_r$ and $\nu_c$ 
blocks of (respectively) the type
\begin{equation}
\begin{pmatrix}
N^i & i\, N^i \\
i\, N^i & - N^i
\end{pmatrix}~, ~~
\begin{pmatrix}
\sqrt{D_r^i} \\ i \sqrt{D^i_r}
\end{pmatrix}~,~~
\begin{pmatrix}
\sqrt{D_c^i} \cosh\zeta_i& i\, \sqrt{D_c^i} \sinh\zeta_i \\
i\, \sqrt{D_c^i} \cosh\zeta_i & - \sqrt{D_c^i} \sinh\zeta_i
\end{pmatrix}~
\label{eq:todablocks}
\end{equation}
and trailing zeros\footnote{Trailing zeros evidently happen when $\mu+\nu_r+\nu_c=4$ or $2\mu+2\nu_c+\nu_r = n$.}. The $N^i$, $D^i_r$, $D^i_c$ and $\zeta_i$ are 
real functions of $z$ and $\bar{z}$. The right-hand side of the integrability equations \eqref{eq:rhsint1} and \eqref{eq:rhsint2}  become
\begin{equation}
\begin{aligned}
-\Im\left( \Pbz^{Ir} \Pz^{Jr} \right) e_I \otimes e_J&=
- D^i_r \uw\er{i} - D^i_c  \cosh 2\zeta_i \,   \uw\ec{i} \\
& - 2 \left( N^i \right)^2  \uw\ef{i} \label{eq:unuc1}
\end{aligned}
\end{equation}
and
\begin{equation}
-\Im \left( \Pbz^{Ir} \Pz^{Is} \right) \hat{e}_r \otimes \hat{e}_s = -2 \left( N^i \right)^2 \uw\qf{i} -D^i_c\sinh 2\zeta_i \, \uw\qc{i} ~, \label{eq:unuc}
\end{equation}
respectively. With this, we can now reduce the connection to be
\begin{equation}
\begin{aligned}
Q_z &= q^{(1)i}_z \frac12 \left( \uw\ef{i} + \uw\qf{i} \right) + q^{(r)i}_z \uw\er{i}\\& + q^{(c)i}_z\, \uw\ec{i} + \hat{q}^{(c)i}_z \, \uw\qc{i}~.
\end{aligned}
\end{equation}
Any other component can be gauged to zero, because the curvature of the 
connection is non-trivial only in these components. We observe that the blocks 
decouple and we can solve the equations of motion separately for each sector $N^i$, $D^{i}_r$ and $(D^i_c,\zeta_i)$. We call each independent sector a Toda block.

For the $N$ functions we have
\begin{align}
\partial_{\bar{z}} N^i + i q^{(1)i}_{\bar{z}} N^i &=0\, ,\\
\Im\left( \partial_{\bar{z}} q^{(1)i}_z \right) &= - 4 (N^i)^2 \, ,
\end{align}
from which we derive the Liouville equation
\begin{equation}
\partial_z \partial_{\bar{z}} \ln N^i = 4 (N^i)^2~.
\label{eq:liouville1}
\end{equation}
We proceed by writing the equations of motion that involve $D$
\begin{align}
\partial_{\bar{z}} D_r^i + 2 i q^{(r)i}_{\bar{z}} D_r^i & = 0 \, ,\\
\partial_{\bar{z}} D_c^i + 2 i q^{(c)i}_{\bar{z}} D_c^i & = 0 \, , \label{eq:dDc}
\end{align}
with integrability conditions
\begin{align}
\Im\left( \partial_{\bar{z}}  q^{(r)i}_z \right) &= - D_r^i~, \\
\Im\left( \partial_{\bar{z}} q^{(c)i}_z \right) &= - D_c^i  \cosh 2\zeta_i~. \label{eq:toda1}
\end{align}
These give the equations
\begin{align}
\partial_z \partial_{\bar{z}} \ln D_r^i &= 2 D_r^i~, \label{eq:liouville2}\\
\partial_z \partial_{\bar{z}} \ln D_c^i &= 2 D_c^i \cosh 2\zeta_i~.\label{eq:toda2}
\end{align}
Finally, the equations of motion for the $\zeta_i$ can be found from $\mathcal{D}_{\bar{z}} \Sigma = 0$ and are
\begin{subequations}
\begin{align}
\partial_{\bar{z}} \cosh\zeta_i - i\, q^{(c)i}_{\bar{z}} \cosh \zeta_i + i\,  \hat{q}^{(c)i}_{\bar{z}} \sinh\zeta_i &= 0~, 
\\
\partial_{\bar{z}} \sinh\zeta_i - i\, q^{(c)i}_{\bar{z}} \sinh \zeta_i + i\,  \hat{q}^{(c)i}_{\bar{z}} \cosh\zeta_i &= 0~, 
\end{align}\label{eq:toda3}
\end{subequations}
while there is a remaining integrability equation,
\begin{equation}
\Im\left( \partial_{\bar{z}} \hat{q}^{(c)i} \right) \uw\qc{i} = -\frac12 \Im \left( \Sigma^{\dagger} D_c \Sigma^T \right)^{rs} \qs_{r} \wedge \qs_{s}~,
\end{equation}which yields
\begin{equation}
\Im\left( \partial_{\bar{z}} \hat{q}^{(c)i}_z  \right) = - D_c^i \sinh 2 \zeta_i~.\label{eq:toda4}
\end{equation}
The equation \eqref{eq:liouville2} for the $D_r$ is another copy for the Liouville equation, whereas \eqref{eq:dDc}, \eqref{eq:toda1}, \eqref{eq:toda3} and 
\eqref{eq:toda4} describe an $\SU(3)$ Toda system. All equations are integrable 
and indeed solvable. Furthermore, having solved the coset integrability 
equations, $\mathcal{V}^{-1} \grad \mathcal{V}  = P+Q$ can be integrated to 
obtain the coset representative $\mathcal{V}$.

\subsection{General solutions}
We are interested in solving the equations of the Toda blocks in a punctured bounded domain of the complex plane. The solution to the positive Liouville 
modes $N^i$ in \eqref{eq:liouville1} is
\begin{equation}
\left( N^i \right)^2 =\frac{1}{4} \frac{ \partial_z f_i \partial_{\bar{z}} \bar{f}_{i}}{\left(1-|f_i|^2\right)^2} = -\frac14 \partial_z\partial_{\bar{z}} \ln\left(1-|f_i(z)|^2\right)~.
\label{eq:liouvillesol1}
\end{equation}
Similarly, the solution to the positive $D_r^i$ is
\begin{equation}
D_r^i =  \frac{ \partial_z g_i \partial_{\bar{z}} \bar{g}_{i}}{\left(1-|g_i|^2\right)^2}= -\partial_z\partial_{\bar{z}}\ln\left(1-|g_i(z)|^2\right)~.
\label{eq:liouvillesol2}
\end{equation}
The complex functions $f_i(z)$ and $g_i(z)$ are allowed here to be meromorphic. However, only simple poles of the $f_i$ and $g_i$  give smooth solutions in 
\eqref{eq:liouville1} and \eqref{eq:liouville2}. A concrete answer on the nature 
of the singularities can be given by requiring finite coset space charge, 
which we do not analyze here.
The solutions we presented above are a rewriting of Liouville's general solution such that the modes are manifestly positive. As such, the domain of the solution 
should not contain roots of $1-|f_i(z)|^2=0$ or $1-|g_i(z)|^2=0$.

In order to solve the $\SU(3)$ Toda system, we should write it canonically. In 
particular, we should diagonalize the first-order equations for $D_c^i$ and 
$\cosh\zeta_i$. 
Define
\begin{align}
\Phi_1^i & \equiv \frac12 \sqrt{D_c^i} e^{\zeta_i}~, \\
\Phi_2^i & \equiv  \frac12 \sqrt{D_c^i} e^{-\zeta_i}~.
\end{align}
Their gauge-invariant equations of motion are derived from \eqref{eq:dDc} and \eqref{eq:toda3}:
\begin{align}
\partial_{\bar{z}} \Phi_1^i + i\, \hat{q}^{(c)i}_{\bar{z}} \Phi_1^i &= 0~,\\
\partial_{\bar{z}} \Phi_2^i - i\, \hat{q}^{(c)i}_{\bar{z}} \Phi_2^i &= 0~,
\end{align}
while \eqref{eq:toda4} becomes
\begin{equation}
\Im\left( \partial_z \hat{q}^{(c)i}_{\bar{z}}  \right) = 4\left(\Phi_1^i\right)^2 - 4\left(\Phi_2^i\right)^2 ~.\label{eq:toda5}
\end{equation}
The connection $q^{(c)i}_z$ can thus be found from \eqref{eq:toda1} once we solve the above three equations and $\hat{q}^{(c)i}_z$ can be found from \eqref{eq:toda5} 
if we have a solution for the $\Phi_1^i$ and $\Phi_{2}^i$. We gauge fix $(\Phi_1^i,\Phi_{2}^i)$ to be real and positive. We can then eliminate $\hat{q}^{(c)i}_{\bar{z}}$ 
from the three equations:
\begin{align}
\partial_z\partial_{\bar{z}} \ln \Phi_1^i &=2 \left(\Phi_1^i\right)^2 - \left(\Phi_2^i\right)^2 ~,\\
\partial_z\partial_{\bar{z}} \ln \Phi_2^i &=2 \left(\Phi_2^i\right)^2 - \left(\Phi_1^i\right)^2 ~.
\end{align}
This has the form of the $\SU(3)$ Toda field equations
\begin{equation}
\partial_z\partial_{\bar{z}} \ln \Phi_a^i =\sum_b C_{ab} \left( \Phi_a^i\right)^2~,
\end{equation}
where $C_{ab}$ is the $\SU(3)$ Cartan matrix.


A simple form for the general solution of the $\SU(N)$ Toda equation in two-dimensional Minkowski spacetime and with negative coupling constant that is reminiscent of the Liouville 
solution was derived in \cite{dunne_self-dual_1995} from Kostant's solution. We amend that solution for $N=3$, Euclidean signature and positive coupling constant:
\begin{align}
\left(\Phi^i_1\right)^2 &=- \frac12 \partial_z \partial_{\bar{z}} \ln \left(\begin{pmatrix} 1  \\ -\bar{F}_{i}(\bar{z}) \\-\bar{G}_{i}(\bar{z})\end{pmatrix}^T
\begin{pmatrix} 1  \\ {F}_{i}({z}) \\{G}_{i}(z)\end{pmatrix}  \right)
\label{eq:todasol1}~,
\\
\left(\Phi^i_2\right)^2 &=- \frac12 \partial_z \partial_{\bar{z}} \ln \det \left(\begin{pmatrix} 1 & 0 \\ -\bar{F}_{i}(\bar{z}) & - \partial_{\bar{z}}\bar{F}_{i}(\bar{z})\\-\bar{G}_{i}(\bar{z})&-\partial_{\bar{z}}\bar{G}_{i}(\bar{z})\end{pmatrix}^T
\begin{pmatrix} 1 & 0 \\ {F}_{i}({z}) &  \partial_{{z}}{F}_{i}(z)\\{G}_{i}(z)&\partial_{{z}}{G}_{i}({z})\end{pmatrix} \right)~.
\label{eq:todasol2}
\end{align}
Note that we keep the index $i$ of the $\nu_c$ copies. For $G_i(z)=0$ the solution indeed matches Liouville's. Similarly to the Liouville solutions, we may 
allow the functions to be 
meromorphic but restrictions should be applied to ensure that the coset charge is finite.

The full connection $Q_z$ can always be solved from the Toda block solutions of this section. What finally remains is the Einstein equation. Recall that 
its non-trivial component is given by \eqref{eq:PPz} and allows us to solve for the conformal factor in 
the metric given by an exponential of $\rho$. Not only can $\rho$ be solved for each Toda block, the Einstein equation is linear in the block decomposition:
\begin{equation}
 -2 \partial_z\partial_{\bar{z}}\rho = 4 \sum_{i=1}^{\mu}\left(N^i\right)^2
+ 2 \sum_{i=1}^{\nu_r} D_r^i  +2 \sum_{i=1}^{\nu_c} D_c^i \cosh 2 \zeta_i~.
\end{equation}
We have presented the Toda block solutions in the form $\partial_z\partial_{\bar{z}}(\cdots)$ for this reason: the Einstein equation is thence easily integrated. By 
using the explicit solutions \eqref{eq:liouvillesol1}, \eqref{eq:liouvillesol2}, \eqref{eq:todasol1} and \eqref{eq:todasol2}, the solution up to boundary terms is given by
\begin{equation}
\begin{aligned}
\rho &= \frac12 \sum_{i=1}^{\mu}\ln \left( 1-|f_i(z)|^2\right) + 
 \sum_{i=1}^{\nu_r}\ln \left( 1-|g_i(z)|^2\right) 
\\
&+ \sum_{i=1}^{\nu_c}\ln \det \left\{ 
\begin{pmatrix} 1  \\ -\bar{F}_{i}(\bar{z}) \\-\bar{G}_{i}(\bar{z})\end{pmatrix}^T
\begin{pmatrix} 1  \\ {F}_{i}({z}) \\{G}_{i}(z)\end{pmatrix} 
\times \right.
\\
& 
\left.
\quad\begin{pmatrix} 1 & 0 \\ -\bar{F}_{i}(\bar{z}) & - \partial_{\bar{z}}\bar{F}_{i}(\bar{z})\\-\bar{G}_{i}(\bar{z})&-\partial_{\bar{z}}\bar{G}_{i}(\bar{z})\end{pmatrix}^T
\begin{pmatrix} 1 & 0 \\ {F}_{i}({z}) &  \partial_{{z}}{F}_{i}(z)\\{G}_{i}(z)&\partial_{{z}}{G}_{i}({z})\end{pmatrix} 
\right\}~.\end{aligned}
\end{equation}
With this, we have locally found the metric \eqref{eq:ultrastatic}
of the most general timelike supersymmetric solution. The scalar curvature can then be computed from $R=2e^{-2\rho}\partial_z\partial_{\bar{z}}\rho$.

If the meromorphic functions are defined at infinity, in which case there are necessarily singularities elsewhere on the Riemann sphere, the function $\rho$ 
will also have a well-defined limit at infinity. As an example let us look at the simplest solution, namely $N(1,0,0)$, for the metric, but a similar analysis applies 
to the $N(0,1,0)$ solution. The metric is 
of the form
\begin{equation}
\grad s^2= \grad t^2 - \left(1-|f(z)|^2\right) \grad z \grad \bar{z}~.
\end{equation}
If $f(z)$ has a simple pole only at the origin of the Riemann sphere, then $f(z)=a+c/z$. If we further choose $c>0$ and $a=0$, then the metric becomes
\begin{equation}
\grad s^2 = \grad t^2 - \left(1-\frac{c^2}{r^2} \right) \left( \grad r^2 + r^2 \grad \theta^2\right)~.
\end{equation}
We may consider then the exterior of $r=c$ and the metric is manifestly asymptotically flat. We leave a more thorough analysis of the properties of the 
solutions for future work.

We note that the half-BPS solutions of $\SO(8,n)$ with $n>2$ are always given by the Toda blocks $N(1,0,0)$, $N(0,1,0)$ and $N(0,0,1)$. Other examples are given in the 
following:
\begin{example}
The only timelike supersymmetric solution of $\SO(8,1)$ supergravity is $N(0,1,0)$ and it preserves 8 real supersymmetries. The timelike supersymmetric solutions 
of $\SO(8,2)$ supergravity are given by $N(1,0,0)$, $N(0,1,0)$, $N(0,2,0)$ and $N(0,0,1)$. They preserve 8, 8, 4 and 8 real superymmetries respectively . The timelike 
supersymmetric solutions of $\SO(8,3)$ supergravity are given by those of $\SO(8,2)$ and the solutions $N(0,3,0)$, $N(1,1,0)$ and $N(0,1,1)$ that preserve respectively 
2, 4 and 4 real supersymmetries. In each case, we need to fit the Toda blocks \eqref{eq:todablocks} in a $8\times n$ matrix $\Pz^{Ir}$.
\end{example}
\begin{example}
The supersymmetric solution presented in subsection 5.2 of \cite{deger_supersymmetric_2010} is restricted to $n\leq4$. Since it has $Q^{rs}_z=0$, we 
identify it initially with the $N(0,\nu_r,0)$ class. Then $P_z^{Ir}$ is taken proportional to a constant matrix $U^{ir}$, $\Pz^{Ir}\sim \mathbb{P}^{Ii} U^{ir}$, 
with $U^{\dagger}U=I_{n\times n}$ where $\mathbb{P}^{Ii}$ is the null basis $\{\cb{e}{i}\}_{i=1}^4$, see also the discussion around \eqref{eq:degerF} in 
appendix \ref{app:timespin}. The matrix $U$ is thus effectively proportional to the $n\times n$ identity matrix and we identify\footnote{The 
reduced equations of motion of \cite{deger_supersymmetric_2010} match with ours, as they should, provided we identify the fields $\zeta$ and $g$ that 
appear there according to $e^{\rho}\zeta=2\sqrt{D_r}$ and $q_z^{(r)i}=g \zeta^2$, but we solve them essentially differently. Note also that in \cite{deger_supersymmetric_2010} the local coset symmetry $\SO(8)$ breaks into $\SO(2)\times 
 \SO(6)$, whereas here it is broken to $\mu+\nu_r+\nu_c\leq 4$ copies of $\SO(2)$.} the solution with $N(0,n,0)$ and with all Toda fields $D^i_r$ equal, that 
 is $D_r^i=D_r$ for $i=1,\ldots,n$. 
\end{example}
\section{Discussion and comments}
In this article we have classified and explicitly obtained all timelike 
supersymmetric solutions of three-dimensional half-maximal ungauged 
supergravity. 
The structure of the supersymmetric solutions that we found, which are in blocks 
of Liouville and $\SU(3)$ Toda systems, is new and surprisingly simple. With 
the null supersymmetric waves having already been solved in 
\cite{deger_supersymmetric_2010}, all supersymmetric solutions of the ungauged 
$\SO(8,n)$ theory are now known.

It may at first seem surprising  
that the supersymmetric solutions of half-maximal D=3 supergravity  
have only been classified and solved for more than 30 years after its construction  
 in \cite{marcus_three-dimensional_1983}. It is therefore of 
importance to trace our method and pinpoint its novelty. The classification 
under $K^\CC$, as introduced first in \cite{de_boer_classifying_2014}, 
characterizes classes uniquely by two invariants: the rank $\mu+\nu$ of $\Pz$ 
and the rank $\nu$ of $\Pz^{Ir}\Pz^{Jr}$. When we refine this classification 
with respect to the real symmetry of the theory, these two invariants are 
preserved. One could do away with the detour into the indecomposable types of 
the complex group and with some work arrive at the same classes 
$N(\mu,\nu_r,\nu_c)$ provided one uses the same two invariants.

Given the elements of $\Pz$ in these classes, and in particular due to the 
invariant $\nu$, we were naturally led to the use of Takagi's factorization. 
This is a rather uncommon method compared to the spectral or eigenvalue 
decomposition that does not preserve the invariant $\nu$. Furthermore, 
an eigenvalue decomposition or singular value decomposition on $\Pz$ would have 
been impossible unless one enlarged the symmetry of the theory, for instance one 
might consider $\SO(2\nu_r)\rightarrow \SU(2\nu_r)$ or $\SO(n)\rightarrow 
\SU(n)$. The subsequent factorization of $N$ that we employed by using the 
singular value decomposition 
comes as a concession, in the sense that we are manifestly allowed to use it 
after Takagi's factorization. We finally enforced the equations of motion, which 
further reduced the possible form of the coset representative. 

The success of our method seems promising in employing it perhaps to the maximally supersymmetric supergravity. The classification under the complex local symmetry 
was already achieved in \cite{de_boer_classifying_2014} and perhaps finding all 
elements up to the real local symmetry is possible. Certainly though, the 
$\SO(8,n)$ 
representations appearing here are easier to work with. Another interesting extension of our work is to examine interesting monodromies, similar to the reasoning in 
\cite{de_boer_classifying_2014}. One now has the advantage that all solutions are known and requiring single-center monodromies is straightforward.

More generally, one would like to have a more thorough analysis of solutions to the Toda blocks and their geometric analysis. We have already noted that if the 
holomorphic functions are well-defined at infinity, then under some conditions one can conformally compactify the space that is now asymptotically flat. The fundamental 
BPS states, that are the non-smooth single-center solutions, are particularly interesting also for quantum considerations. Even classically, the smooth solutions of the 
theory are the smearing of the fundamental solutions, and a careful analytic study of the Toda blocks is lacking in our work.

Let us briefly comment on one more extension of our work. The success of our method might imply that it has a place in the non-abelian gauged version of the theory \cite{nicolai_n8_2001}, in which a subgroup of the global SO(8,n) is gauged by  
Chern-Simons gauge fields. Although the gauged theory upon 
imposing supersymmetry possesses a corresponding structure, the starting equation \eqref{eq:algsusy} is deformed in such a way that the nilpotency argument can no longer be 
applied. It would be interesting to find a solution to this problem.
\section*{Acknowledgments}
G.M. is supported fully and N.S.D. and {\"O}.S. are supported partially by T{\"U}B{\.I}TAK 
project 113F034.

\appendix

\section{Spin structure of timelike backgrounds}
\label{app:spin}
\subsection{Representations}
For the mostly minus metric 
\begin{equation}
\grad s^2=+\grad t^2 - e^{2\rho(x,y)}\left(  \grad x^2 + \grad y^2 \right)~,
\end{equation}
we use the vielbein $\theta^{\flatc{0}}=\grad t$, $\theta^{\flatc{1}}=e^{\rho}\grad x$ and $\theta^{\flatc2}=e^{\rho}\grad y$. From 
\begin{equation}
\grad \theta^{\flatc{a}} + \omega^{\flatc{a}}{}_{\flatc{b}}\wedge \theta^{\flatc{b}} = 0~,
\end{equation}
we find the non-zero spin coefficient
\begin{equation}
\omega_{\flatc{i}\flatc{j}} = - \partial_j \rho \,  \grad x^i + \partial_i \rho \,  \grad x^j \quad \quad i,j=1,2~.
\end{equation}
The Riemann curvature has non-zero component, in flat coordinates,
\begin{equation}
R_{\flatc{1}\flatc{2}\flatc{1}\flatc{2}}= e^{-2\rho}\partial_i\partial_i\rho~,
\end{equation}the non-zero components of the Ricci tensor is
\begin{equation}
R_{\flatc{i}\flatc{j}}=-e^{-2\rho}\partial_k\partial_k \rho \, \delta_{ij}~
\end{equation}
and the Ricci scalar is $R=2 e^{-2\rho} \partial_k\partial_k \rho$.

By using the complex coordinate $z=x+i\, y$ and $\partial_z=\frac12\left(\partial_x-i\partial_y\right)$, we define the complex components $(\phi_z,\phi_{\bar{z}})$ for a 
one-form with $\phi_t=0$,
\begin{equation}
\phi_x \grad x + \phi_y \grad y = \frac12 \left( \phi_x - i \, \phi_y \right) \grad z + \frac12 \left( \phi_x + i \, \phi_y \right) \grad \bar{z} = \phi_z \grad z + \phi_{\bar{z}} \grad \bar{z}~.
\end{equation}
For two such one-forms, we have
\begin{align}
\phi_{\bar{z}} \chi_z = \frac14 \phi_i \chi_i - \frac14 i \epsilon_{ij} \phi_i \chi_j 
\end{align}
with the antisymmetric $\epsilon_{12}=1$. That is, both the inner product and the wedge of the two one-forms $\phi$ and $\chi$ are recovered from the 
Hermitian product of complex functions $\phi_{\bar{z}} \chi_z$.

We use the three-dimensional gamma matrices
\begin{equation}
\gamma^0 = \begin{pmatrix} 0 & 1 \\ -1 & 0 \end{pmatrix}~,
\gamma^1 = \begin{pmatrix} 0 & 1 \\ 1 & 0 \end{pmatrix}~,
\gamma^2 = \begin{pmatrix} 1 & 0 \\ 0 & -1 \end{pmatrix}~.
\end{equation}
These satisfy the Clifford algebra $\{\gamma^{a},\gamma^{b}\} = -2 \eta^{ab}$, 
they are real and satisfy $\gamma^{012}=\gamma_{012} = 1 $. The Levi-Civita 
connection acting on spinors is
\begin{equation}
\nabla_{\mu} = \partial_{\mu}  -\frac14 \omega_{\mu ab } \gamma^{ab}~.
\end{equation}
For a real chiral spinor, we define complex coefficients as
\begin{equation}
\epsilon = \begin{pmatrix} \epsilon_1 \\ \epsilon_2 \end{pmatrix} 
\Longleftrightarrow \epsilon_z = \epsilon_1 + i\, \epsilon_2~.
\end{equation}
The complex coefficients have the property that Clifford multiplication by a two-dimensional
one-form corresponds to
\begin{equation}
\label{eq:cliffcomp}
\left( \phi_i \gamma^i \epsilon \right)_z = 2\, i\, e^{-\rho} \, \phi_z \, \epsilon_{\bar{z}}~.
\end{equation}
With the image of $\phi_z$ in the Clifford algebra $\phi_z(\gamma^1+i\gamma^2)$, the 
generator $L=-\frac12 \gamma^{12}$ acts on $\phi_z \mapsto -i \phi_z$ and $\epsilon_z \mapsto -\frac12 i \epsilon_z$. 
Equation \eqref{eq:cliffcomp} preserves the action as it should. The Levi-Civita connection becomes $\nabla_t\epsilon_z = \partial_t \epsilon_z$ and
\begin{align}
\nabla_z \epsilon_z &= \left( \partial_z -\frac12 \partial_z\rho\right)\epsilon_z~, \\
\nabla_{\bar{z}} \epsilon_z &= \left( \partial_{\bar{z}} + \frac12 \partial_{\bar{z}}\rho\right)\epsilon_z ~.
\end{align}

If we define the antisymmetric inner product by
\begin{equation}
( \epsilon, \epsilon' ) =\epsilon^T \gamma^0 \epsilon'~,
\end{equation}
then
\begin{align}
\epsilon_{\bar{z}} \epsilon'_z  &= - (\epsilon, \gamma^0 \, \epsilon') + i (\epsilon,\epsilon')\\
\epsilon_{\bar{z}} \epsilon'_{\bar{z}} &= (\epsilon, (\gamma^1 + i \gamma^2 ) \epsilon')~.\label{eq:epszz}
\end{align}
Requiring that two spinors $\epsilon_z$ and $\epsilon_z'$ do not square to a two-dimensional 
one-form is thus equivalent to $\epsilon_{\bar{z}} \epsilon'_{\bar{z}} =0$.

We now introduce our notation for chiral spinors in $S_{8+}$ of $\mathrm{Spin}(8)$. We define the Clifford algebra matrices
\begin{align}
\Gamma^I_{A\dot{A}} \Gamma^J_{B\dot{A}} + \Gamma^J_{A\dot{A}} \Gamma^I_{B\dot{A}} &= +2\delta^{IJ}\delta_{AB} \\
\Gamma^I_{A\dot{A}} \Gamma^J_{A\dot{B}} + \Gamma^J_{A\dot{A}} \Gamma^I_{A\dot{B}} &= +2\delta^{IJ}\delta_{\dot{A}\dot{B}}~
\end{align}
and
\begin{equation}
\Gamma^I = \begin{pmatrix} 0 & \Gamma^I_{B\dot{A}} \\ - \Gamma^I_{A\dot{B}} & 0
\end{pmatrix}
\end{equation}
acting on a non-chiral spinor $(\epsilon^A,\epsilon^{\dot{A}})\mapsto (\epsilon^B,\epsilon^{\dot{B}})$. 
The spin-invariant inner product is the identity matrix and the spin matrices $\Gamma^{IJ}_{AB} = -\Gamma^{[I}_{A\dot{A}}\Gamma^{J]}_{B\dot{A}}$ and  
$\Gamma^{IJ}_{\dot{A}\dot{B}} = -\Gamma^{[I}_{A\dot{A}}\Gamma^{J]}_{A\dot{B}}$ are antisymmetric with respect to the spin inner product and all matrices can be 
chosen to be real. The representation is chiral with $\Gamma^{12345678}=1$ on the real eight-dimensional spinors $\epsilon^A \in S_{8+}$.

By using these conventions we have the spin equivariant map from the square of real chiral spinors into the Clifford algebra
\begin{align}
S^2 S_{8+} &= \Lambda^0 \RR^8 \oplus \Lambda^4_+ \RR^8 \\
\Lambda^2 S_{8+} &= \Lambda^2 \RR^8~.
\end{align}
However, we are interested in complex chiral spinors in $S_{8+}^{\CC}$ that are isomorphic to the real tensor product of spacetime spinors with the spinors in $S_{8+}$,
\begin{equation}
\epsilon^A_z = \epsilon^A_1 + i \epsilon^A_2 \in S_{8+}^{\CC}~.
\end{equation}
In the above equation, $\epsilon^A_{\alpha}$ for $\alpha=1,2$ are the spin coefficients for each $A=1,\cdots, 8$ and our previous conventions apply. 

\subsection{Basis of timelike spinors}
\label{app:timespin}
In finding supersymmetric solutions of a theory, the form of the Killing spinor is usually fixed by using the symmetry of the theory. For our model
this was done in \cite{deger_supersymmetric_2010}. In this work we have instead used the symmetry $K$ to fix $\Pz^{Ir}$. Furthermore, we do not need to explicitly solve for the Killing spinors because the integrability of the gravitino variation is guaranteed in our analysis. Here we present a few complementary details on the Killing spinors once we have fixed $\Pz$ to a certain form. 

We choose a representation of the $\Gamma^{IJ}_{AB}$ matrices, such that the generators of the Cartan subalgebra $\Gamma^{12}_{AB}$,  $\Gamma^{34}_{AB}$,  
$\Gamma^{56}_{AB}$,  $\Gamma^{78}_{AB}$ are block diagonal and proportional to
\begin{equation}
\Gamma_{\sigma_1 \sigma_2 \sigma_3 \sigma_4} = 
\begin{pmatrix}
\sigma_1 \begin{pmatrix} 0 & 1 \\ -1 & 0 \end{pmatrix}
&&&\\
&
\sigma_2 \begin{pmatrix} 0 & 1 \\ -1 & 0 \end{pmatrix}
&&\\
&&
\sigma_3 \begin{pmatrix} 0 & 1 \\ -1 & 0 \end{pmatrix}
&\\
&&&
\sigma_4 \begin{pmatrix} 0 & 1 \\ -1 & 0 \end{pmatrix}
\end{pmatrix}~.
\end{equation}
This follows from Darboux's theorem, or equivalently because a two-form in $\SO(8)$ decomposes into a sum of $\Delta_0(\sigma_i,-\sigma_i)$ and $\Delta^{-}_{0}(0)$ in table \ref{tab:IndecOmn}. Since they need to 
square to $-1$, the $\sigma_i$ are all signs. The choice of which of the commuting $\Gamma^{IJ}_{AB}$ correspond to which of the $\Gamma_{\pm\pm\pm\pm}$ is 
restricted by the following rule: Any two products should trace to zero and the product of the four should be proportional to the identity. 

Up to reflections, there are only two choices for the signs $\sigma_i$. This is to be expected since the two chiral algebras are not isomorphic. We freely 
choose\footnote{The other choice is given by having only one (non-overlapping) signs different for each of the four generators.}
\begin{align}
\Gamma^{12} &= \Gamma_{++++}\\
\Gamma^{34} &= \Gamma_{++--}\\
\Gamma^{56} &= \Gamma_{+-+-}\\
\Gamma^{78} &= \Gamma_{+--+}~.
\end{align}

The condition $i\Gamma^{12}_{AB}\epsilon^B_z = \epsilon^A_z$ requires
\begin{equation}
\epsilon^A_z = \begin{pmatrix} i \epsilon_1 & \epsilon_1 
& i \epsilon_2 & \epsilon_2 & i \epsilon_3 & \epsilon_3
& i \epsilon_4 & \epsilon_4
\end{pmatrix}^T~.
\end{equation}
In fact, we can define a basis $\epsilon_{(\pm\pm\pm)}$ by
\begin{align}
i\Gamma^{12}\epsilon_{(\sigma_1 \sigma_2 \sigma_3)} &= \sigma_1 \epsilon_{(\sigma_1 \sigma_2 \sigma_3)} \\
i\Gamma^{34}\epsilon_{(\sigma_1 \sigma_2 \sigma_3)} &= \sigma_2 \epsilon_{(\sigma_1 \sigma_2 \sigma_3)} \\
i\Gamma^{56}\epsilon_{(\sigma_1 \sigma_2 \sigma_3)} &= \sigma_3 \epsilon_{(\sigma_1 \sigma_2 \sigma_3)} \\
i\Gamma^{78}\epsilon_{(\sigma_1 \sigma_2 \sigma_3)} &= \sigma_1\sigma_2\sigma_3 \epsilon_{(\sigma_1 \sigma_2 \sigma_3)} 
\end{align}
so that
\begin{align}
\epsilon_{(+++)} &=  \begin{pmatrix} i  & 1
&0&0&0&0&0&0
\end{pmatrix}^T
\\
\epsilon_{(++-)} &=  \begin{pmatrix}0&0& i  & 1
&0&0&0&0
\end{pmatrix}^T\\
\epsilon_{(+-+)} &=  \begin{pmatrix} 0&0&0&0& i  &1
&0&0
\end{pmatrix}^T
\\
\epsilon_{(+--)} &=  \begin{pmatrix}0&0&0&0&0&0& i & 1
\end{pmatrix}^T~.
\end{align}
The basis satisfies manifestly the condition $\epsilon^A_z \epsilon_z^A = 0$ so squaring any two timelike spinors, $(\epsilon^A,\gamma^{\mu} \epsilon'
{}^A)$ will be zero for components in the $\mu=1,2$ directions, see \eqref{eq:epszz}.

If a timelike background allows 8 real supersymmetries, the Killing spinors span the timelike spinor basis and we can fix a basis $\epsilon_{(i)z}^A$ such that 
each basis Killing spinor is proportional to one and only one of the $\epsilon^A_{(+\sigma_1\sigma_2)}$. The most general $N=8$ Killing spinor is
\begin{equation}
\epsilon_z^A = \sum_{\sigma_1,\sigma_2=\pm} F^{\sigma_1 \sigma_2} \epsilon^A_{(+\sigma_1 \sigma_2)}~,
\end{equation} 
where $F^{\sigma_1 \sigma_2}$ are functions of $z$.
If a timelike background allows 4 real supersymmetries, this arises from the algebraic supersymmetry equation (\ref{eq:algsusy})
imposing both 
$i\Gamma^{12}\epsilon_z = \epsilon_z$ and $i\Gamma^{34 }\epsilon_z = \epsilon_z$. The Killing spinor is now in the span of $\epsilon_{++\pm}$ and we can 
choose a basis of Killing spinors proportional to $\epsilon_{++\pm}$,
\begin{equation}
\epsilon_z^A = F^{+} \epsilon^A_{(+++)} + F^{-} \epsilon^A_{(++-)},
\end{equation}
where $F^{\pm}$ are functions of $z$.
Finally 2 real supersymmetries mean that there is a single basis Killing spinor proportional to $\epsilon_{+++}$. 

For $\tilde{n}=1,2,4$ complex supersymmetries
\begin{equation}
\epsilon^A_{(i)z} \quad (i)=1,\ldots, \tilde{n}~,
\end{equation}
there is an action of $\mathrm{SU}(\tilde{n})$ on the Killing spinors in the $\mathbb{R}$-linear span of the $\epsilon^A_{(i)z}$, which we now describe. First 
note that the matrix
\begin{equation}
\Delta_{(i)(j)} = \epsilon^A_{(i)z} \epsilon^A_{(j)z}
\end{equation}
is diagonal and constant. We can use a constant $\mathrm{GL}(\tilde{n},\CC)$ action $\Delta\mapsto M \Delta M^T$ in order to make it proportional to the identity. The 
matrices
\begin{equation}
M^{AB}_{(i)(j)} = \epsilon^A_{(i)z} \epsilon^B_{(j)\bar{z}}-  \epsilon^B_{(i)z} \epsilon^A_{(j)\bar{z}}~
\end{equation}
have some interesting properties. Since $(M^{AB}_{(i)(j)})^{*}=-M^{AB}_{(j)(i)}$ we have a map from $\mathfrak{su}(\tilde{n})$ into $\mathfrak{spin}(8)=\Lambda^2 S_{8+}$. 
For a constant $\mathfrak{su}(\tilde{n})$ matrix $(T_{ij})^{\dagger}=-T_{ij}$, the map is
\begin{equation}
T_{ij} \mapsto T^{AB} = T^{ij} M^{AB}_{ij}~.
\end{equation}
Indeed, the right hand side is real and antisymmetric in $A,B$. 
The group $\mathrm{SU}(\tilde{n})$ acts on the Killing spinor basis via spin rotations
\begin{equation}
T^{AB}  \epsilon_{(i) z}^B  = T^{ij} \epsilon_{(j)z}^A~.
\end{equation}
Under the $\mathrm{SU}(\tilde{n})$ we can essentially bring any timelike Killing spinor to be proportional to $\epsilon_{+++}$.

The $\SU(\tilde{n})$ action is important because we can make precise contact with other formulations. For instance, the $\mathfrak{so}(8)$ element $F^{IJ}$, which was 
called $\Omega^{IJ}$ in \cite{deger_supersymmetric_2010}, is given by the square of  $\epsilon_{+++}$:
\begin{equation}\label{eq:degerF}
F = e_1 \wedge e_2 + e_3 \wedge e_4 + e_5 \wedge e_6 + e_{7}\wedge e_8~.
\end{equation}
One can then find the eigenstates of $F^{IJ}$, which were called $\mathbb{P}^{Ii}$ in  \cite{deger_supersymmetric_2010}, and are simply the null basis
\begin{equation}\label{eq:degerP}
e_1+ i\, e_2~,~~e_3+i\,e_4~,~~e_5+i\,e_6~,~~e_7+i\,e_8~.
\end{equation}
We thus understand that the result of \cite{deger_supersymmetric_2010}, that $\Pz$ should be expanded in  $\mathbb{P}^{Ii}$, is equivalent to our complex null 
basis $\cb{e}{i}$ of the main text. 

\section{Direct matrix factorizations of $\Pz$}
\label{app:glnc}
We give here a direct analysis of how the form of $\Pz$ can be fixed if we use $K^{\CC}$ or the maximal symmetry of \eqref{eq:algsusy}, namely $\mathrm{SO}(8)^{\CC}\times \mathrm{GL}(n,\CC)$, without the nilpotency argument. This gives an alternative proof that the real supersymmetries come in powers of two. We begin with two useful lemmas.

\begin{lemma}\label{lem:uso}
The subgroups $\mathrm{U}(1)^\mu$ and $\mathrm{SO}(\mu)$ of 
$\mathrm{GL}(\mu,\CC)$ generate $\mathrm{U}(\mu)$.
\end{lemma}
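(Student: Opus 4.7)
The natural approach is to work at the Lie algebra level and then upgrade to groups via connectedness. I would begin from the real vector-space decomposition
\begin{equation*}
\mathfrak{u}(\mu) = \mathfrak{so}(\mu) \,\oplus\, i\,\mathrm{Sym}(\mu,\RR),
\end{equation*}
where $\mathrm{Sym}(\mu,\RR)$ denotes real symmetric matrices: an anti-Hermitian matrix $X = A + iB$ satisfies $X^{\dagger} = -X$ iff $A$ is real antisymmetric and $B$ is real symmetric. The Lie algebras of $\mathrm{U}(1)^{\mu}$ and $\SO(\mu)$ are the real-diagonal imaginary matrices and $\mathfrak{so}(\mu)$, respectively, and the goal is to show that the Lie subalgebra they generate fills out $\mathfrak{u}(\mu)$.

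The key computation is a single bracket: for $D = \mathrm{diag}(d_1, \ldots, d_{\mu})$ real and $A \in \mathfrak{so}(\mu)$,
\begin{equation*}
[iD, A]_{jk} = i (d_j - d_k)\, A_{jk},
\end{equation*}
which is $i$ times a real symmetric matrix with vanishing diagonal (symmetry follows from the antisymmetry of $A$). Fixing one $D$ with pairwise distinct entries and letting $A$ range over $\mathfrak{so}(\mu)$ sweeps out \emph{all} such symmetric matrices; combined with the diagonal imaginary matrices themselves, this gives all of $i\,\mathrm{Sym}(\mu,\RR)$, and adjoining $\mathfrak{so}(\mu)$ completes $\mathfrak{u}(\mu)$. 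This is the entire core of the argument.

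To pass from the subalgebra to the subgroup statement, I would note that both $\mathrm{U}(1)^{\mu}$ and $\SO(\mu)$ are connected subgroups of the connected Lie group $\mathrm{U}(\mu)$; since their Lie algebras together generate $\mathfrak{u}(\mu)$, the subgroup they generate is connected with full Lie algebra, hence equals $\mathrm{U}(\mu)$. If an explicit decomposition is preferred, one can use the polar form $U = O \exp(iS)$ with $O \in \mathrm{O}(\mu)$, $S \in \mathrm{Sym}(\mu,\RR)$, then diagonalize $S = O'^T D O'$ to obtain $U = O_1 \Lambda O_2$ with $\Lambda \in \mathrm{U}(1)^{\mu}$ and $O_1, O_2 \in \mathrm{O}(\mu)$. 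The main (minor) obstacle is then bookkeeping the determinant signs: if some $O_i$ has determinant $-1$, one absorbs a factor $\mathrm{diag}(-1, 1, \ldots, 1) \in \mathrm{O}(\mu) \cap \mathrm{U}(1)^{\mu}$ into $\Lambda$ to land in $\SO(\mu)$.
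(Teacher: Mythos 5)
Your proposal is correct and follows essentially the same route as the paper: the paper's proof also works at the Lie algebra level, computing the commutator $[L_i,L_{ij}]=i\left(e^i\otimes(e^j)^{\flat}+e^j\otimes(e^i)^{\flat}\right)$ of the $\mathfrak{u}(1)^{\mu}$ generators with those of $\mathfrak{so}(\mu)$ to obtain the imaginary symmetric off-diagonal directions, hence all of $\mathfrak{u}(\mu)$, and then concludes that the generated group is exactly $\mathrm{U}(\mu)$ (not larger, since the Hermitian form is preserved, and not just $\mathrm{SU}(\mu)$, since the overall $\mathrm{U}(1)$ is present). Your connectedness argument and the explicit $U=O_1\Lambda O_2$ factorization are fine additions but do not change the substance.
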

\begin{proof}
Consider a complex orthonormal basis $\{e_i\}_{i=1}^{\mu}$ of $\CC^\mu$ with respect to the Hermitian inner product on $\CC^{\mu}$ and its 
Hermitian dual $\{(e_i)^{\flat}\}_{i=1}^{\mu}$. The generators $L_{ij}$ of $\mathfrak{so}(\mu)$ are
\begin{equation}
L_{ij} = e^i \otimes (e^j)^{\flat} -  e^j \otimes (e^i)^{\flat} \, \qquad i\neq j
\end{equation}
and the $\mathfrak{u}(1)^{\mu}$ generators are 
\begin{equation}
L_i=i\,e^i \otimes (e^i)^{\flat} \qquad \text{(no sum over $i$)}~.
\end{equation}Their commutator is
\begin{equation}\label{eq:gensym}
[L_i,L_{ij}]= i\left( e^i \otimes  (e^j)^{\flat} + e^j \otimes (e^i)^{\flat}\right)~.
\end{equation}
All of the generators of $\mathrm{SU}(\mu)$ are thus generated from the group product $\mathrm{U}(1)^\mu\mathrm{SO}(\mu)$. On the other hand, the group generated 
preserves the Hermitian inner product on $\CC^\mu$ so it cannot be larger than $\mathrm{U}(\mu)$. Finally, we can assert that the group contains the non-special unitary 
$\mathrm{U}(1)$ and is fully $\mathrm{U}(\mu)$.
\end{proof}
\begin{lemma}\label{lem:glnc}
The groups $\left(\RR^+ \times \mathrm{U}(1)\right)^\mu$ and $\mathrm{SO}(\mu)^{\CC}$ in $\mathrm{GL}(\mu,\CC)$ generate $\mathrm{GL}(\mu,\CC)$.
\end{lemma}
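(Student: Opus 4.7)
The plan is to mimic the proof of Lemma \ref{lem:uso}, but complexified throughout. First I would identify the Lie algebra of $(\RR^+\times \mathrm{U}(1))^\mu \subset \mathrm{GL}(\mu,\CC)$: in the orthonormal basis $\{e_i\}_{i=1}^{\mu}$ used in Lemma \ref{lem:uso}, its generators are the diagonal operators $\lambda_i\, e^i \otimes (e^i)^{\flat}$ with $\lambda_i \in \RR \oplus i\RR = \CC$. Together with $\mathfrak{so}(\mu)^{\CC}$, which consists of complex antisymmetric operators spanned by $L_{ij}^{\CC} = e^i\otimes (e^j)^{\flat} - e^j\otimes (e^i)^{\flat}$, these generate a Lie subalgebra of $\mathfrak{gl}(\mu,\CC)$.

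Next I would compute the commutator exactly as in \eqref{eq:gensym} but with a complex coefficient: if $L_i^{\CC} = \lambda\, e^i \otimes (e^i)^{\flat}$ for arbitrary $\lambda \in \CC$, then
\begin{equation}
[L_i^{\CC}, L_{ij}^{\CC}] = \lambda\left( e^i\otimes (e^j)^{\flat} + e^j \otimes (e^i)^{\flat} \right),
\nonumber
\end{equation}
which runs over all complex symmetric off-diagonal generators as $\lambda$ ranges over $\CC$. Combined with the diagonal generators already present and the antisymmetric generators in $\mathfrak{so}(\mu)^{\CC}$, the Lie algebra generated contains every complex symmetric matrix plus every complex antisymmetric matrix, hence all of $\mathfrak{gl}(\mu,\CC)$.

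Finally I would pass from the Lie algebra back to the group. The generated subgroup $H$ sits inside $\mathrm{GL}(\mu,\CC)$, and since $\mathrm{GL}(\mu,\CC)$ is connected (as a complex Lie group) and $H$ has Lie algebra equal to $\mathfrak{gl}(\mu,\CC)$, it follows that $H = \mathrm{GL}(\mu,\CC)$. The only real step that needs a little care is the Lie-algebra bookkeeping in the second paragraph — making sure that, as $\lambda$ ranges over $\CC$ and $(i,j)$ over all off-diagonal pairs, the commutators really span the space of complex symmetric matrices with vanishing diagonal, so that the diagonal factor $\CC^{\mu}$ is needed to fill in the symmetric diagonal; this is the main (and mild) bookkeeping obstacle. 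Everything else follows from connectedness of $\mathrm{GL}(\mu,\CC)$.
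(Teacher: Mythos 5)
Your proposal is correct and follows essentially the same route as the paper: the paper's proof likewise notes that the $\RR^+$ scalings supply $i\,L_i$, so that together with $\mathfrak{so}(\mu)^{\CC}$ the commutators as in \eqref{eq:gensym} generate all complex symmetric and antisymmetric matrices, i.e.\ all of $\mathfrak{gl}(\mu,\CC)$. You merely spell out more explicitly the bookkeeping of diagonal versus off-diagonal symmetric generators and the passage from the Lie algebra to the group via connectedness of $\mathrm{GL}(\mu,\CC)$, which the paper leaves implicit.
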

\begin{proof}
The scaling $\RR^+$ is given by $i \, L_i$, where we use $L_i$ and $L_{ij}$ of lemma \ref{lem:uso}. Clearly, all matrices in $\mathrm{GL}(\mu,\CC)$ can now be 
generated (symmetric and antisymmetric, real and imaginary) similarly to \eqref{eq:gensym}.
\end{proof}

Assume an element $\Pz$ that admits some supersymmetry according to the algebraic supersymmetry equation (\ref{eq:algsusy}). 
Multiplying the equation with $\Pz^{Is}\Gamma^I$ and symmetrizing over $(r,s)$ we derive 
\begin{equation}
\Pz^{Ir}\Pz^{Is} = 0~.
\end{equation}
By using $\mathrm{SO}(8)^{\CC}$, we fix it as
\begin{equation}\label{eq:Ptildeso8c}
P^{Ir}_z e_I \otimes \hat{e}_r \conjugate{\mathrm{SO}(8)^{\CC}} \sum_{i=1}^4\tilde{P}^{ir}
\left(e_{2i-1}+i\, \text{sign}_i \, e_{2i}\right) \otimes \hat{e}_r ~,
\end{equation}
with the group 
\begin{equation}
\left(\RR^+ \times \mathrm{U}(1)\right)^4 \cdot
\mathrm{SO}(4)^{\CC} = \mathrm{GL}(4,\CC)~ \label{eq:rscalinggl}
\end{equation}
acting on the left. The equality in \eqref{eq:rscalinggl} follows from lemma \ref{lem:glnc}. The factors of $\mathrm{U}(1)$ come from the rotation $e_{2i-1}+i\,e_{2i}\mapsto i\left(e_{2i-1}+i\,e_{2i}\right)$, the $\SO(4)^{\CC}$ is manifestly a 
subgroup of $\SO(8)^{\CC}$, and the scalings $\RR^+$ are the complex $\SO(2)^{\CC}$ rotations that are not in $\SO(2)$. 

The element $\Pz$ is represented by a $4\times n$ matrix $\tilde{P}^{ir}$ in \eqref{eq:Ptildeso8c} and inherited from $\SO(8)^{\CC}\times \SO(n)^{\CC}$ is the group  
$\mathrm{GL}(4,\CC)\times \SO(n)^{\CC}$ acting on $\tilde{P}^{ir}$ by left/right multiplication. Similarly, the group inherited from 
$\SO(8)^{\CC}\times \mathrm{GL}(n,\CC)$ acting on $\tilde{P}^{ir}$ is $\mathrm{GL}(4,\CC)\times \mathrm{GL}(n,\CC)$. The rank of $\tilde{P}^{ir}$ is not necessarily 
full. 

The action of $\mathrm{GL}(4,\CC)$ and permutations in $\mathrm{SO}(n)^{\CC}$ can be used to rotate the $\tilde{P}^{ir}$
to one of the following forms
\begin{multline}
\begin{pmatrix}
0\cdots 0 \\
0\cdots 0\\
0\cdots 0\\
0\cdots 0
\end{pmatrix}~, 
\begin{pmatrix}
1 & *\cdots * \\
0 & 0\cdots 0\\
0 & 0\cdots 0\\
0 & 0\cdots 0
\end{pmatrix}~,
\begin{pmatrix}
1 & 0 & *\cdots * \\
0 & 1 & *\cdots *\\
0 & 0 & 0\cdots 0 \\
0 & 0 & 0\cdots 0
\end{pmatrix}~,
\begin{pmatrix}
1 & 0 & 0& *\cdots *\\
0 & 1 & 0& *\cdots *\\
0 & 0 & 1& *\cdots *\\
0 & 0 & 0& 0\cdots 0
\end{pmatrix}~,\\
\begin{pmatrix}
1 & 0 & 0& 0 &*\cdots *\\
0 & 1 & 0& 0 &*\cdots *\\
0 & 0 & 1& 0 &* \cdots *\\
0 & 0 & 0& 1 &* \cdots*
\end{pmatrix}
\label{eq:Glfixedmatrices}
\end{multline}
for respectively $16,8,4,2,2$ real supersymmetries. Stars signify here
possibly non-zero elements. The reason why the upper-left
square block is the identity matrix rather than
a triangular matrix is because of the action of the stabilizers of
one, two, three and four complex vectors in $\mathrm{GL}(4,\CC)$:
\begin{equation}
\begin{aligned}
\mathrm{GL}(4,\CC) \supset & \mathrm{GL}(3,\CC) \ltimes \RR^3 \\
\supset& \mathrm{GL}(2,\CC) \ltimes \left( \RR^2 \oplus \RR^2 \right) \\
\supset& \mathrm{GL}(1,\CC) \ltimes \left( \RR \oplus \RR \oplus \RR
\right)\\
\supset & 1~.
\end{aligned}
\end{equation}
In particular, the $p$ copies of $\RR^{4-p}$ (for $p=1,2,3$) are translations that set
the first $p$ components of the next column to be fixed (the
$(p+1)$'th column) equal to zero. 

The matrices in \eqref{eq:Glfixedmatrices} describe the QR decomposition of $\tilde{P}^{ir}$ with respect to $\mathrm{GL}(4,\CC)$ acting on the left. On the other hand, the 
group $\SO(8)^{\CC}\times \mathrm{GL}(n,\CC)$ is such that all stars in \eqref{eq:Glfixedmatrices} may be fixed to zero. The classification under  
$\SO(8)^{\CC}\times \mathrm{GL}(n,\CC)$, the maximal symmetry of the supersymmetry equation, thus describes finite classes with each class representing 
uniquely a certain fraction of supersymmetry. Whichever of these two groups we use, or indeed if we use the factorization of $\Pz$ under $K$ with a similar method to 
the above, the real supersymmetry can be shown to come in powers of two.
\section{Supersymmetry in the Zariski topology}
\label{app:zariski}
We review here the result 2(a) of \cite{de_boer_classifying_2014}. The proof is identical with minor changes. More precisely, we prove the statement ``if an orbit $O$ of an element $\Pz$ admits (at least) $\tilde{n}$ supersymmetries, then elements in the closure $\bar{O}$ in the Zariski topology preserve at least $\tilde{n}$ supersymmetries''. The Zariski topology is defined in terms of its closed sets. A closed set in the Zariski topology on a space $\mathfrak{M}$ (in this case $\mathfrak{M}=\mathfrak{g}^{\CC}$ is a complex Lie algebra) is by definition the solution space of a finite set of homogeneous polynomial equations on $\mathfrak{M}$. That is, $V(S)$ is a closed set if
\[ V(S) = \{ X \in \mathfrak{M} : \, f(X)=0 \,\,\, \forall f \in S \}~,\]
where $S$ is an ideal of homogeneous polynomials on $\mathfrak{M}$.

Let us first consider all elements $\Pz$ that preserve at least $\tilde{n}$ \emph{complex} supersymmetries. The condition is that
\begin{equation}\label{eq:algsusytop}
\Pz^{Ir}\Gamma^I_{A\dot{A}}\epsilon_{\bar{z}}^A=0~,
\end{equation}
for at least $\tilde{n}$ linearly independent spinors $\epsilon_z^A$. Via the rank-nullity theorem the rank of the $8n \times 8$ matrix $\Pz^{Ir}\Gamma^I_{A\dot{A}}$ (acting on the left of $\epsilon^A_{\bar{z}}$) is 
at most  $(8-\tilde{n})$. All $(9-\tilde{n})\times (9-\tilde{n})$ submatrices of $\Pz^{Ir}\Gamma^I_{A\dot{A}}$ should thus have vanishing determinant. The elements we 
are considering are evidently roots of a finite number of homogeneous polynomial equations. The condition in \eqref{eq:algsusytop} that $P_z$ admits at least $\tilde{n}$ supersymmetries is seen to be equivalent to the condition that $P_z\in C_{\tilde{n}}$, where $C_{\tilde{n}}$ is the solution space of certain homogeneous polynomial equations of degree  $9-\tilde{n}$ in the components $P^{Ir}_z$. In particular, if the root space of the determinant of a certain $(9-\tilde{n})\times (9-\tilde{n})$ submatrix is $D_i$ then $C_{\tilde{n}}=\cap_i D_i$ with the index $i$ running over all such submatrices. Let us add a comment here. If the element $P_z$ preserves precisely $\tilde{n}$ supersymmetries, then $P_z \in C_{\tilde{n}}$ as well as $P_z \in C_{\tilde{n}'}$ for all $\tilde{n}'\leq \tilde{n}$ but $P_z \notin C_{\tilde{n}'}$ for $\tilde{n}'> \tilde{n}$. Indeed, if all  $(9-\tilde{n})\times (9-\tilde{n})$ submatrix determinants of  $\Pz^{Ir}\Gamma^I_{A\dot{A}}$ are zero, the determinants of bigger size submatrices will also be zero. We have the partial ordering
\[ C_{\tilde{m}} \subseteq C_{\tilde{n}} \text{ if and only if }\tilde{m} \geq \tilde{n}~.\]
Additionally, $P_z$ preserves at least $\tilde{n}$ supersymmetries if and only if $P_z \in C_{\tilde{n}}$. In our argument we assume that $\Pz$ preserves at least $\tilde{n}$ supersymmetries but can be made stricter by assuming precisely $\tilde{n}$ supersymmetries. We do not gain any advantage with the stricter assumption. Since the $C_{\tilde{n}}$ are defined in terms of homogeneous polynomials, we may assert that the $C_{\tilde{n}}$ are closed in the Zariski topology on $\mathfrak{g}^{\CC}$.

Let us take the orbit $O$ under conjugacy by $K^{\CC}$ of an element $P_z \in C_{\tilde{n}}$. Since the action of $K^{\CC}$ preserves supersymmetry, we may assert that $O\subseteq C_{\tilde{n}}$. The closure $\bar{O}$ of the orbit $O$ should be a subset of $C_{\tilde{n}}$ as well. Indeed $\bar{O}\subseteq C_{\tilde{n}}$ follows due to closure: any sequence in $O$, which is contained in $C_{\tilde{n}}$, is also a sequence in the already closed set $C_{\tilde{n}}$. Now take any other orbit $O'\subseteq \bar{O}$ of some element $P_z'$. It should evidently satisfy $O'\subseteq \bar{O} \subseteq C_{\tilde{n}}$. Therefore the orbit $O'$ and $P_z' \in O'$ admit at least $\tilde{n}$ supersymmetries. We have proven the original statement of result 2(a) on page 21 of \cite{de_boer_classifying_2014}: ``if an orbit $O$ of an element $\Pz$ admits (at least) $\tilde{n}$ supersymmetries, then elements in the closure $\bar{O}$ in the Zariski topology preserve at least $\tilde{n}$ supersymmetries''. 

Let us comment that different elements of $\bar{O}$ might preserve in principal \emph{different} amounts of supersymmetry, so we refrain from saying ``$\bar{O}$ preserves at least the same amount of supersymmetry as $O$''. Let us also remark the power of turning towards the Zariski topology. It allows us to use the theorem by Kostant and Rallis,  lemma 11 in \cite{kostant_orbits_1971}, that the closure of an orbit of a non-nilpotent element in the Zariski topology necessarily contains a semi-simple element. But as in \cite{de_boer_classifying_2014}, we show in the main text that semi-simple elements preserve no supersymmetry and hence all supersymmetric elements $P_z$ are nilpotent. 

\section{Constructing normal forms}
\label{app:constnormalforms}
\subsection{Kostant-Segikuchi correspondence}
\label{sec:KSC}
Let us define $\theta_C$ the
Cartan involution of a real Lie algebra $\mathfrak{g}$. That is, the algebra decomposes as
\begin{align*}
  \mathfrak{g}&= \mathfrak{k}\oplus \mathfrak{p}\\
  \theta_C&= \left.+1\right.|_{\mathfrak{k}}\oplus
  \left.-1\right.|_{\mathfrak{p}} ~,
\end{align*}
where $\mathfrak{k}$ is the maximally compact subalgebra. We will eventually take
$\mathfrak{g}= \mathfrak{so}(8,n)$ and $\mathfrak{k}=
\mathfrak{so}(8)\oplus \mathfrak{so}(n)$. 

A standard triple $\{E,F,H\}$ is an ordered set of elements in
$\mathfrak{g}$ or $\mathfrak{g}^{\CC}$ (depending on the context) that
generate $\mathfrak{sl}_2$ and with canonical relations 
\begin{equation}
[H,E]=2 E, \quad 
[H,F]=-2 F \text{ and }[E,F]=H.
\end{equation} 
We define a Kostant-Segikuchi triple $\{E,F,H\}$ in
$\mathfrak{g}$ to be a standard triple such that
\begin{equation}
\label{eq:KSTgCondition}
  F = - \theta_C E~.
\end{equation}
From this it also follows that $\theta_C H = -H$. We also define a Kostant-Segikuchi triple 
$\{e,f,h\}$ in
$\mathfrak{g}^{\CC}$ to be a standard triple such that 
\begin{equation}
 \begin{aligned}
 f  &= e^{*} \\
 \theta_C e &= - e~.
 \end{aligned}
\end{equation}
From this it also follows that $\theta_C h = h$. 

The Kostant-Segikuchi correspondence establishes the correspondence
between Kostant-Segikuchi triples in $\mathfrak{g}$ up to the action of $G$ and Kostant-Segikuchi triples in
$\mathfrak{p}^{\CC}$ up to the action of $K^{\CC}$. By an adaptation of the 
Jacobson-Morozov theorem, this is a correspondence between nilpotent
elements in $\mathfrak{g}$ up to the action of $G$ and nilpotent elements in $\mathfrak{p}^{\CC}$ up to the action of $K^{\CC}$:
\begin{equation}
 \text{Nil}[ \mathfrak{g} ] / G  = \text{Nil}[ \mathfrak{p}^{\CC}]/ K^{\CC}~.
\end{equation}
Explicitly, the correspondence is given by
\begin{align}
e&=\frac12(E+F+iH)\\
f&=\frac12(E+F-iH)\\
h&=i(E-F)~.
\end{align}

\subsection{Normal forms in $\mathfrak{g}$}
\label{app:normalforms}
Indecomposable types can be classified as follows: Let
\begin{equation}
A = S + N
\end{equation}
be the Jordan-Chevalley decomposition corresponding to an indecomposable element $A\in L(V,\tau,\sigma)$ and $(A,V)\in \Delta$. The definition of $\Delta$ was given in section \ref{sec:indecomposable}. Let the order of the nilpotent part $N$ be 
$p$, that is $N^{p+1}=0$ in the fundamental representation\footnote{This is not necessarily the same as the order in the adjoint.}. By proposition 
3 in \cite{burgoyne_conjugacy_1977}, it is true that $\text{Ker}N^m=NV$. We define the non-degenerate form $\bar{\tau}$ on $\bar{V}=V/NV$ as 
\begin{equation}
\label{eq:taubardef}
\bar{\tau}(u,v) = \tau(u,N^p v)~,
\end{equation}
which has symmetry $|\tau|(-1)^p$ where $|\tau|$ is the symmetry of $\tau$. By proposition 3 again, the restriction $\bar{A}$ of $S$ acting on $\bar{V}$ is well-defined, 
semisimple and $(\bar{A},\bar{V})\in \bar{\Delta}$ is an indecomposable type of $L(\bar{V},\bar{\tau},\bar{\sigma})$. Proposition 2 in 
\cite{burgoyne_conjugacy_1977} asserts that
\begin{theorem} 
\label{thm:indecDdetermined}
An indecomposable type $\Delta$ is completely determined by $p$ and $\bar{\Delta}$. 
\end{theorem}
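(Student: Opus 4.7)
The plan is to prove the theorem by constructing a canonical model of $\Delta$ from the data $p$ and $\bar\Delta$, and then showing that any $(A,V)\in\Delta$ is equivalent to this model. The underlying intuition is that $V$ is essentially a free cyclic module of length $p+1$ over the nilpotent generator $N$ with coefficients in $\bar V$, and $\tau$ and $\sigma$ are pinned down by their behaviour on the top quotient.

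For the model, I would fix a representative $(\bar V,\bar S,\bar\tau,\bar\sigma)$ of $\bar\Delta$ and set
\begin{equation*}
V_{\text{mod}} = \bar V \otimes_{\CC} \CC[t]/(t^{p+1})~,
\end{equation*}
equipped with $N_{\text{mod}} = \mathrm{id}\otimes(\cdot t)$, $S_{\text{mod}} = \bar S\otimes\mathrm{id}$, $A_{\text{mod}} = S_{\text{mod}}+N_{\text{mod}}$, $\sigma_{\text{mod}} = \bar\sigma\otimes\sigma_t$ with $\sigma_t$ the natural real structure fixing $t$, and
\begin{equation*}
\tau_{\text{mod}}(u\otimes t^j,\, v\otimes t^k) = (-1)^j\,\bar\tau(u,v)\,\delta_{j+k,p}~.
\end{equation*}
A direct check, using the symmetry relation $|\bar\tau|=|\tau|(-1)^p$ following from \eqref{eq:taubardef}, shows that $\tau_{\text{mod}}$ has the correct symmetry, that $A_{\text{mod}} \in L(V_{\text{mod}},\tau_{\text{mod}},\sigma_{\text{mod}})$, that its nilpotent part has order exactly $p$, and that the induced type on $V_{\text{mod}}/N_{\text{mod}}V_{\text{mod}} \cong \bar V$ is $\bar\Delta$.

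For the uniqueness, given $(V,A,\tau,\sigma)\in\Delta$ I would first choose a subspace $W\subset V$ which is simultaneously a section of $V\to\bar V$, stable under $S$, and compatible with $\sigma$. This is possible by decomposing $V$ into $S$-weight spaces (each $N$- and $\sigma$-stable since $S$, $N$, and $\sigma$ mutually commute), selecting complements of $NV$ inside each weight space, and averaging against $\sigma$. Since $V$ is an indecomposable type with nilpotent order $p$, all Jordan blocks of $N$ share the common size $p+1$, yielding the direct-sum decomposition
\begin{equation*}
V = W\oplus NW\oplus\cdots\oplus N^p W
\end{equation*}
on which $S$ and $\sigma$ act diagonally. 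The Lie-algebra invariance $\tau(Nx,y)+\tau(x,Ny)=0$ then forces $\tau(N^ju,N^kv) = (-1)^j\tau(u,N^{j+k}v)$ for $u,v\in W$, which automatically vanishes when $j+k>p$ and equals $(-1)^j\bar\tau(u,v)$ when $j+k=p$.

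The main obstacle will be arranging that $\tau(u,N^m v) = 0$ on $W\times W$ for every $m<p$, which is exactly what is needed to match $\tau_{\text{mod}}$. I would handle this by a descending induction on $m$ from $m=p-1$ down to $m=0$: at each stage the defect is a bilinear form on $\bar V$ of a prescribed $\bar S$-weight and $\bar\sigma$-type, which can be cancelled by replacing basis vectors $w\in W$ by $w+N w_1+N^2 w_2+\cdots$ with correction terms $w_i\in W$ determined by a linear equation whose solvability is guaranteed by the non-degeneracy of $\bar\tau$. Working weight-by-weight and averaging against $\sigma$ preserves the invariance properties of $W$ under $S$ and $\sigma$ at each step. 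Once $W$ is so adjusted, the map $\phi:V_{\text{mod}}\to V$ defined by $\phi(w\otimes t^k) = N^k\phi_0(w)$, with $\phi_0:\bar V\xrightarrow{\sim}W$ realizing the equivalence of $\bar\Delta$, intertwines $N$, $S$, $\sigma$, and $\tau$ by construction, yielding the sought equivalence $\Delta\cong\Delta_{\text{mod}}(p,\bar\Delta)$.
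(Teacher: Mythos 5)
Your route is in substance the one the paper relies on: the paper does not prove this statement itself, it quotes it as Proposition~2 of \cite{burgoyne_conjugacy_1977} and records the structural input as Theorem~\ref{thm:indecDConstruction}, namely the existence of a complement $W$ of $NV$ with $V=W\oplus NW\oplus\cdots\oplus N^pW$ and $\tau(u,N^mv)=0$ for $u,v\in W$, $0\leq m<p$, after which $\tau$ is the ladder pairing \eqref{eq:taured} fixed entirely by $p$ and $\bar{\tau}$. You do the same thing, except that you phrase the conclusion as an explicit isomorphism with the model $\bar{V}\otimes\CC[t]/(t^{p+1})$ and you sketch the existence proof of the adjusted $W$ (the descending induction) rather than citing it. That packaging is sound, and the induction does work: at stage $m$ a single correction $w\mapsto w+N^{p-m}Cw$ suffices, and it does not disturb the levels $m'>m$ already cleared nor the top pairing $\tau(\cdot,N^p\cdot)$.

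Two steps deserve more than assertion. First, ``since $V$ is an indecomposable type with nilpotent order $p$, all Jordan blocks of $N$ share the common size $p+1$'' is exactly the statement $\ker N^p=NV$, which the paper quotes separately as Proposition~3 of \cite{burgoyne_conjugacy_1977}; it is not a formal consequence of indecomposability, because a decomposition of types must be $\tau$-orthogonal and $\sigma$-invariant, so unequal block sizes do not obviously split the type. It needs its own (standard) argument: choose an $S$- and $\sigma$-stable complement $W_0$ of $\ker N^p$, observe that $\tau(\cdot,N^p\cdot)$ is nondegenerate on $W_0$, and show that $U=W_0\oplus NW_0\oplus\cdots\oplus N^pW_0$ is a nondegenerate, $A$- and $\sigma$-stable summand, so indecomposability forces $U=V$. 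Second, in your induction step the nondegeneracy of $\bar{\tau}$ only converts the defect into an operator equation $C+(-1)^{p-m}C^{\dagger}=-X$, with $\dagger$ the $\bar{\tau}$-transpose; solvability additionally requires the symmetry $X^{\dagger}=(-1)^{p-m}X$, which does hold (it follows from $|\bar{\tau}|=|\tau|(-1)^p$ and $\tau(Nx,y)=-\tau(x,Ny)$), but it is a check you must make, not a consequence of nondegeneracy alone; one should also note that the canonical solution $C$ commutes with $\bar{S}$ and $\bar{\sigma}$, so the corrected $W$ stays $S$- and $\sigma$-stable. A small slip elsewhere: individual $S$-eigenspaces are not $\sigma$-stable when the eigenvalue is not real, since $\sigma$ maps the $\lambda$-eigenspace to the $\bar{\lambda}$-eigenspace; your averaging works once conjugate eigenvalues are paired, which is all that is needed here.
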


According to theorem \ref{thm:indecDdetermined}, 
in order to classify indecomposable types $\Delta$, what remains is to classify the indecomposable 
semisimple types $\bar{\Delta}$ of certain linear algebras $L(\bar{V},\bar{\tau},\bar{\sigma})$. 
The semisimple types are labeled by their eigenvalues $(\zeta,\cdots)$ on $\bar{V}$. We refer to 
\cite{burgoyne_conjugacy_1977} for further details and for a proof of the multiplicity of the eigenvalues. For the problem at hand, we have listed the indecomposable 
types of $\mathrm{O}(m,n)$ in table \ref{tab:IndecOmn}. In particular, we are interested in the nilpotent elements given in the last two rows of the table. 

Although \cite{burgoyne_conjugacy_1977} does not list explicit normal forms, these can be easily constructed based on the proof of 
proposition 2 in \cite{burgoyne_conjugacy_1977} that extends lemma 2 in \cite{burgoyne_conjugacy_1977}:
\begin{theorem}
\label{thm:indecDConstruction}
Suppose $A\in L(V,\tau,\sigma)$ is such that its nilpotent part $N$ has order $p$ 
and $NV = \text{ker} N^p$. Then there exists an $S$-invariant and $\sigma$-invariant subspace $W$ such that 
\begin{equation}
V=W\oplus NW \oplus \cdots \oplus N^p W
\end{equation} 
is a sum of mutually disjoint subspaces with the following properties
\begin{itemize}
\item $W=\bar{V}$ as a complement of $NV$ in $V$,
\item $\dim N^i H = \dim H$ for $0\leq i \leq p$,
\item $\tau(u,N^i v)=0$ for $u,v\in W$ and $0\leq i < p$.
\end{itemize}
\end{theorem}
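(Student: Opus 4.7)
The plan is to construct $W$ in three refinement stages, starting from an arbitrary $S$-invariant and $\sigma$-invariant complement to $NV$ and then adjusting it to enforce the remaining two properties. The first stage uses the Jordan--Chevalley data: because $[S,N]=0$, the subspace $NV$ is $S$-invariant, and since $A\in L(V,\tau,\sigma)$ the Jordan--Chevalley decomposition commutes with $\sigma$, so $NV$ is $\sigma$-invariant as well. Semisimplicity of $S$ then yields an $S$-invariant, $\sigma$-invariant complement $W_0$ with $W_0=\bar V$, which already gives the first bullet.

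For the second bullet, I would exploit the hypothesis $NV=\ker N^p$. Intersecting with $W_0$ gives $W_0\cap\ker N\subseteq W_0\cap\ker N^p=W_0\cap NV=0$, so $N|_{W_0}$ is injective; inductively $N^i|_{W_0}$ is injective for $0\le i\le p$, giving $\dim N^i W_0=\dim W_0$. The directness of the sum $W_0\oplus NW_0\oplus\cdots\oplus N^pW_0$ follows by descending induction: any relation $\sum_{i=0}^p N^i w_i=0$ with $w_i\in W_0$ yields $N^p w_0=0$ on applying $N^p$, hence $w_0\in\ker N^p\cap W_0=0$, and one iterates. A dimension count against the filtration $V\supset NV\supset N^2V\supset\cdots$ then shows the sum fills $V$.

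The delicate stage is enforcing $\tau(u,N^i v)=0$ for $u,v\in W$ and $0\le i<p$, since the initial $W_0$ generally fails this. My approach is to deform $W_0$ to $W=\{w+\phi(w):w\in W_0\}$ for a suitable linear $\phi:W_0\to NV$ that is $S$-equivariant and $\sigma$-equivariant. The relations $[S,N]=0$ and the fact that $S$ preserves $\tau$ imply the bilinear forms $B_i(u,v):=\tau(u,N^i v)$ on $W_0$ all decompose according to the $S$-eigenspace decomposition of $W_0$, so the problem reduces to solving in each eigenspace. One proceeds by descending induction on $i$ from $p-1$ down to $0$: at each step the non-degeneracy of $\bar\tau$ (which encodes the equivalent condition $NV=\ker N^p$) provides enough freedom within $NV$ to absorb the unwanted $B_i$, while the already-achieved conditions at higher $i$ are automatically preserved because such modifications feed into the form $B_i$ only through strictly higher powers of $N$, which vanish in the relevant range.

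The main obstacle is the last stage: verifying that the deformation $\phi$ can be chosen simultaneously $S$-equivariant, $\sigma$-equivariant, and consistent with all the vanishing conditions already imposed at higher $i$. This amounts to showing that the obstruction cochain lives in a space that is exhausted by the image of a certain pairing with $\bar\tau$; the needed surjectivity is exactly the content of the non-degeneracy of $\bar\tau$ on $\bar V$, which is guaranteed by the hypothesis $NV=\ker N^p$ together with the definition \eqref{eq:taubardef}. Once this compatibility is checked eigenspace by eigenspace of $S$ and separated into $\sigma$-real and $\sigma$-imaginary parts, the three listed properties hold for the resulting $W$.
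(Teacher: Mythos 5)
Your first two bullets are handled correctly: since $NV=\ker N^p$ forces every Jordan block of $N$ to have length $p+1$, $N^i$ is injective on any complement of $NV$ for $i\le p$ (directly, because $\ker N^i\subseteq\ker N^p$), your descending induction gives directness, and the dimension count closes it; the existence of an $S$- and $\sigma$-invariant complement $W_0$ follows as you say from uniqueness of the Jordan--Chevalley decomposition and semisimplicity of $S$. For orientation: the paper does not prove this statement at all, it quotes it from the proof of proposition 2 (extending lemma 2) of \cite{burgoyne_conjugacy_1977}, and your graph-deformation strategy is essentially that proof, so the plan is the right one.

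The genuine gap is exactly where you flag it, and it is not closed by invoking non-degeneracy of $\bar\tau$ alone; two points must be made precise. First, your claim that the conditions already achieved at higher index are ``automatically preserved'' fails for a general correction $\phi:W_0\to NV$: a component of $\phi$ in $N^jW_0$ shifts $B_{i'}(u,v)=\tau(u,N^{i'}v)$ by terms of the form $B_{i'+j}$, which need not vanish when $i'+j\le p$. To reach $B_i$ without touching $B_{i'}$ for $i'>i$ you must take the correction at step $i$ with values precisely in the graded piece $N^{p-i}W_0$, i.e.\ $w\mapsto w+N^{p-i}\psi(w)$; then $B_{i'}$ with $i'>i$ is shifted only by forms of index $>p$, the quadratic term involves $B_{2p-i}=0$, and $B_i$ changes exactly by $\bar\tau(u,\psi v)+(-1)^{p-i}\bar\tau(\psi u,v)$. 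Second, solvability of $\bar\tau(u,\psi v)+(-1)^{p-i}\bar\tau(\psi u,v)=-B_i(u,v)$ is not mere surjectivity from non-degeneracy: the left-hand side only produces forms of symmetry $|\tau|(-1)^i$, so you must check that $B_i$ has exactly this symmetry (it does, since $\tau(Nx,y)=-\tau(x,Ny)$), after which $\psi=\tfrac12\theta$ works, where $\theta$ is defined by $\bar\tau(u,\theta v)=-B_i(u,v)$ and satisfies $\theta^{\ast}=(-1)^{p-i}\theta$ for the $\bar\tau$-adjoint. Because this $\theta$ is built canonically from the $S$-invariant, $\sigma$-compatible data $B_i$ and $\bar\tau$ (defined in \eqref{eq:taubardef}), it commutes with $S$ and with the antilinear $\sigma$, which settles the equivariance you left open. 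With these points written out your induction closes; as written, the decisive step is asserted rather than proved.
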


The conditions of theorem \ref{thm:indecDConstruction} are met for elements of an indecomposable type. 
It follows from theorem \ref{thm:indecDConstruction} that for two elements 
$u=\sum_i N^i u_i \in V$ and $v=\sum_iN^i v_i\in V$, their inner product is determined by that on $W$
\begin{equation}
\label{eq:taured}
\tau(u,v) = \sum_{i+j=p}(-1)^i \bar{\tau}(u_i,v_j)~.
\end{equation}
We remind the reader that the symmetry of $\bar{\tau}$ now also depends on $p \mod 2$. Assume then that we have identified the space $W\subset V$ and that we specify 
the irreducible type $\bar{\Delta}$ of $\bar{A}$ that is $S$ acting on $W=\bar{V}$ as an operator in $L(W,\bar{\tau},\sigma)$. The normal form of $A=S+N$ acting on $V$ can be 
constructed as follows:
\begin{enumerate}
\item the operator $N$ is the ladder operation on  $V=\oplus_i N^i W$. It is left undetermined up to scalings of each ladder-step operation.
\item The normal form of $S$ is given by extending\footnote{Recall that $S$ and $N$ commute by the Jordan-Chevalley theorem.} the action of $\bar{A}$ from $W$ to $V$.
\end{enumerate}   
This is essentially the method we will use to write normal forms. That is, we identify $W$ in an explicit basis $V$ and construct $S$ and $N$ accordingly.

We define appropriately a basis of $V$ with the requisite signature of table \ref{tab:IndecOmn} and inspect the left-hand side of \eqref{eq:taured}. This allows us to 
identify the subspace $W$ such
that $\tau(\phi (\cdot) , \phi (\cdot) )$ is non-zero only for 
\begin{equation}
\tau(N^i u, N^{p-i}v)\text{ with }u,v \in W~.
\end{equation} 
It is trivial to write the nilpotent part as the ladder operators
$N^iW\rightarrow N^{i+1}W$ and the semisimple part as the operator
with the requisite eigenvalues on $V$. We are interested in nilpotent elements so $S=0$ and the dimension of $W$ is given by the multiplicity of zeros in the notation 
of table \ref{tab:IndecOmn}: One and two for $\Delta^{\pm}_p(0)$ and $\Delta_p(0,0)$, respectively. 
In the following two subsections, we give normal forms for these elements 
only. Nevertheless, one can easily use this method to find a normal form for any type in the table.

\subsubsection{Type $\Delta_p^{\pm}(0)$}
\label{app:typeeing}
We consider $\Delta_p^{\pm}(0)$ with $p\in2\mathbb{N}$  on a vector space $V$ of signature
$\pm(-1)^{\frac{p}2}(\frac{p}2+1,\frac{p}2)$. It is generated by $v$,
$\sigma v=\pm v$, $\bar{\tau}(v,v)=1$. Depending on the sign of the
real structure, we take $\tilde{v}=v$ or $\tilde{v}=i v$ such that it
is real and $\tau(\tilde{v},\tilde{v})=\pm 1$.

All elements of $V$ are
of the form $N^i\tilde{v}$, $i=0,1,\cdots,p$. The inner product is
\begin{align*}
\tau(N^k\tilde{v},N^{p-k}\tilde{v})&=\pm(-1)^k&&k=0,1,\cdots,\frac{p}{2}-1\\
  \tau(N^{\frac{p}2}\tilde{v},N^{\frac{p}2}\tilde{v})&=\pm(-1)^{
    \frac{p}2} ~.&&
\end{align*}
We choose the null basis
\begin{equation}
\{\eta^k,\tilde{\eta}^k,\theta\}
\end{equation}
with $\eta^k=N^k\tilde{v}$,
$\tilde{\eta}^k=\pm(-1)^{k}N^{p-k}\tilde{v}$,
$\theta=N^{\frac{p}2}\tilde{v}$. The inner product is thus non-zero on
\begin{align}
\tau( \eta^i, \tilde{\eta}^j) &= \delta^{ij}\\
\tau( \theta, \theta) &=  \pm(-1)^{\frac{p}2}~.
\end{align}
One can construct $N$ using the fact that is a ladder operator
\begin{equation}
\label{eq:normaltypeeing} 
 N = \sum_{k=0}^{\frac{p}2-2} a_k \eta^{k+1} \wedge \tilde{\eta}^k
  +b\, \theta\wedge \tilde{\eta}^{\frac{p}2-1}~.
\end{equation}
The coefficient can be scaled freely. We will later choose appropriately so that $N$ belongs to a KS triple.

\subsubsection{Type $\Delta_p(0,0)$}
The case is identical to case $\Delta_p(\zeta,-\zeta)$ with $S=0$. We consider $\Delta_p(0,0)$ with $p\in2\mathbb{N}+1$ on a vector space $V$ of signature $(p+1,p+1)$. It is
generated by the highest-weight vectors $v$ and $w$, $\sigma v=i\,w$, $\sigma w = i
\, v$ and $\bar{\tau}(v,w)=1$. We take the real $\tilde{v}=(v+\sigma
v)/\sqrt{2}$ and $\tilde{w}=i(v-\sigma v)/\sqrt{2}$ with
$\bar{\tau}(\tilde{v},\tilde{w})=1$.

All elements of $V$ are of the form $N^i\tilde{v}$ and
$N^i\tilde{w}$, $i=0,1,\cdots,p$. The inner product is
\begin{eqnarray*}
\left.
\begin{array}{ll}
\tau(N^k\tilde{v},N^{p-k}\tilde{w}) & = (-1)^k \\
\tau(N^k\tilde{w},N^{p-k}\tilde{v}) & =(-1)^{k+1}
\end{array}
\right\} \quad k=0,1,\cdots,\frac{p-1}{2} \,.
\end{eqnarray*}
We choose the null basis
\[\{\eta^k,\tilde{\eta}^k,\theta^k,\tilde{\theta}^k\} \,. \] With
$\eta^k=N^k\tilde{v}$, $\tilde{\eta}^k=(-1)^{k}N^{p-k}\tilde{w}$,
$\theta^k=N^k\tilde{w}$,
$\tilde{\theta}^{k}=(-1)^{k+1}N^{p-k}\tilde{v}$. The non-zero inner product is
\begin{align}
\tau( \eta^i, \tilde{\eta}^j) & = \delta^{ij}~,\\
\tau( \theta^i, \tilde{\theta}^j ) &= \delta^{ij}~.
\end{align}
$N$ is the ladder operator
\begin{equation*}
  N = \sum_{k=0}^{\frac{p-1}2-1} \left(a_k\, \eta^{k+1} \wedge \tilde{\eta}^k + 
  b_k\, \theta^{k+1}\wedge \tilde{\theta}^k \right)+ c\, \tilde{\eta}^{\frac{p-1}2}\wedge \tilde{\theta}^{\frac{p-1}2}~,
\end{equation*}
where the $a_k$, $b_k$ and $c$ are constants that can be scaled freely.

\subsection{Kostant-Segikuchi triples in $\mathfrak{g}$}
\label{app:KST}
In order to construct normal forms for elements in $\mathfrak{p}^{\CC}$ up to the action of $\mathfrak{k}^{\CC}$, we will use the Kostant-Segikuchi correspondence. We are thus interested in Kostant-Segikuchi triples in $\mathfrak{so}(8,n)$. That is, we need to construct triples of the form $\{E,F,H\}$ that satisfy the condition $F = - \theta_C E$.

The construction in appendix \ref{app:normalforms} used the simplest coefficients for a nilpotent part of an element $N$. By using boosts, we amend the normal form of a nilpotent 
 element $E$ such that $E$, $F=-\theta_C E$ and
\begin{equation}
H= [E,F]=-[N,\theta_C N]
\end{equation}
indeed satisfy the standard $\mathfrak{sl}_2$ relations. This can always be done and it fixes the scalings of the ladder operators
completely. 

We give the normal form of Kostant-Segikuchi triples here and using the Kostant-Segikuchi correspondence we give the corresponding nilpotent element in $\mathfrak{p}^{\CC}$ 
in the subsection \ref{app:norminpc}. There are two nilpotent complex types in $\mathfrak{p}^{\CC}$: One inherited from the indecomposable type $\Delta_p^{\pm}(0)$ of signature 
$\pm(-1)^{\frac{p}2}(\frac{p}2+1,\frac{p}2)$ with $p$ even, and one from the indecomposable $\Delta_p(0,0)$ of signature $(p+1,p+1)$ with $p$ odd. 

\subsubsection{Type  $\Delta_p^{\pm}(0)$}
Recall that type $\Delta_p^{\pm}(0)$ with $p\in2\mathbb{N}$ is of signature $\pm(-1)^{\frac{p}2}(\frac{p}2+1,\frac{p}2)$
in Table \eqref{tab:IndecOmn}. We use the
basis $\{\eta^k,\tilde{\eta}^k,\theta\}$ as before and the nilpotent normal form $N$ is
\begin{align*}
  E &= \sum_{i=0}^{\frac{p}2-2} a_i \eta^{i+1} \wedge \tilde{\eta}^i
  +b \, \theta \wedge \tilde{\eta}^{\frac{p}2-1}~,
\end{align*}
where the $a_i$ and $b$ are to be determined. By using 
\begin{align}
\theta_C(\eta^i\wedge \tilde{\eta}^j) &=\tilde{\eta}^i\wedge \eta^j
\\\intertext{and}
\theta_C(\theta\wedge \tilde{\eta}^i) &= \pm(-1)^{\frac{p}2} \,\theta\wedge\eta^i~,
\end{align}
we compute
 \begin{equation}
  F =-\theta_C N= \sum_{i=0}^{\frac{p}2-2} a_i \eta^i\wedge
  \tilde{\eta}^{i+1}+ \pm(-1)^{\frac{p}2} \, b \, \eta^{\frac{p}2-1}\wedge \theta~.
  \end{equation}
We need to impose that $E$ and $F$ form part of a Kostant-Segikuchi triple.

Consider the action of $E$ and $F$ on the basis of $\mathbb{R}^{\frac{p}2+1,\frac{p}2}$ if $\pm(-1)^{\frac{p}2}=1$ and $\mathbb{R}^{\frac{p}2,\frac{p}2+1}$ if $\pm(-1)^{\frac{p}2}=-1$:
\begin{equation}
\begin{array}{rllrll}
E: & 
\eta^i & \mapsto 
a_i \, \eta^{i+1} \quad
&
F: &
\eta^{i+1} & \mapsto
a_i \, \eta^i
\\
E: & 
\eta^{\frac{p}2-1} & \mapsto 
b \, \theta \quad
&
F: &
\theta & \mapsto
b \, \eta^{\frac{p}2-1}
\\
E: & 
\theta & \mapsto 
-\pm(-1)^{\frac{p}2}\,b\, \, \tilde{\eta}^{\frac{p}2-1} \quad
&
F :&
\tilde{\eta}^{\frac{p}2-1} & \mapsto 
-b\,\pm(-1)^{\frac{p}2}\, \theta \\
E: & 
\tilde{\eta}^{i+1} & \mapsto 
-a_i \, \tilde{\eta}^i \quad
&
F: &
\tilde{\eta}^i & \mapsto
- a_i\, \tilde{\eta}^{i+1}
\end{array}
\end{equation}
where $i=0,\cdots,\frac{p}2-2$, and we define for consistency $a_{-1}=0$. The $\mathfrak{sl}_2$ algebra requires
\begin{align}
H\eta^i & = (2i-p) \eta^i~, &  i&=0,\cdots ,\frac{p}2-1 \\
H \tilde{\eta}^i &= (p-2 i) \tilde{\eta}^i ~,&  i&=0,\cdots ,\frac{p}2-1 \\
H \theta &= 0 ~. && 
\end{align}
A straightforward calculation gives us the conditions
\begin{align}
a_i^2 - a_{i-1}^2 &= p - 2 i~, & i&=0,\cdots, \frac{p}2-2
\\
a\left(\frac{p}2-2\right)^2 &= b^2-2 ~.&&
\end{align}
The constraint determines all constants uniquely
\begin{align*}
  a_i^2 &= (p-i)(i+1) ~,\quad i=0,\cdots ,\frac{p}2-2\\
  b^2 &=\frac{p}2 \left( \frac{p}2 +1\right)
\end{align*}
and the (hyperbolic) element $H$ is
\begin{equation}
H = \sum_{i=0}^{\frac{p}2-1}\left(p- 2i\right)\tilde{\eta}^i \wedge \eta^i~.
\end{equation}
Equation \eqref{eq:einpctypee} is given by using the Kostant-Segikuchi correspondence and switching to an orthonormal frame
\begin{align}
e_i &= \frac{\sqrt{2}}{{2}} \left( \eta_i + \tilde{\eta}_i \right)~,\\
\hat{e}_i &= \frac{\sqrt{2}}{{2}} \left( \eta_i - \tilde{\eta}_i \right) ~.
\end{align}

\subsubsection{Type  $\Delta_p(0,0)$}
Recall that type $\Delta_p(0,0)$ with $p\in2\mathbb{N}+1$ is of
signature $(p+1,p+1)$ and we use the basis
$\{\eta^k,\tilde{\eta}^k,\theta^k,\tilde{\theta}^k\}$ of $\mathbb{R}^{p+1,p+1}$, $k=0,1,\cdots
\frac{p-1}2$. Previously, we had used the nilpotent normal form
\begin{equation}
  N = \sum_{i=0}^{\frac{p-1}2-1}\left(\eta^{i+1} \wedge
    \tilde{\eta}^i +\theta^{i+1}\wedge \tilde{\theta}^i\right) +
  \tilde{\eta}^{\frac{p-1}2} \wedge \tilde{\theta}^{\frac{p-1}2}~.
\end{equation} 
We boost $N$ and use the nilpotent element
\begin{equation}
  N = \sum_{i=0}^{\frac{p-1}2-1}\left( a_i \eta^{i+1} \wedge
    \tilde{\eta}^i + b_i \theta^{i+1}\wedge \tilde{\theta}^i\right) +
  c \, \tilde{\eta}^{\frac{p-1}2} \wedge \tilde{\theta}^{\frac{p-1}2}~,
\end{equation}
where the $a_i$, $b_i$ and $c$ are constants to be determined. Note that there is still a manifest SO(1,1) freedom. 
We calculate
\begin{equation}
  F =-\theta_C N = \sum_{i=0}^{\frac{p-1}2-1}\left(a_i \, \eta^{i}
    \wedge \tilde{\eta}^{i+1} +b_i\,\theta^{i }\wedge
    \tilde{\eta}^{i+1}\right) - c\, {\eta}^{\frac{p-1}2} \wedge
  {\theta}^{\frac{p-1}2}~.
\end{equation}
As before, we will impose that these two form the parabolic parts of a standard triple.

Let us write the action of $E$ and $F$ on the basis. It is
\begin{equation}
\begin{array}{rllrll}
E: & 
\eta^i & \mapsto 
a_i \, \eta^{i+1} \quad
&
F: &
\eta^{i+1} & \mapsto
a_i \, \eta^i
\\
E: & 
\eta^{\frac{p-1}2} & \mapsto 
-c \, \tilde{\theta}^{\frac{p-1}2} \quad
&
F: &
\tilde{\theta}^{\frac{p-1}2} & \mapsto
-c \, \eta^{\frac{p-1}2}
\\
E: & 
\tilde{\theta}^{i+1} & \mapsto 
-b_i \, \tilde{\theta}^{i} \quad
&
F :&
\tilde{\theta}^i & \mapsto 
-b_i\, \tilde{\theta}^{i+1} \\
E: & 
 \theta^i & \mapsto 
 b_i \, \theta^{i+1} \quad
&
F: &
\theta^{i+1} & \mapsto
b_i\, \theta^i \\
E: & 
\theta^{\frac{p-1}2} & \mapsto 
c \, \tilde{\eta}^{\frac{p-1}2} \quad
&
F: &
\tilde{\eta}^{\frac{p-1}2} & \mapsto
c\, \theta^{\frac{p-1}2}\\
E: & 
\tilde{\eta}^{i+1} & \mapsto 
-a_i \, \tilde{\eta}^i \quad
&
F: &
\tilde{\eta}^i & \mapsto
- a_i\, \tilde{\eta}^{i+1}
\end{array} ~,
\end{equation}
where $i=0,\cdots,\frac{p-1}2-1$. We may also put $a_{-1}=b_{-1}=0$ for consistency. 

As before, we impose the conditions for a highest-weight representation
\begin{align}
H \tilde{\theta}^i &= -(2i-p) \tilde{\theta}^i ~,&
H \tilde{\eta}^i &= -(2i-p) \tilde{\eta}^i ~,\\
H \eta^i &= -(p-2i)\eta^i ~,&
H \theta^i & = - (p- 2i) \theta^i ~.
\end{align}
The solution is unique up to signs and we find
\begin{align*}
  a_i^2 &= b_i^2= (p-i)(i+1) \, , \quad i = 0,\cdots,\frac{p-1}2-1\\
  c^2&= \left( \frac{p+1}2 \right)^2~.
\end{align*}
The (hyperbolic) element $H$ is
\begin{equation}
H = \sum_{i=0}^{\frac{p-1}2}(p-2i)( \tilde{\theta}^i \wedge \theta^i + \tilde{\eta}^i \wedge \eta^i )~.
\end{equation}
Using this, one can construct the Kostant-Segikuchi triple in $\mathfrak{p}^{\CC}$. Equation \eqref{eq:einpctypef} is given by switching to an orthonormal frame
\begin{align}
e^{(1)}_i &= \frac{\sqrt{2}}{{2}} \left( \eta^i + \tilde{\eta}^i \right) \\
\hat{e}^{(1)}_i &= \frac{\sqrt{2}}{{2}} \left( \eta^i - \tilde{\eta}^i \right) \\
e^{(2)}_i &= \frac{\sqrt{2}}{{2}} \left( \theta^i + \tilde{\theta}^i \right) \\
\hat{e}^{(2)}_i &= \frac{\sqrt{2}}{{2}} \left( \theta^i - \tilde{\theta}^i  \right)~.
\end{align}

\subsection{Normal forms in $\mathfrak{p}^{\CC}$}
\label{app:norminpc}

\subsubsection{Type  $\Delta_p^{\pm}(0)$}
Let us first write the indecomposable nilpotent element in  $\mathfrak{p}^{\CC}$ corresponding to type $\Delta_p^{\pm}(0)$, where $p$ is even. We use the
orthonormal basis $\{e_i,\hat{e}_i,\tilde{e}\}$, $i=0,\ldots, \frac{p}2-1$, of $\RR^{\frac{p}2+1,\frac{p}2}$ (respectively of $\RR^{\frac{p}2,\frac{p}2+1}$) 
where $\tilde{e}$ is spacelike (respectively timelike) if $\pm (-1)^{\frac{p}2}$ is $+1$ (respectively $-1$). Then, the following is a normal form for the class
\begin{equation}
\begin{aligned}
e &= \frac12 
\Big( \sum_{i=0}^{\frac{p}2-2} a_i \left(
  \hat{e}_i \wedge e_{i+1} + \hat{e}_{i+1}\wedge e_i \right) \\
&+
\sqrt{2}\, b \, \tilde{e} \wedge 
\left\{
\begin{array}{rl}
-\hat{e}_{\frac{p}2-1} ~, & \text{if }\pm (-1)^{\frac{p}2}=1 \\
e_{\frac{p}2-1} ~,& \text{if }\pm (-1)^{\frac{p}2}=-1
\end{array}
\right\}
\\
&+
i \sum_{i=0}^{\frac{p}2-1}(p-2i) e_i\wedge \hat{e}_{i}
\Big)~,
\end{aligned}
\label{eq:einpctypee}
\end{equation}
where
\begin{align*}
  a_i^2 &= (p-i)(i+1) ~,\quad i=0,\cdots, \frac{p}2-2\\
  b^2 &=\frac{p}2 \left( \frac{p}2 +1\right)~.
\end{align*}

\subsubsection{Type $\Delta_p(0,0)$}
We now write the element corresponding to the type $\Delta_p(0,0)$, where $p$ is odd. We use the 
orthonormal 
basis $\{ e_i^{(1)}, e_i^{(2)}, \hat{e}_i^{(1)}, \hat{e}^{(2)}_i \}$, $i=0,\ldots, \frac{p-1}2$, of $\mathbb{R}^{p+1,p+1}$, where the $e^{(j)}_i$ 
are spacelike and the $\hat{e}^{(j)}_i$ are timelike. The nilpotent element is
\begin{equation}
\label{eq:einpctypef}
\begin{aligned}
e &= \frac12 \Big(
\sum_{i=0}^{\frac{p-1}2-1}a_i\left( 
\hat{e}_{i+1}^{(1)}\wedge e_i^{(1)} 
+ \hat{e}_i^{(1)} \wedge e_{i+1}^{(1)}
+ \hat{e}_{i+1}^{(2)}\wedge e_i^{(2)} 
+ \hat{e}_i^{(2)} \wedge e_{i+1}^{(2)}
\right)\\
&+ c \left( e_{\frac{p-1}2}^{(2)} \wedge \hat{e}_{\frac{p-1}2}^{(1)}
+ \hat{e}^{(2)}_{\frac{p-1}2}\wedge e^{(1)}_{\frac{p-1}2} \right)\\
&+i \sum_{i=0}^{\frac{p-1}2}(p-2 i)\left(
e_i^{(1)} \wedge \hat{e}^{(1)}_i + e^{(2)}_i \wedge \hat{e}_i^{(2)} 
\right)
\Big)~,
\end{aligned}
\end{equation}
where
\begin{align*}
  a_i^2 &= (p-i)(i+1) \, , \quad i = 0 , \cdots, \frac{p-1}2-1\\
  c^2&= \left( \frac{p+1}2 \right)^2~.
\end{align*}

\subsection{Proof of theorem \ref{thm:susynil}}
\label{app:susynil}
In this section we prove theorem \ref{thm:susynil} on page \pageref{thm:susynil}. In order to facilitate our calculations, let us use the notation of
Clifford multiplication $v \epsilon$ of a vector $v$ in $Cl(8,0)$
acting on a spinor $\epsilon$ of the Clifford module, and similarly for a
higher-degree form. 

\begin{proof}[Proof of (a) and (b)]
Assume $\Delta_p(0,0)$ appears in the decomposition of $\Pz$ with $p\geq 3$. Let us use the orthonormal 
basis  $\{ e^{(1)}_i, e^{(2)}_i, \hat{e}^{(1)}_i, \hat{e}^{(2)}_i \}$, $i=0,\cdots, \frac{p-1}2$, of $\mathbb{R}^{p+1,p+1}$, where 
$p>3$ is odd and $\mathbb{R}^{p+1,p+1}$ is an orthogonal subspace of 
$\mathbb{R}^{8,n}$. That is, the basis $\{  e^{(1)}_i ,  e^{(2)}_i , \hat{e}^{(1)}_i ,  \hat{e}^{(2)}_i \}$, $i=0,\cdots ,\frac{p-1}2$, 
is a subbasis of some orthonormal basis $\{e_I,\hat{e}_r\}$, $I=1,\ldots ,8$ and $r=1,\dots, n$, of $\RR^{8,n}$. 
According to \eqref{eq:einpctypef}, the nilpotent element is of the form
\begin{equation}
P^{Ir} e_I \otimes \hat{e}_r = e + \cdots
\end{equation}
with
\begin{equation}
e =\frac{1}{2}\left( -  a_0 \, e^{(1)}_1 \otimes \hat{e}^{(1)}_0 + i\, p\,  e^{(1)}_0  \otimes \hat{e}^{(1)}_0 \right) + \cdots
\end{equation}
where in ``$\cdots$'' of both equations, the vector $\hat{e}_0$ does not
appear again. The algebraic supersymmetry equation \eqref{eq:algsusy} for the index $r$ corresponding
to the direction of $\hat{e}^{(1)}_0$ becomes
\begin{equation}\label{eq:bps1}
\left(-a_0 e^{(1)}_1 + i \, p\,  e^{(1)}_0 \right)\epsilon = 0~.
\end{equation}
In this equation, we are assuming the Clifford multiplication of the
vectors $ e^{(1)}_0$ and $ e^{(1)}_1$ in $\mathbb{R}^{p+1}\subset \RR^{8}$ in the Clifford module of
$Cl(8,0)$. Multiplying with $ e^{(1)}_0$ and using $| e^{(1)}_0|=1$ in
$\mathbb{R}^{8,0}$, \eqref{eq:bps1} becomes
\begin{equation}
-a_0 e^{(1)}_0 \wedge  e^{(1)}_1 \epsilon = i \, p \, \epsilon~,
\end{equation}
where again $ e^{(1)}_0 \wedge  e^{(1)}_1\,  \epsilon$ is the Clifford action of the
two-form on the complex spinor. Since $ e^{(1)}_0 \wedge  e^{(1)}_1$ squares to $-1$ in the Clifford algebra $Cl(8,0)$, whereas
\begin{equation}
a_0^2 = p \neq p^2~,
\end{equation}
the only solution is $\epsilon=0$ and there is thus no supersymmetry. On the other hand, the indecomposable complex element of type $\Delta_1(0,0)$ is
\begin{equation}
\label{eq:eisdelta1}
e=\frac12 \left( \pm \left( -
    e^{(2)}_0 \otimes \hat{e}^{(1)}_0  +  e^{(1)}_0 \otimes  \hat{e}^{(2)}_0 \right) + i
  \left(  e^{(1)}_0 \otimes \hat{e}^{(1)}_0 +   e^{(2)}_0 \otimes  \hat{e}^{(2)}_0 \right) \right)~,
\end{equation}
where the $\pm1$ sign is the sign of $a_0$. The algebraic supersymmetry
equation \eqref{eq:algsusy} becomes
\begin{equation}
\left(\pm  e^{(2)}_0 + i \,  e^{(1)}_0 \right)\epsilon =0 ~.
\end{equation}
Indeed, this equation is obtained for $r$ corresponding to either
the direction of $\hat{e}^{(1)}_0$ or $ \hat{e}^{(2)}_0$. This is a BPS-type projection that halves supersymmetry.
\end{proof}

\begin{proof}[Proof of (c)]
Assume $\Delta_p^{\pm}(0)$ appears in the decomposition with $p\geq 4$. 
For simplicity, let us take $\pm(-1)^{\frac{p}2}=+1$, while the
proof is completely analogous for the opposite sign. We use the
orthonormal basis  $\{ e_i , \hat{e}_i , \tilde{e} \}$, $i=0,\cdots, \frac{p}2-1$,
of $\mathbb{R}^{\frac{p}2+1,\frac{p}2}$, where $p$ is even and
$\mathbb{R}^{\frac{p}2+1,\frac{p}2}$ is an orthogonal subspace of
$\mathbb{R}^{8,n}$. That is, the basis  $\{  e_i , \hat{e}_i , \tilde{e}  \}$, $i=0,\cdots, \frac{p}2-1$, is a 
subbasis of some orthonormal basis $\{e_I,\hat{e}_r\}$, $I=1,\ldots ,8$ and $r=1,\dots, n$, of $\RR^{8,n}$. According to \eqref{eq:einpctypef}, the nilpotent element 
is of the form
\begin{equation}
\label{eq:p1}
P^{Ir}_z e_I \otimes \hat{e}_r = e + \cdots
\end{equation}
with
\begin{equation}
\label{eq:e1}
e =\frac{1}{2}\left( -  a_0 \, e_1 \otimes \hat{e}_0  + i\, p\,  e_i  \hat{e}_0 \otimes  \hat{e}_0 \right) + \cdots
\end{equation}
where in ``$\cdots$'' of both equations, the vector $\hat{e}_0$ does not
appear again. The proof here then proceeds similarly to the proof of (a). The algebraic supersymmetry equation \eqref{eq:algsusy} for the index $r$ corresponding
to the direction of $\hat{e}_0$ becomes
\begin{equation}\label{eq:bps2}
\left(-a_0  e_1 + i \, p\, e_0\right)\epsilon = 0~.
\end{equation}
where again $e_1$ and  $e_0$ square to $-1$, while $
a_0^2 = p \neq p^2~.
$ There is thus no supersymmetry allowed. 
\end{proof}
\begin{proof}[Proof of (d)]
Type $\Delta_0(0)^{\pm}$
corresponding to a spacelike or timelike $\mathbb{R} \subset
\mathbb{R}^{8,n}$ is such that $e = 0$. It imposes no supersymmetry
restriction itself from the algebraic supersymmetry equation \eqref{eq:algsusy}.
\end{proof}
\begin{proof}[Proof of (e)]
Take now $p=2$ and consider $\Delta_2^+(0)$. We assume as before a
basis $\{ e_0 , \hat{e}_0 , \tilde{e}\}$ of $\RR^{1,2}$. The nilpotent element $\Pz$
is again of the form
\begin{equation}
P^{Ir}_z e_I \otimes \hat{e}_r = e + \cdots
\end{equation}
with
\begin{equation}
\label{eq:eiscasee}
e = 
\left( \pm \, e_0 \otimes \tilde{e} + i e_0
  \otimes \hat{e}_0 \right)~.
\end{equation}
The sign here is that of $b$. 
If we choose the direction of $r$ corresponding to the timelike
$\tilde{e}$, we arrive at the equation
\begin{equation}
e_0 \epsilon = 0~,
\end{equation}
with solution $\epsilon=0$. If we consider $\Delta_2^{-}(0)$ instead
and use the orthonormal basis $\{e_0,\tilde{e},\hat{e}_0\}$ of $\RR^{2,1}$,
\eqref{eq:eiscasee} is replaced by
\begin{equation}
\label{eq:eissusy1}
e = 
\left( \mp \tilde{e} \otimes \hat{e}_0 + i e_0
  \otimes \hat{e}_0 \right)~.
\end{equation}
The sign in this equation is again that of $b$. This is a single
projection equation of the form
\begin{equation}
 i \, e_0 \wedge \tilde{e} \epsilon = \pm \epsilon~.
\end{equation}
Each appearance of $\Delta^{-}_2(0)$ implies a single projection
equation that halves supersymmetry.
\end{proof}

\bibliographystyle{JHEPmod}
\bibliography{ungauged}

\providecommand{\href}[2]{#2}\begingroup\raggedright\begin{thebibliography}{10}

\bibitem{gillard_spinorial_2005}
J.~Gillard, U.~Gran, and G.~Papadopoulos, {\it The spinorial geometry of
  supersymmetric backgrounds},  {\em Class. Quantum Grav.} {\bf 22} (2005)
  1033--1076. arXiv: hep-th/0410155.

\bibitem{grana_generalized_2005}
M.~Grana, R.~Minasian, M.~Petrini, and A.~Tomasiello, {\it Generalized
  structures of {N}=1 vacua},  {\em J. High Energy Phys.} {\bf 2005} (2005)
  020--020. arXiv: hep-th/0505212.

\bibitem{tod_more_1995}
K.~P. Tod, {\it More on supercovariantly constant spinors},  {\em Class.
  Quantum Grav.} {\bf 12} (1995) 1801--1820.

\bibitem{tod_all_1983}
K.~P. Tod, {\it All {Metrics} {Admitting} {Supercovariantly} {Constant}
  {Spinors}},  {\em Phys. Lett. B} {\bf 121} (1983) 241--244.

\bibitem{gauntlett_all_2003}
J.~P. Gauntlett, J.~B. Gutowski, C.~M. Hull, S.~Pakis, and H.~S. Reall, {\it
  All supersymmetric solutions of minimal supergravity in five dimensions},
  {\em Class. Quantum Grav.} {\bf 20} (2003) 4587--4634. arXiv: hep-th/0209114.

\bibitem{deger_supersymmetric_2010}
N.~S. Deger, H.~Samtleben, and O.~Sarioglu, {\it On the supersymmetric
  solutions of half-maximal supergravities},  {\em Nucl. Phys. B} {\bf 840}
  (2010) 29--53. arXiv:1003.3119.

\bibitem{de_boer_classifying_2014}
J.~de~Boer, D.~R. Mayerson, and M.~Shigemori, {\it Classifying {Supersymmetric}
  {Solutions} in 3d {Maximal} {Supergravity}},  {\em Class. Quantum Grav.} {\bf
  31} (2014) 235004. arXiv: 1403.4600.

\bibitem{marcus_three-dimensional_1983}
N.~Marcus and J.~H. Schwarz, {\it Three-{Dimensional} {Supergravity}
  {Theories}},  {\em Nucl. Phys. B} {\bf 228} (1983) 145.

\bibitem{nicolai_n8_2001}
H.~Nicolai and H.~Samtleben, {\it N=8 matter coupled {AdS}\_3 supergravities},
  {\em Phys. Lett. B} {\bf 514} (2001) 165--172. arXiv: hep-th/0106153.

\bibitem{de_wit_gauged_2003}
B.~de~Wit, I.~Herger, and H.~Samtleben, {\it Gauged locally supersymmetric
  {D}=3 nonlinear sigma models},  {\em Nucl. Phys. B} {\bf 671} (2003)
  175--216. arXiv:hep-th/0307006.

\bibitem{de_wit_locally_1993}
B.~de~Wit, A.~K. Tollsten, and H.~Nicolai, {\it Locally supersymmetric {D}=3
  non-linear sigma models},  {\em Nucl. Phys. B} {\bf 392} (1993) 3--38.
  arXiv:hep-th/9208074.

\bibitem{kostant_orbits_1971}
B.~Kostant and S.~Rallis, {\it Orbits and {Representations} {Associated} with
  {Symmetric} {Spaces}},  {\em American Journal of Mathematics} {\bf 93} (1971)
  753--809.

\bibitem{burgoyne_conjugacy_1977}
N.~Burgoyne and R.~Cushman, {\it Conjugacy classes in linear groups},  {\em
  Journal of Algebra} {\bf 44} (1977) 339--362.

\bibitem{djokovic_normal_1983}
D.~Z. Djokovic, J.~Patera, P.~Winternitz, and H.~Zassenhaus, {\it Normal forms
  of elements of classical real and complex {Lie} and {Jordan} algebras},  {\em
  J. Math. Phys.} {\bf 24} (1983) 1363--1374.

\bibitem{matrix_analysis}
R.~A. Horn and C.~A. Johnson, {\em Matrix Analysis}.
\newblock Cambridge University Press, 1990.

\bibitem{dunne_self-dual_1995}
G.~Dunne, {\em Self-{Dual} {Chern}-{Simons} {Theories}}.
\newblock Springer Science \& Business Media, 1995.

\end{thebibliography}\endgroup
\end{document}